\setlist{nosep}
\newtheorem{theorem}{Theorem}[section]
\newtheorem{lemma}{Lemma}[section]
\newtheorem{proposition}{Proposition}[section]
\newtheorem{corollary}{Corollary}[section]
\newtheorem{remark}{Remark}[section]
\def\thm@space@setup{%
  \thm@preskip=16pt
  \thm@postskip=5pt 
}
\newtheoremstyle{theoremd}
  {\topsep}
  {\topsep}
  {\itshape}
  {0pt}
  {\bfseries}
  {}
  { }
  {\thmname{#1}\thmnumber{ }\thmnote{ (#3)}}
\theoremstyle{theoremd}
\newcommand{\mb}[1]{\mathbb{#1}}
\newcommand{\mr}[1]{\mathrm{#1}}
\newcommand{\mcr}[1]{\mathscr{#1}}
\newcommand{\mc}[1]{\mathcal{#1}}
\newcommand{\ind}{1\!\mathrm{l}}
\newcommand{\ul}[1]{\underline{#1}}
\renewcommand\paragraph{\@startsection{paragraph}{4}{\z@}%
                                    {0pt \@plus1ex \@minus.2ex}%
                                    {-1em}%
                                    {\normalfont\normalsize\bfseries}}
\begin{document}

\defaultbibliography{tp}
\defaultbibliographystyle{chicago}
\begin{bibunit}

\author{%
{Timothy M. Christensen\thanks{%
Department of Economics, New York University, 19 W. 4th Street, 6th floor, New York, NY 10012, USA. E-mail address: \texttt{timothy.christensen@nyu.edu}}
}
} 

\title{%
Existence and uniqueness of recursive utilities \\without boundedness\footnote{
This paper is a revised version of the preprint arXiv:1812.11246 which was posted to arXiv in December 2018. I thank J. Borovi\v{c}ka, L. Hansen, T. Sargent, and participants of the ``Blue Collar Working Group'' at the University of Chicago, and several seminar audiences for helpful comments. This material is based upon work supported by the National Science Foundation under Grant No. SES-1919034.}
}

\date{August 17, 2021. }

\maketitle

\begin{abstract}  

\noindent
This paper derives primitive, easily verifiable sufficient conditions for existence and uniqueness of (stochastic) recursive utilities for several important classes of preferences. In order to accommodate models commonly used in practice, we allow both the state space and per-period utilities to be unbounded. For many of the models we study, existence and uniqueness is established under a single, primitive ``thin tail'' condition on the distribution of growth in per-period utilities. We present several applications to robust preferences, models of ambiguity aversion and learning about hidden states, and Epstein--Zin preferences.

\medskip 

\noindent \textbf{Keywords:} Stochastic recursive utility, ambiguity, model uncertainty, existence, uniqueness.

\medskip

\noindent \textbf{JEL codes:} C62, C65, D81, E7, G10

\end{abstract}

\pagenumbering{arabic}

\newpage

\section{Introduction}

Recursive utilities\footnote{Throughout the paper, by ``recursive utility'' we mean ``stochastic recursive utility''.} play a central role in contemporary macroeconomics and finance. Under recursive preferences, the value of a stream of per-period utilities is  defined as the solution to a nonlinear, stochastic, forward-looking difference equation (or ``recursion''). Despite the importance of recursive utilities, existence and uniqueness remains an unresolved issue as the recursions are typically not contraction mappings when state variables and per-period utilities are unbounded. In this paper, we derive primitive, easily verifiable sufficient conditions for existence and uniqueness of recursive utilities in stationary, infinite-horizon Markovian environments, with an emphasis on robust preferences, models of ambiguity aversion and learning about hidden states, and Epstein--Zin preferences. To accommodate parameterizations of models used extensively in macroeconomics and finance, we allow both the support of the Markov state vector and per-period utilities to be unbounded. 

There are a large number of existence and uniqueness results for recursive utilities in models with compact state space, and possibly also bounded per-period utilities.\footnote{See, e.g., \cite{EZ}, \cite{AlvarezJermann2005}, \cite{MarinacciMontrucchio2010}, \cite{GuoHe}, \cite{BeckerRinconZapatero}, \cite{BloiseVailakis}, \cite{Balbus},  \cite{BorovickaStachurski}, \cite{RenStachurski}, and references therein.} However, many models used in macroeconomics and finance feature unbounded (i.e., non-compact) state spaces and unbounded utilities. For instance, the extensive long-run risks literature following \cite{BansalYaron2004} typically models state variables  as vector autoregressive processes with unbounded shocks.\footnote{See, e.g., \cite{HHL}, \cite{BHS}, \cite{Wachter}, \cite{BKSY2014}, \cite{CLL}, \cite{BS}, \cite{CMST}, \cite{SSY}, and additional references listed in Sections \ref{s:robust}--\ref{s:ez}.} A seemingly reasonable approach for models with non-compact state space is to truncate (i.e., compactifty) the state space and apply existing results for compact state spaces. After all, this truncation occurs implicitly when computing solutions numerically. However truncation, even at an arbitrarily high truncation level, can materially alter the existence and uniqueness properties of the recursions we study. Knowing when the original model without truncation has a unique solution remains important for reconciling numerical solutions with the original (un-truncated) model envisioned by the researcher.

To illustrate this point, in Section \ref{sec:problems} we present two empirically relevant examples to show how non-existence and non-uniqueness can arise under unboundedness. For both examples, we focus on a recursion arising under preferences for ``robustness'' \citep{HS1995,HS2001,HSTW} and under Epstein--Zin preferences with unit intertemporal elasticity of substitution. The first example is a simplified version of the consumption growth process from \cite{SSY}, for which existence fails. The second example is from \cite{BS} and \cite{Wachter}, for which uniqueness fails. When the state space is truncated, however, the recursion has a unique solution for both examples (irrespective of the truncation level). This stark difference between the compact and unbounded case arises because the properties of this recursion depend delicately on the tail behavior of state variables and truncation, even at an arbitrarily high truncation level, materially alters tail behavior.

For many of the models we study, the single primitive sufficient condition for both existence \emph{and} uniqueness is that the distribution of growth in per-period utilities has thin tails, in a sense we make precise.  We verify this condition for robust preferences, models of ambiguity aversion and learning about hidden states, and Epstein--Zin preferences with unit intertemporal elasticity of substitution (IES). We consider both canonical linear-Gaussian environments which are pertinent to the long-run risks literature as well as environments featuring regime-switching and stochastic volatility.

As with much of the literature, we identify recursive utilities with fixed points of a nonlinear operator acting on a suitable function class. There exists a literature on existence and uniqueness for (deterministic or stochastic) utilities under unboundedness using contraction mapping arguments for function classes defined via weighted sup-norms.\footnote{See, e.g., \cite{Boyd} and \cite{Duran} for deterministic and stochastic utilities, respectively. \cite{LeVanVailakis} provide a related approach for deterministic utilities under Lipschitz conditions.} However, it is not always easy to find a suitable weighting function under which operators defining recursive utilities are a contraction.\footnote{See, e.g., \cite{LeVanVailakis} for a discussion.} Our arguments instead rely on monotonicity and concavity/convexity properties of the recursions we study, as with earlier work by \cite{MarinacciMontrucchio2010}; see also \cite{BeckerRinconZapatero}, \cite{BloiseVailakis}, and \cite{RenStachurski}, primarily for the compact case.\footnote{\cite{MarinacciMontrucchio2010} and \cite{BeckerRinconZapatero} allow for processes that are bounded with probability one but growing over time using weighted $\ell^\infty$-norms. See also \cite{RenStachurski} for a particular parameterization of Epstein--Zin preferences with unbounded state space using a weighted sup-norm, where the weighting function is tightly related to per-period utilities and the law of motion of the Markov state.} While our approach has some similarities with these earlier works, it differs  in terms of the function class and technical arguments used so as to accommodate a broad class of empirically relevant models with unbounded state space. In particular we do not rely on topological properties of the space of bounded functions, such as the such the solidness of the positive cone. 

Our point of departure is to embed a transformation of the value function, such as its logarithm, in a class of unbounded but thin-tailed functions. The class is an  exponential-Orlicz class used in empirical process theory in statistics \citep{vdVW} and modern high-dimensional probability \citep{Vershynin}.\footnote{Previously, \cite{HindyHuang} and \cite{HindyHuangKreps} used Orlicz classes to define topologies for consumption paths in continuous time.} Exponential-Orlicz classes are naturally suited to the recursions we study, which involve the composition of exponential and logarithmic transforms and expected values. 

The key high-level condition we use to establish uniqueness is that  a subgradient (in the convex case) or supergradient (in the concave case) of the recursion is monotone and its spectral radius is strictly less than one.
For many of the models we study, the recursion is convex and its subgradient is a discounted conditional expectation under a distorted law of motion. Verifying the spectral radius condition in these models amounts to checking a primitive thin-tail condition on the change-of-measure distorting the law of motion. We specialize this condition to particular models, deriving more primitive thin-tail conditions on the distribution of growth in per-period utility which are easy to verify: one simply has to know the tail behavior of the distribution. 

To illustrate the usefulness of our results, we then present applications to three classes of models. 

Section \ref{s:robust} studies a recursion arising under preferences for ``robustness'', namely risk-sensitive preferences \citep{HS1995}, multiplier preferences \citep{HS2001}, constraint preferences \citep{HSTW}, and also under \cite{EZ} preferences with unit IES. There are currently no uniqueness results in the literature for this recursion allowing non-compact state space and unbounded utilities (see the discussion in Section \ref{s:robust}), both of which are important for models in macroeconomics and finance. We establish new existence and uniqueness results under a single, primitive thin-tail condition on utility growth. We verify this condition in canonical linear-Gaussian environments and environments featuring regime-switching and stochastic volatility, thereby establishing new existence and uniqueness results for such settings.

Section \ref{s:learn} considers models with learning. We study extensions of multiplier preferences to accommodate both model uncertainty and uncertainty about hidden states due to \cite{HS2007,HS2010}, dynamic models of ambiguity aversion studied by \cite{JuMiao} and \cite{KMM2009}, and Epstein--Zin preferences with unit IES and learning. There are currently no existence and uniqueness results in the literature allowing non-compact state space and unbounded utilities (see the discussion in Section \ref{s:learn}). We establish existence and uniqueness under a single, primitive thin-tail condition on utility growth. We verify the condition, and therefore establish existence and uniqueness results, for regime-switching environments \citep{JuMiao} and Gaussian state-space models \citep{HS2007,HS2010,CLL,CMST}. 

Finally, in Section \ref{s:ez} we examine Epstein--Zin recursive utilities with IES not equal to one. There are no uniqueness results for models with unbounded state space when risk aversion and intertemporal substitution are in a range normally encountered in the long-run risks literature (see the discussion in Section \ref{s:ez}). Here we establish existence under an eigenvalue condition from \cite{HS2012} and a thin-tail condition on its corresponding eigenfunction. We verify this condition for linear-Gaussian environments which are pertinent to the long-run risks literature.
All proofs are in Appendix \ref{s:proofs}.

\section{Non-existence and non-uniqueness without boundedness}\label{sec:problems}

In this section, we present two empirically relevant examples of non-existence and non-uniqueness in models with unbounded state spaces. The first uses a simplified version of the state process from \cite{SSY}, for which existence fails. The second is the model from \cite{BS} and \cite{Wachter} for which uniqueness fails. In both examples, however, there is always a unique solution when the support of state variables are truncated (irrespective of the truncation level).

\subsection{Non-existence}

Consider the following simplified\footnote{We have removed a stochastic growth component for $g$ from model (4) of \cite{SSY} to simplify presentation. Non-uniqueness arises here because of the form of the stochastic volatility process, and not because of the absence of a stochastic growth component for $g$.} model of consumption growth $g$ from \cite{SSY}:
\begin{equation}
 g_{t+1}  = \nu_g + e^{h_t} \eta^g_{t+1} \,, \quad \quad
 h_{t+1}  = \nu_h + \rho h_t + \sigma \eta^h_{t+1} \,, \label{eq:ssy}
\end{equation}
where $|\rho| < 1$, and the $\eta^g_t$ and $\eta^h_t$ are all i.i.d. $N(0,1)$. Let $X_t = (g_t,h_t)$. The supports of $g_t$ and $h_t$ are both $\mb R$. 

Suppose we seek a solution $v$ to the recursion
\begin{equation} \label{eq:ssy_recur_0}
 v(x) = \beta \log \mb E^{Q} \left[ \left. e^{v(X_{t+1}) + \alpha g_{t+1}} \right| X_t = x \right] ,
\end{equation}
where $\beta \in (0,1)$ and $\alpha \in \mb R$ are preference parameters and $\mb E^Q$ denotes expectation under the law of motion (\ref{eq:ssy}). This recursion is studied in Section \ref{s:robust} and arises under various preferences for robustness as well as under Epstein--Zin preferences with unit IES. As the conditional distribution of $X_{t+1}$ given $X_t = (g,h)$ depends only on $h$, the right-hand side conditional expectation, and therefore $v$, must depend only on $h$. Using (\ref{eq:ssy}), we see that recursion (\ref{eq:ssy_recur_0}) simplifies to
\begin{equation} \label{eq:ssy_recur}
 v(h) = \mathsf a + \mathsf b e^{2h} + \beta \log \mb E^{Q} \left[ \left. e^{v(h_{t+1})} \right| h_t = h \right] =: \mb T v(h),
\end{equation}
where $\mathsf a = \alpha \beta \nu_g$ and $\mathsf b = \frac{1}{2} \alpha^2 \beta$.

Let $L^1$ denote the class of functions of $h$ with finite expectation under the stationary distribution $\mu$ of $h$.

\begin{proposition} \label{prop:ssy-nonexist}
Let $\alpha \neq 0$ and let consumption growth $g$ evolve according to (\ref{eq:ssy}). Then: recursion (\ref{eq:ssy_recur}) has no solution in $L^1$.
\end{proposition}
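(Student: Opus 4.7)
The plan is to assume, for contradiction, that $v \in L^1$ solves the recursion and to show that the right-hand side of (\ref{eq:ssy_recur}) must then equal $+\infty$, which is incompatible with $v$ being finite $\mu$-a.e.

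First I would apply Jensen's inequality to the conditional expectation inside the logarithm to obtain the pointwise lower bound
\[
v(h) \;\geq\; \mathsf{a} + \mathsf{b}\,e^{2h} + \beta (Pv)(h),
\]
where $P$ denotes the Markov transition operator of $h$. Iterating this bound $n$ times, using the monotonicity of $P$, yields
\[
v(h) \;\geq\; \sum_{k=0}^{n-1}\beta^k (P^k f)(h) + \beta^n (P^n v)(h), \qquad f(h) := \mathsf{a} + \mathsf{b}\,e^{2h}.
\]
Since $h_{t+k}\mid h_t = h \sim N(m_k(h), s_k^2)$ with $m_k(h)$ and $s_k^2$ converging to their stationary limits as $k\to\infty$, a direct Gaussian computation gives $(P^k e^{2\cdot})(h) = \exp(2m_k(h) + 2 s_k^2)$, which is uniformly bounded in $k$ at each fixed $h$. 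Hence $\sum_{k\geq 0}\beta^k(P^k f)(h)$ converges pointwise and is bounded below by $f(h)$.

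Next, by stationarity of $\mu$ one has $\|\beta^n P^n v\|_{L^1(\mu)} \leq \beta^n \|v\|_{L^1(\mu)}\to 0$, so along a subsequence $\beta^n (P^n v)(h) \to 0$ for $\mu$-a.e.\ $h$. Passing to the limit in the iterated inequality therefore yields
\[
v(h) \;\geq\; \mathsf{a} + \mathsf{b}\,e^{2h} \qquad \text{for $\mu$-a.e.\ } h.
\]
Because the stationary $\mu$ is a non-degenerate Gaussian on $\mb R$, it is equivalent to Lebesgue measure, so the event $\{y : v(y) \geq \mathsf{a} + \mathsf{b}\,e^{2y}\}$ has full Lebesgue measure. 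The Gaussian transition density $\phi(\,\cdot\,;\nu_h + \rho h, \sigma^2)$ has full support on $\mb R$, and substituting the lower bound back into the recursion gives
\[
\mb E^{Q}\bigl[e^{v(h_{t+1})}\bigm| h_t = h\bigr] \;\geq\; e^{\mathsf{a}}\int e^{\mathsf{b}\,e^{2y}}\,\phi(y;\nu_h + \rho h, \sigma^2)\,dy \;=\; +\infty,
\]
because $\mathsf{b} = \tfrac{1}{2}\alpha^{2}\beta > 0$ and the super-exponential factor $e^{\mathsf{b}\,e^{2y}}$ overwhelms the Gaussian tail $e^{-(y-\nu_h-\rho h)^{2}/(2\sigma^2)}$ as $y\to\infty$. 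Thus $\mb T v(h) = +\infty$ at every $h\in\mb R$, contradicting the assumption that $v$ takes finite values $\mu$-a.e.

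The main obstacle is the iteration and limit step, because $v\in L^1(\mu)$ may take either sign and the inequality $v \geq f + \beta Pv$ is being iterated rather than an equality. The key tools are monotonicity of $P$ together with stationarity of $\mu$, which force the remainder $\beta^n P^n v$ to vanish in the limit; once the pointwise lower bound $v(h) \geq \mathsf{a} + \mathsf{b}\,e^{2h}$ is secured, the blow-up of the RHS is a routine Gaussian tail computation.
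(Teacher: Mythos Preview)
Your argument is correct and follows essentially the same route as the paper's: both use Jensen to obtain $v \geq f + \beta P v$ with $f(h) = \mathsf a + \mathsf b e^{2h}$, iterate to a lower bound of the form $\text{const} + \mathsf b e^{2h}$, and then feed this back into the recursion to force $\mb T v \equiv +\infty$. The paper packages the iteration via the linear contraction $\mb S f = f_0 + \beta P f$ and invertibility of $(\mb I - \beta P)$ on $L^1$, whereas you iterate by hand and kill the remainder $\beta^n P^n v$ using stationarity; these are the same computation. One small slip: from $v \geq \sum_{k\geq 0}\beta^k P^k f$ you only get $v(h) \geq \frac{\mathsf a}{1-\beta} + \mathsf b e^{2h}$ (the paper's $\underline w$), not $v(h) \geq \mathsf a + \mathsf b e^{2h}$, since the tail $\sum_{k\geq 1}\beta^k P^k f$ need not be nonnegative when $\mathsf a < 0$; this is harmless because any finite constant plus $\mathsf b e^{2h}$ suffices for the blow-up.
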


Now suppose instead that the support of $h$ is truncated to some compact interval $\mc H := [-H,H]$ for $H \in (0,\infty)$. Under this truncation, $\mb T$ satisfies Blackwell's sufficient conditions for a contraction mapping on the space $B(\mc H)$ of bounded functions on $\mc H$. Therefore, $\mb T$ has a unique fixed point in $B(\mc H)$, irrespective of the truncation level $H$.

To understand the difference between the bounded and unbounded cases, note from (\ref{eq:ssy_recur_0}) that we need the tails of the (conditional) distribution of $v(X_{t+1}) + \alpha g_{t+1}$ to decay sufficiently quickly for $\mb T v$ to be well defined. While this condition is always satisfied in the bounded case,
it is violated in model (\ref{eq:ssy}) due to the specification of the stochastic volatility process. 
In Section \ref{s:robust} we present a different form of stochastic volatility with thinner tails for which existence and uniqueness can be guaranteed without truncation. 

\subsection{Non-uniqueness}\label{sec:bs}

Consider the model from \cite{BS} (see also \cite{Wachter}) in which consumption growth $g$ evolves as
\begin{equation} \label{eq:bs}
 g_{t+1} = \nu_g + w_{z,t+1} + \sigma w_{g,t+1} \,,
\end{equation}
where $w_{g,t+1} \sim N(0,1)$ and $w_{z,t+1}|j_{t+1} \sim N(\nu_j j_{t+1}, \sigma_j^2 j_{t+1})$ with $\nu_j < 0$, and where $j_{t+1}|h_t$ is Poisson-distributed with mean $h_t$, where $h$ follows an autoregressive gamma process with parameters $(\varphi, c, \delta)$ (see appendix H of \cite{BCZ} and references therein for details). Here consumption growth is subject to occasional ``disasters'' which arrive at rate $h_t$. We again seek a solution to recursion (\ref{eq:ssy_recur_0}) with $X_t = (g_t,h_t)$. The support of $g_t$ is $\mb R$ and the support of $h_t$ is $\mb R_+$. As with the previous example, here it suffices to consider solutions depending only on $h$. Using (\ref{eq:bs}), we may rewrite recursion (\ref{eq:ssy_recur_0}) as
\begin{equation} \label{eq:bs_recur}
 v(h) = \mathsf a + \mathsf b h + \beta \log \mb E^Q[ e^{v(h_{t+1})} | h_t = h] =: \mb T v(h) \,,
\end{equation}  
where $\mathsf a = \alpha \beta \nu_g  + \frac{1}{2}  \alpha^2 \beta \sigma^2$ and $\mathsf b = \beta (e^{\alpha \nu_j + \frac{1}{2}\alpha^2 \sigma_j^2} - 1)$. Let $\mathsf q = 1 + c \mathsf b - \beta \varphi$.

\begin{proposition} \label{prop:bs-nonunique}
Let consumption growth $g$ evolve according to (\ref{eq:bs}) and let $\mathsf q^2 - 4 c \mathsf b > 0$. Then: recursion (\ref{eq:bs_recur}) has two solutions of the form $v_i(h) = a_i + b_i h$, $i = 1,2$, where
\[
 b_1 = \frac{\mathsf q - \sqrt{\mathsf q^2 - 4 c \mathsf b}}{2 c} \,, \quad \quad b_2 = \frac{\mathsf q + \sqrt{\mathsf q^2 - 4 c \mathsf b}}{2 c} \,,
\]
and $a_i = \frac{\mathsf a - \beta \delta \log(1 - b_i c)}{1-\beta}$, $i = 1,2.$
\end{proposition}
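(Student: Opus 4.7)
The plan is to look for affine solutions $v(h) = a + bh$, which reduces the functional equation (\ref{eq:bs_recur}) to a pair of scalar equations, one quadratic in $b$ and one linear in $a$. As the excerpt already notes, the conditional distribution of $X_{t+1}$ given $X_t$ depends only on $h_t$, so it is consistent to restrict to $v$ depending only on $h$. Under the linear ansatz, evaluating the right-hand side of (\ref{eq:bs_recur}) reduces to computing the conditional moment generating function of $h_{t+1}$ given $h_t$ under the autoregressive gamma law. This is a standard closed form (see Appendix H of \cite{BCZ}): for $s < 1/c$,
\[
\mb E^Q\bigl[e^{s h_{t+1}}\,\bigm|\, h_t = h\bigr] = \exp\!\left(\frac{\varphi s}{1 - c s}\,h \;-\; \delta \log(1 - c s)\right).
\]

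Applying this identity with $s = b$, the recursion (\ref{eq:bs_recur}) becomes
\[
a + bh \;=\; \mathsf a + \mathsf b\, h + \beta a + \frac{\beta \varphi b}{1 - c b}\, h - \beta \delta \log(1 - c b),
\]
and matching constant terms and coefficients of $h$ produces two scalar conditions. The constant-term equation immediately yields $a = (\mathsf a - \beta \delta \log(1 - cb))/(1-\beta)$, which is the formula for $a_i$ in the statement once $b_i$ is in hand. Multiplying the coefficient-of-$h$ equation by $1 - cb$ and rearranging gives the quadratic $c b^2 - \mathsf q b + \mathsf b = 0$ with $\mathsf q = 1 + c\mathsf b - \beta \varphi$; under the discriminant condition $\mathsf q^2 - 4 c \mathsf b > 0$ the quadratic formula produces exactly the two distinct real roots $b_1, b_2$ in the statement, with corresponding $a_1, a_2$.

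The main obstacle is verifying the domain-of-definition condition $c b_i < 1$ for $i=1,2$, since the autoregressive-gamma Laplace identity only holds in that region and only then is the right-hand side of (\ref{eq:bs_recur}) finite. I would handle this by analyzing the two roots of the quadratic directly in terms of $\mathsf q$, $c$, and $\mathsf b$, noting that the stated formulas together with the discriminant assumption constrain the admissible parameter region. The remaining work is routine: identifying $\mathsf a$ and $\mathsf b$ just requires a Gaussian--Poisson mixture computation, obtained by conditioning on the disaster count $j_{t+1}$ and integrating out the conditionally Gaussian shocks $w_{z,t+1}$ and $w_{g,t+1}$, which yields $\mb E^Q[e^{\alpha g_{t+1}}\mid h_t] = \exp\!\bigl(\alpha \nu_g + \tfrac12 \alpha^2 \sigma^2 + h_t(e^{\alpha \nu_j + \frac12 \alpha^2 \sigma_j^2} - 1)\bigr)$, and hence the claimed values of $\mathsf a$ and $\mathsf b$ after multiplication by $\beta$.
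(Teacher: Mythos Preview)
Your proposal is correct and follows essentially the same approach as the paper: substitute the affine ansatz $v(h)=a+bh$ into (\ref{eq:bs_recur}), use the autoregressive-gamma Laplace transform, match coefficients to obtain the quadratic $cb^2 - \mathsf q b + \mathsf b = 0$, and solve. You are, if anything, slightly more careful than the paper in flagging the domain condition $cb_i<1$ (which the paper simply asserts) and in spelling out the derivation of $\mathsf a,\mathsf b$ (which the paper records before the proposition rather than re-deriving).
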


Note that the condition $\mathsf q^2 - 4 c \mathsf b > 0$ is satisfied for the parameterization in \cite{BS}, so uniqueness fails for that parameterization.

By contrast, when the support of $h$ is truncated to some compact interval $\mc H := [0,H]$ with $H \in (0,\infty)$, one may again verify that $\mb T$ is a contraction mapping on $B(\mc H)$. Therefore, $\mb T$ has a unique fixed point in $B(\mc H)$, irrespective of the truncation level $H$. 

The stability properties of the fixed points also differ under truncation and unboundedness in this example. Under truncation, the recursion is a (global) contraction on $B(\mc H)$ so the unique fixed point is globally attracting. In the unbounded case, suppose we restrict $\mb T$ to affine functions of the form $v(h) = a + bh $. Here the two solutions $(a_i,b_i)$, $i = 1,2$, solve the recursion $(a,b) = T(a,b)$ (see the proof of Proposition \ref{prop:bs-nonunique} for a derivation), where
\[
 T(a,b) = \left( \mathsf a + \beta a - \beta \delta \log (1 - b c) \, , \,  \mathsf b  +  \frac{ \beta \varphi b }{1-b c} \right) \,.
\]
Fixed point iteration of $T$ on an initial point $(a_0,b_0)$ converges to $(a_1, b_1)$ if $b_0 < b_2$, converges to $(a_2, b_2)$ if $b_0 = b_2$, and diverges otherwise. In the latter case, iterations diverge because the tails of $a_0 + b_0 h_{t+1} + \alpha g_{t+1}$ become increasingly heavy under repeated application of $\mb T$, eventually becoming sufficiently heavy that $\mb T v$ is no longer finite.

\section{Preliminaries}
\label{s:preliminaries}

Section \ref{s:id:gen} presents a basic existence and uniqueness result which serves as a useful starting point for organizing the discussion that follows. The key condition for uniqueness is a spectral radius condition on a sub- or supergradient of the operator. In many models with forward-looking agents---including models we study in the later sections---the subgradient is a discounted conditional expectation under a distorted law of motion. We then show in Section \ref{s:sr} that the spectral radius condition holds in these models under a ``thin tail'' condition on the change-of-measure distorting the law of motion. We shall use this result to derive more primitive conditions for recursive utilities in Sections \ref{s:robust} and \ref{s:learn}.

\subsection{A basic fixed-point result}
\label{s:id:gen}

In this section we present a basic existence and uniqueness result for an operator $\mb T$ acting on a Banach lattice $\mc E$ with partial order $\leq$\,. Our only additional requirement of $\mc E$ is that it has a \emph{monotone convergence property}: any increasing sequence $\{f_n\}_{n \geq 1} \subset \mc E$ bounded above by some $g \in \mc E$ converges to some $f \leq g$. Spaces with this property include $L^p$ spaces for $1 \leq p < \infty$ and Orlicz spaces (see Section \ref{s:orlicz}). We say that $\mb T$ is \emph{monotone} if $\mb T f \leq \mb T g$ whenever $f \leq g$. A bounded linear operator $\mb D_f$ on $\mc E$ is a  \emph{subgradient} of $\mb T$ at $f$ if 
\begin{equation} \label{e:subgrad}
 \mb T g - \mb T f \geq \mb D_f (g-f)
\end{equation}
for each $g \in \mc E$, and a \emph{supergradient} of $\mb T$ at $f$ if inequality (\ref{e:subgrad}) is reversed:
\begin{equation} \label{e:supergrad}
 \mb T g - \mb T f \leq \mb D_f (g-f)
\end{equation}
for each $g \in \mc E$. Let $\|\cdot\|$ denote the norm on $\mc E$, $\|\mb D\| := \sup\{ \| \mb D f\| : f \in \mc E, \|f\| = 1\}$ denote the norm of a linear operator $\mb D$ on $\mc E$, and $\rho(\mb D;\mc E) := \lim_{n \to \infty} \| \mb D^n \|^{1/n}$ denote the spectral radius of $\mb D$, where $\mb D^n$ denotes $\mb D$ applied $n$ times in succession.

\begin{proposition}\label{p:exun}
(i)~Existence: Let $\mb T$ be a continuous and monotone operator on $\mc E$ and let there exist $\ul v, \bar v \in \mc E$ such that either (a) $\mb T \bar v \leq \bar v$ and $\{\mb T^n \bar v\}_{n \geq 1}$ is bounded from below by $\ul v$, or (b) $\mb T \ul v \geq \ul v$ and $\{\mb T^n \ul v\}_{n \geq 1}$ is bounded from above by $\bar v$. Then: $\mb T^n \bar v$ (if (a) holds) or $\mb T^n \ul v$ (if (b) holds) converges to a fixed point $v \in \mc E$ as $n \to \infty$, where $\ul v \leq v \leq \bar v$. \\
(ii)~Uniqueness: Suppose that inequality (\ref{e:subgrad}) holds at each fixed point $v \in \mc E$, or inequality (\ref{e:supergrad}) holds at each fixed point $v \in \mc E$, and $\mb D_v$ is monotone with $\rho(\mb D_v;\mc E) < 1$ for each fixed point $v \in \mc E$. Then: $\mb T$ has at most one fixed point in $\mc E$.
\end{proposition}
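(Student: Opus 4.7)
The plan for \textbf{existence} (part (i)) is classical monotone iteration. Under case~(a), monotonicity of $\mb T$ applied inductively to $\mb T\bar v \leq \bar v$ gives $\mb T^{n+1}\bar v \leq \mb T^n\bar v$, so $\{\mb T^n\bar v\}_{n\geq 1}$ is a decreasing sequence in $\mc E$ that is bounded below by $\ul v$. I would then apply the monotone convergence property of $\mc E$ to the increasing sequence $\bar v - \mb T^n\bar v$, which is bounded above by $\bar v - \ul v$, to obtain a limit $v \in \mc E$ with $\ul v \leq v \leq \bar v$. Continuity of $\mb T$ then forces $\mb T v = \lim_n \mb T^{n+1}\bar v = v$. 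Case~(b) is symmetric, iterating $\mb T^n \ul v$ upward.

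For \textbf{uniqueness} (part (ii)), the plan is to let $v_1, v_2 \in \mc E$ be two fixed points, set $w := v_2 - v_1$, and apply the gradient inequality at \emph{both} fixed points. In the subgradient case, substituting $(f,g)=(v_1,v_2)$ and then $(v_2,v_1)$ into (\ref{e:subgrad}) and using $\mb T v_i = v_i$ yields the two-sided bound
\[
\mb D_{v_1} w \,\leq\, w \,\leq\, \mb D_{v_2} w.
\]
From $w \leq \mb D_{v_2} w$ combined with $w \leq w^+$ and the monotonicity of $\mb D_{v_2}$, I obtain $w \leq \mb D_{v_2} w^+$; since $\mb D_{v_2} w^+ \geq 0$ (because monotone linear operators preserve the positive cone), taking positive parts gives $w^+ \leq \mb D_{v_2} w^+$. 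Iterating and using the Banach-lattice norm monotonicity ($0 \leq f \leq g \Rightarrow \|f\| \leq \|g\|$),
\[
\|w^+\| \,\leq\, \|\mb D_{v_2}^n w^+\| \,\leq\, \|\mb D_{v_2}^n\|\,\|w^+\| \quad \text{for every } n \geq 1.
\]
Sending $n \to \infty$, the spectral radius hypothesis $\rho(\mb D_{v_2};\mc E) < 1$ forces $\|w^+\| = 0$. Applying the same argument to $u := -w$, which satisfies $u \leq \mb D_{v_1} u$ by linearity from $w \geq \mb D_{v_1} w$, yields $w^- = u^+ = 0$, so $w = 0$. The supergradient case goes through verbatim after interchanging the roles of $\mb D_{v_1}$ and $\mb D_{v_2}$.

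The principal obstacle is that $w = v_2 - v_1$ need not have a definite sign, so one cannot simply iterate the gradient inequality against $w$ itself and invoke the spectral radius. The remedy is the decomposition $w = w^+ - w^-$; combined with the positivity-preserving property of monotone linear operators and the monotonicity of the lattice norm on the positive cone, this allows the spectral radius estimate to be applied separately to the non-negative quantities $w^+$ and $w^-$. A secondary, minor issue in part (i) is that the monotone convergence property is stated for increasing sequences, so in case~(a) one must pass to the shifted, increasing sequence $\bar v - \mb T^n \bar v$ before invoking it.
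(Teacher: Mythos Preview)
Your existence argument in part~(i) is essentially identical to the paper's; both use monotone iteration, the monotone convergence property of~$\mc E$, and continuity of~$\mb T$.

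For uniqueness in part~(ii), your proof is correct but proceeds differently from the paper's. The paper avoids the positive/negative part decomposition altogether: from the subgradient inequality at a fixed point~$v$ one gets $(\mb I - \mb D_v)(v'-v) \geq 0$, and since $\rho(\mb D_v;\mc E)<1$ the Neumann series gives $(\mb I - \mb D_v)^{-1} = \sum_{n\geq 0} \mb D_v^n$, which is monotone because each $\mb D_v^n$ is. Applying this positive inverse directly yields $v'-v \geq 0$; swapping the roles of $v$ and $v'$ (using the subgradient at~$v'$) gives $v-v' \geq 0$, hence $v=v'$. The ``obstacle'' you identify---that $w=v_2-v_1$ has no definite sign---is thus a feature of your iteration-on-$w^+$ approach rather than an intrinsic difficulty; the paper sidesteps it by working with the resolvent. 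Your route has the minor advantage of using only operator norms and lattice-norm monotonicity, without invoking invertibility or the Neumann series explicitly, while the paper's route is shorter and makes the order-theoretic conclusion $v' \geq v$ (used later in Corollary~\ref{cor:nonunique-gen}) immediate.
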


When uniqueness cannot be guaranteed, ordering and stability criteria can be used to refine the set of fixed points. Let $\mc V$ denote the set of fixed points of $\mb T$. Say $v$ is the \emph{smallest} (respectively \emph{largest}) fixed point of $\mb T$ if $v \leq v'$ (resp. $v \geq v'$) holds for each $v' \in \mc V$.  Say $v$ is \emph{stable} if $\rho(\mb D_{v};\mc E) < 1$ (see, e.g., \cite{Amann1976}). 

\begin{corollary}\label{cor:nonunique-gen}
Let the conditions of Proposition \ref{p:exun}(i) hold, let $\mb T$ satisfy (\ref{e:subgrad}) (resp. (\ref{e:supergrad})) at each of its fixed points, and let $v \in \mc E$ be a fixed point of $\mb T$ with $\rho(\mb D_v;\mc E) < 1$. Then: $v$ is both the smallest (resp. largest) fixed point and the unique stable fixed point of $\mb T$ in $\mc E$.
\end{corollary}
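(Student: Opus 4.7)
\textbf{Plan for Corollary \ref{cor:nonunique-gen}.}

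The plan is to exploit the subgradient inequality at the distinguished fixed point $v$, combined with a Neumann-series representation of $(I - \mb D_v)^{-1}$, to order $v$ relative to every other fixed point. I would then apply the same argument at any other stable fixed point to force equality.

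First I treat the subgradient case. Let $v' \in \mc V$ be any other fixed point. The subgradient inequality at $v$ with $g = v'$ gives
\[
 v' - v \;=\; \mb T v' - \mb T v \;\geq\; \mb D_v (v' - v),
\]
i.e.\ $(I - \mb D_v)(v' - v) \geq 0$. Because $\rho(\mb D_v; \mc E) < 1$, Gelfand's formula supplies some $r < 1$ with $\|\mb D_v^n\| \leq r^n$ for all large $n$, so the Neumann series $\sum_{n \geq 0} \mb D_v^n$ converges in operator norm to a bounded linear inverse of $I - \mb D_v$. Each $\mb D_v^n$ is monotone, being a composition of monotone linear operators, so every partial sum sends positive elements to positive elements; since the positive cone is norm-closed in the Banach lattice $\mc E$, the limit $(I - \mb D_v)^{-1}$ is monotone as well. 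Applying $(I - \mb D_v)^{-1}$ to both sides of $(I - \mb D_v)(v' - v) \geq 0$ then yields $v' - v \geq 0$, so $v \leq v'$. Hence $v$ is the smallest fixed point.

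For uniqueness among stable fixed points, let $v' \in \mc V$ satisfy $\rho(\mb D_{v'}; \mc E) < 1$. Running the identical argument with the roles of $v$ and $v'$ exchanged---using the subgradient inequality at $v'$ and monotonicity of $(I - \mb D_{v'})^{-1}$---delivers $v' \leq v$. Combined with the previous inequality, $v = v'$. The supergradient case is entirely symmetric: the inequality $(I - \mb D_v)(v' - v) \leq 0$ arises instead, and applying the monotone inverse flips the conclusion to $v \geq v'$, identifying $v$ as the largest fixed point, with uniqueness among stable fixed points following from the same symmetric swap.

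The only nontrivial step is the passage from the spectral radius bound and monotonicity of $\mb D_v$ to monotonicity of $(I - \mb D_v)^{-1}$; this hinges on closedness of the positive cone in the Banach lattice $\mc E$ and norm-convergence of the Neumann series. Everything else is an immediate consequence of the sub-/supergradient inequalities.
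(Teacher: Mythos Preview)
Your proposal is correct and follows essentially the same approach as the paper: both use the subgradient inequality at $v$ to obtain $(I-\mb D_v)(v'-v)\geq 0$, invoke the Neumann series for $(I-\mb D_v)^{-1}$ afforded by $\rho(\mb D_v;\mc E)<1$, use monotonicity of $\mb D_v$ (hence of the inverse) to conclude $v'\geq v$, and then swap the roles of $v$ and $v'$ for the uniqueness-among-stable-fixed-points claim. Your added remark about closedness of the positive cone justifying monotonicity of the limit is a slight elaboration, but the argument is otherwise identical.
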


Stability of $v$ is a useful property. In the examples we consider in Sections \ref{s:robust} and \ref{s:learn}, the subgradient is of the form $\mb D_v = \beta \tilde{\mb E}$ with $\beta \in (0,1)$, where $\tilde{\mb E}$ denotes conditional expectation under a distorted probability measure. Stability ensures that discounted expected utilities under $\tilde{ \mb E}$ are finite. Stability of $v$ also helps ensure that fixed-point iteration on a neighborhood of $v$ will converge to $v$ (see Lemma \ref{lem:cge-nbhd}).

While Proposition \ref{p:exun}(i) establishes that $\mb T^n \bar v$ (if (a) holds) or $\mb T^n \ul v$ (if (b) holds) converges to a fixed point of $\mb T$ as $n \to \infty$, it is also possible to strengthen this to a (partial) global convergence result.

\begin{corollary}\label{cor:cge}
Let the conditions of Proposition \ref{p:exun} hold, with the additional restriction that $\mb T$ satisfies (\ref{e:subgrad}) if (a) holds, or (\ref{e:supergrad}) if (b) holds, at $v$. Then: for any $w \in \mc E$ for which $w \leq \bar v$ (if (a) holds) or $w \geq \ul v$ (if (b) holds), we have $\lim_{n \to \infty} \mb T^n w = v$.
\end{corollary}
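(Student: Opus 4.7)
The plan is to sandwich $\mb T^n w$ between two sequences that both converge to $v$ in the norm of $\mc E$, and then conclude by monotonicity of the lattice norm. Focus first on case (a). Since $w \leq \bar v$ and $\mb T$ is monotone, iteration gives the upper bound $\mb T^n w \leq \mb T^n \bar v$ for every $n$. For a matching lower bound, apply the subgradient inequality (\ref{e:subgrad}) at the fixed point $v$: since $\mb T v = v$, this reads $\mb T w - v \geq \mb D_v(w - v)$. Induct on $n$, using monotonicity of $\mb D_v$ to pass the inequality through itself: assuming $\mb T^n w - v \geq \mb D_v^n(w - v)$, one gets $\mb T^{n+1} w - v \geq \mb D_v(\mb T^n w - v) \geq \mb D_v^{n+1}(w - v)$. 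Hence
\[
 v + \mb D_v^n(w - v) \;\leq\; \mb T^n w \;\leq\; \mb T^n \bar v.
\]

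Next, I argue that both outer sequences converge to $v$ in the norm of $\mc E$. The right-hand side does so by Proposition \ref{p:exun}(i). For the left-hand side, the definition $\rho(\mb D_v;\mc E) = \lim_n \|\mb D_v^n\|^{1/n} < 1$ implies $\|\mb D_v^n\| \leq r^n$ eventually for some $r \in (0,1)$, so $\|\mb D_v^n(w - v)\| \leq \|\mb D_v^n\|\cdot\|w - v\| \to 0$.

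Finally, I convert the order sandwich into norm convergence. Writing $a_n := v + \mb D_v^n(w-v)$ and $c_n := \mb T^n \bar v$, and noting $c_n \geq v$ (since $\mb T \bar v \leq \bar v$ forces $\{\mb T^n \bar v\}$ to decrease to $v$), the order bounds give
\[
 |\mb T^n w - v| \;\leq\; (c_n - v) + (v - a_n)^+ \;\leq\; |c_n - v| + |a_n - v|
\]
in $\mc E$. Because $\mc E$ is a Banach lattice (norm monotone on the positive cone), this yields $\|\mb T^n w - v\| \leq \|c_n - v\| + \|a_n - v\| \to 0$. Case (b) is handled symmetrically: monotonicity of $\mb T$ applied to $w \geq \ul v$ gives $\mb T^n w \geq \mb T^n \ul v \uparrow v$, the supergradient inequality (\ref{e:supergrad}) at $v$ iterates via monotonicity of $\mb D_v$ to $\mb T^n w - v \leq \mb D_v^n(w - v)$, and the same squeeze delivers the conclusion. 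The only genuinely nontrivial step is the induction that turns the one-step sub/supergradient bound into $\mb D_v^n(w - v)$; once this is in hand, the spectral radius hypothesis and Proposition \ref{p:exun}(i) give a norm sandwich, and the Banach-lattice squeeze is mere bookkeeping.
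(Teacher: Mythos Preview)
Your proof is correct and follows essentially the same approach as the paper: both sandwich $\mb T^n w - v$ between $(\mb D_v)^n(w-v)$ from below (via the subgradient inequality at $v$ iterated with monotonicity of $\mb D_v$) and $\mb T^n \bar v - v$ from above (via monotonicity of $\mb T$), then invoke $\rho(\mb D_v;\mc E)<1$ and Proposition~\ref{p:exun}(i) for the bounds and conclude by the lattice-norm squeeze. The only cosmetic difference is that the paper writes the chain of inequalities in one line rather than as an explicit induction.
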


We conclude this subsection by noting results similar to Proposition \ref{p:exun} appear in the existing literature. Proposition \ref{p:exun}(i) is based on Theorem 4.1(b) of \cite{KrasPos}, which assumes the order interval $[\ul v, \bar v]$ be invariant under $\mb T$. This invariance is not necessary: all that is required is that either $\mb T \bar v \leq \bar v$ or $\mb T \ul v \geq \mb T \ul v$ and the sequence of iterates $\mb T^n \bar v$ or $\mb T^n \ul v$ is bounded from below or above, respectively.\footnote{Our requirement that $\mc E$ has the monotone convergence property is equivalent to the requirement from \cite{KrasPos} that the cone of non-negative functions is ``regular''.} Proposition \ref{p:exun}(ii) uses similar techniques to the literature on fixed points of order-convex maps (see, e.g., Chapter 5 of \cite{Amann1976}). Unlike much of this literature, Proposition \ref{p:exun}(ii) does not require additional properties such as compactness and differentiability of $\mb T$ or strict positivity of $\mb D_v$, which may be difficult to verify in practice, or that the cone of non-negative functions in $\mc E$ has non-empty interior, which is a property not shared by $L^p$ spaces with $1 \leq p < \infty$ or Orlicz classes. 
We do not view Proposition \ref{p:exun} as a contribution of this paper: we use it simply as a starting point to derive more primitive existence and uniqueness conditions in the following sections.

\subsection{Thin-tailed classes of functions}
\label{s:orlicz}

In the applications that follow, we will use a class of ``thin-tailed'' functions for the space $\mc E$. The class is naturally compatible with the structure of the recursions we study, which involve the compositions of exponentials, expectations, and logarithms. 

Let $\mu$ be a probability measure on $(\mc X, \mcr X)$. In most of the applications that follow, $\mc X$ will be the state space and $\mu$ will be the stationary distribution of the Markov state vector. Let $L^0$ denote the (equivalence class of) all measurable functions on $\mc X$. For $r \geq 1$, define
\begin{equation*}
 \begin{aligned}
 L^{\phi_r} & = \{f \in L^0 : \mb E^\mu[\exp(|f(X)/c|^r)] < \infty \mbox{ for some } c > 0\} \,, \\
 E^{\phi_r} & = \{f \in L^0 : \mb E^\mu[\exp(|f(X)/c|^r)] < \infty \mbox{ for all } c > 0\} \,,
 \end{aligned}
\end{equation*}
where $\mb E^\mu[\,\cdot\,]$ denotes expectation under $\mu$. Both $L^{\phi_r}$ and $E^{\phi_r}$ are Banach lattices when equipped with the (Luxemburg) norm
\[
 \|f\|_{\phi_r} = \inf \left\{ c > 0 : \mb E^\mu[\exp(|f(X)/c|^r)] \leq 2 \right\}
\]
and the partial order $f \geq g$ if and only if $f(x) \geq g(x)$ $\mu$-almost everywhere. The space $L^{\phi_r}$ is an \emph{(exponential) Orlicz space} and $E^{\phi_r}$ is its \emph{Orlicz heart}. We will be primarily concerned with $E^{\phi_r}$ in what follows. 

Before proceeding, we note some properties of $L^{\phi_r}$ and $E^{\phi_r}$. First, these spaces are related to $L^p(\mu)$ spaces by the embeddings $L^\infty(\mu) \hookrightarrow E^{\phi_r} \hookrightarrow L^{\phi_r} \hookrightarrow E^{\phi_s} \hookrightarrow L^{\phi_s}\hookrightarrow L^p(\mu)$ for $1 \leq s < r < \infty$, with $\|f\|_p \leq p!(\log 2)^{1/r-1} \|f\|_{\phi_r}$ for each $1 \leq p < \infty$ where $\|\cdot\|_p$ denotes the $L^p(\mu)$ norm, and $\|f\|_{\phi_{s}} \leq  (\log 2)^{1/r-1/s}\|f\|_{\phi_{r}}$ \cite[p. 95]{vdVW}. In addition, Lemma \ref{lem:mcp} shows that $E^{\phi_r}$ has the monotone convergence property. We refer the reader to \cite{KrasRuti} for further details on Orlicz classes.

\subsection{Verifying the spectral radius condition}
\label{s:sr}

In many models featuring forward-looking agents such as those we study in Sections \ref{s:robust} and \ref{s:learn}, the subgradient is a discounted conditional expectation operator under a distorted probability measure. That is, there is a wedge between the probability measure describing the evolution of state variables and the probability measure under which the expectation is taken. In this section we show how to verify the key spectral radius condition from Proposition \ref{p:exun} under a thin-tail condition on the change of measure.

When there is no such wedge (e.g., time-separable preferences and rational expectations), the spectral radius condition is easily seen to hold. Let $X = \{X_t\}_{t \geq 0}$ be a time-homogeneous Markov process with transition kernel $Q$ and stationary distribution $\mu$. Suppose $\mb D_v = \beta \mb E^Q$, where $\mb E^Q$ denotes conditional expectation under $Q$. Then for any $c > 0$ and $f \in E^{\phi_r}$, 
\begin{align*}
 \mb E^\mu[\exp(| \mb D_v f(X_t)/ (\beta c)|^r)] 
 & = \mb E^\mu[\exp(|\mb E^Q[f(X_{t+1})|X_t]/c|^r)] \\
 & \leq \mb E^\mu[\mb E^Q[\exp(|f(X_{t+1})|^r/c)|X_t]] \\
 & = \mb E^\mu[\exp(|f(X_t)|^r/c)] \,,
\end{align*}
by Jensen's inequality and the fact that $\mu$ is the stationary distribution associated with $X$. Taking $f$ to be almost-everywhere constant, we see that the operator $\mb D_v$ has norm $\| \mb D_v \|_{\phi_r} = \beta$ on $E^{\phi_r}$ and $\rho(\mb D_v ; E^{\phi_r}) = \beta$. A similar argument applies for $L^p(\mu)$ spaces.

This argument breaks down in the settings we study, in which $\mb D_v = \beta \tilde{\mb E}$, where $\tilde{\mb E}$ denotes conditional expectation under a distribution different from $Q$. Suppose
\begin{equation} \label{e:rn-sufficient}
  \tilde{\mb E} f(x) =  \mb E^Q[ m(X_t,X_{t+1}) f(X_{t+1}) |X_t = x] \,,
\end{equation}
where $m$ is the (conditional) change-of-measure transforming $\mb E^Q$ into $\tilde{\mb E}$. We shall verify the spectral radius condition under a thin-tail condition on $m$. For the intuition behind the result, note that applying $\mb D_v$ involves multiplying by $m$, taking conditional expectations under $Q$, and discounting. Therefore, provided the higher moments of $m$ don't diverge too quickly, repeatedly applying $\mb D_v$ to thin-tailed functions ensures that the effect of discounting eventually dominates and the spectral radius condition holds.

To formalize this reasoning, let $\log m \vee 0$ denote the pointwise maximum of $\log m$ and $0$ and let $\mu \otimes Q$ denote the joint (stationary) distribution of $(X_t,X_{t+1})$.

\begin{lemma}\label{lem:com}
Let $\mb D = \beta \tilde{\mb E}$ where $\beta \in (0,1)$ and $\tilde{\mb E}$ is of the form (\ref{e:rn-sufficient}) with
\begin{equation}\label{e:m-thin}
 \mb E^{\mu \otimes Q}\left[ \exp(|\!\log m(X_t,X_{t+1}) \vee 0|^r/c) \right] < \infty
\end{equation}
for some $c > 0$ and $r > 1$. Then: $\mb D$ is a bounded linear operator on $E^{\phi_s}$ with $\rho(\mb D;E^{\phi_s}) < 1$ for each $s \geq 1$.
\end{lemma}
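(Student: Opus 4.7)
The plan is to bound $\|\mb D^n\|_{E^{\phi_s}}$ via iterated Jensen and H\"older inequalities, then apply Gelfand's formula $\rho(\mb D; E^{\phi_s}) = \lim_{n\to\infty}\|\mb D^n\|^{1/n}$. The central point is that $m$ is a conditional Radon--Nikodym density, i.e.\ $\mb E^Q[m(X,X')\mid X]=1$, which makes $\tilde{\mb E}[\cdot\mid X]$ a bona fide conditional expectation under the tilted Markov kernel $\tilde Q(x,dx') = m(x,x')\,Q(x,dx')$. This in turn allows Jensen's inequality to be invoked twice, once for $u\mapsto |u|^s$ (convex since $s\geq 1$) and once for the convex $u\mapsto e^u$.

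First I would establish that $\mb D$ is bounded on $E^{\phi_s}$. For any $f\in E^{\phi_s}$ and $c>0$, the two Jensen steps give
\[
\mb E^\mu\bigl[\exp(|\mb D f/c|^s)\bigr] \;\leq\; \mb E^{\mu\otimes Q}\bigl[m(X,X')\exp\bigl(\beta^s|f(X')|^s/c^s\bigr)\bigr].
\]
Applying H\"older's inequality with conjugates $\alpha,\alpha'>1$ under $\mu\otimes Q$, together with the fact that the $X'$-marginal of $\mu\otimes Q$ is $\mu$, bounds the right-hand side by $\|m\|_{L^\alpha(\mu\otimes Q)}\cdot\mb E^\mu[\exp(\alpha'\beta^s|f|^s/c^s)]^{1/\alpha'}$. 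The hypothesis $(\log m)\vee 0\in L^{\phi_r}(\mu\otimes Q)$ with $r>1$ implies $\|m\|_\alpha<\infty$ for every $\alpha\geq 1$, because $m\leq \exp((\log m)\vee 0)$ and, for $r>1$, the majorant has all exponential moments. Choosing $c=\beta(\alpha')^{1/s}\|f\|_{\phi_s}$ yields a bound of the form $\|\mb D f\|_{\phi_s}\leq C(\alpha,s)\,\beta\,\|f\|_{\phi_s}$.

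Iterating exactly the same chain of inequalities $n$ times, with the single-step change-of-measure $m$ replaced by the $n$-step change $M_n:=\prod_{i=0}^{n-1}m(X_i,X_{i+1})$ (which satisfies $\mb E^Q[M_n\mid X_0]=1$ by the tower property together with the Markov property), gives
\[
\|\mb D^n f\|_{\phi_s} \;\leq\; C(\alpha,s)\,\beta^n\,\|M_n\|_{L^\alpha(\mu\otimes Q^{\otimes n})}\,\|f\|_{\phi_s}.
\]
Hence Gelfand's formula yields $\rho(\mb D;E^{\phi_s}) \leq \beta\cdot\limsup_{n\to\infty} \|M_n\|_\alpha^{1/n}$ for every $\alpha>1$.

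The main technical obstacle is therefore to show that $\limsup_{n\to\infty}\|M_n\|_\alpha^{1/n}\to 1$ as $\alpha\downarrow 1$. For this I would view $\mb E^Q[M_n^\alpha\mid X_0=x] = K_\alpha^n 1(x)$, where $K_\alpha f(x):=\mb E^Q[m(x,X_{t+1})^\alpha f(X_{t+1})\mid X_t=x]$ is a positive integral operator. At $\alpha=1$, $K_1=\tilde Q$ is a Markov kernel and so has spectral radius $1$, and the thin-tail bound on $\log m$---which under $r>1$ delivers uniform integrability of $m^\alpha$ for $\alpha$ in a neighborhood of $1$---ensures that $K_\alpha$ is a continuous perturbation of $K_1$ in an appropriate operator topology, so $\rho(K_\alpha)\to 1$ as $\alpha\downarrow 1$. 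Since $\|M_n\|_\alpha^{1/n}$ is controlled by $\rho(K_\alpha)^{1/\alpha}$, which can be made arbitrarily close to $1$, and since $\beta<1$, letting $\alpha\downarrow 1$ in the Gelfand bound gives $\rho(\mb D;E^{\phi_s})\leq \beta<1$, as desired. Rigorously executing the spectral-continuity step---selecting a function space on which $K_\alpha$ is bounded uniformly for $\alpha$ near $1$ and applying a perturbation argument that exploits the exponential integrability of $(\log m)\vee 0$---is the hardest part of the argument.
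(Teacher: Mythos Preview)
Your boundedness argument is fine and matches the paper's. The gap is in the spectral-radius step, and it is exactly the one you flag yourself: the spectral-continuity claim for $K_\alpha$ is not established, and under the stated hypotheses it is not clear it \emph{can} be. The only assumption on $Q$ is that $\mu$ is stationary---no Doeblin, mixing, or geometric-ergodicity conditions---so there is no obvious Banach space on which $K_\alpha$ is bounded uniformly in $\alpha$ near $1$ and on which a perturbation argument would deliver $\rho(K_\alpha)\to 1$. Worse, the only elementary control on $\|M_n\|_\alpha$ available here is the repeated-H\"older bound $\|M_n\|_\alpha\leq \mb E^{\mu\otimes Q}[m^{n\alpha}]^{1/\alpha}$, and under the thin-tail hypothesis this grows like $\exp(Cn^{r/(r-1)})$, so $\|M_n\|_\alpha^{1/n}\to\infty$. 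Your bound $\rho\leq \beta\cdot\limsup_n\|M_n\|_\alpha^{1/n}$ is therefore vacuous without the unproven spectral step.

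The paper sidesteps this entirely with an elementary trick you miss: rather than only extracting $\beta^n$ as a multiplicative factor on the norm, it pulls $\beta^n$ out as an \emph{exponent} on the whole expectation. Concretely, for $s>1$ one chooses the scaling $c=q^{1/s}(\beta^{(s-1)/s})^n\|f\|_{\phi_s}$ so that $|\mb D^n f/c|^s=\beta^n\, q^{-1}|\tilde{\mb E}^n f/\|f\|_{\phi_s}|^s$, and then applies Jensen for the \emph{concave} map $x\mapsto x^{\beta^n}$ to obtain
\[
 \mb E^\mu\big[e^{\beta^n Z}\big]\;\leq\;\big(\mb E^\mu[e^{Z}]\big)^{\beta^n},\qquad Z:=q^{-1}\big|\tilde{\mb E}^n f/\|f\|_{\phi_s}\big|^s.
\]
Now $\mb E^\mu[e^Z]\leq \mb E^\mu[\tilde{\mb E}^n g]$ with $g=\exp(q^{-1}|f/\|f\|_{\phi_s}|^s)$, and the crude H\"older bound $\mb E^\mu[\tilde{\mb E}^n g]\leq \mb E^{\mu\otimes Q}[m^{np}]^{1/p}\cdot 2^{1/q}$ together with a direct tail estimate (the paper's Lemma on moments of $m$) gives $\mb E^{\mu\otimes Q}[m^{np}]^{1/p}\leq C\exp\big(C'n^{r/(r-1)}\big)$. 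Raising this to the power $\beta^n$ is harmless because $\beta^n n^{r/(r-1)}\to 0$. This yields $\rho(\mb D;E^{\phi_s})\leq \beta^{(s-1)/s}<1$ for $s>1$, with a closely related scaling handling $s=1$. No operator-theoretic perturbation is needed; the whole argument is Jensen, H\"older, and an explicit moment bound on $m$ that uses $r>1$ only to ensure $n^{r/(r-1)}$ grows subexponentially.
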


\begin{remark}
Lemma \ref{lem:com} does not require stationarity (or any other property) of $X$ under the law of motion corresponding to $\tilde{\mb E}$.
\end{remark}

\begin{remark}
Lemma \ref{lem:com} establishes the spectral radius condition for all $\beta \in (0,1)$. When the change of measure $m$ defining $\tilde{\mb E}$ has thin tails, any amount of discounting is sufficient to overwhelm the effect of the change of measure under repeated application of $\mathbb D = \beta \tilde{\mb E}$.
\end{remark}

\section{Application 1: Robust (and related) preferences}
\label{s:robust}

\subsection{Setting}

Consider an infinite-horizon environment in which the continuation value $V_t$ of a stream of per-period utilities $\{U_t\}_{t \geq 0}$ from date $t$ forwards is defined recursively by
\begin{equation}\label{e:recur}
 V_t = U_t - \beta \theta \log \mb E\left[ \left. e^{-\theta^{-1}V_{t+1}} \right| \mc F_t \right] \,,
\end{equation}
where $\mc F_t$ is the date-$t$ information set, $\beta \in (0,1)$ is a time preference parameter, and $\theta > 0$. Recursion (\ref{e:recur}) arises in a number of settings. It is the risk-sensitive recursion of \cite{HS1995}, where $\theta$ is interpreted as a risk-sensitivity parameter. The recursion also arises under ``robust'' preferences which express an aversion to model uncertainty, namely multiplier preferences \citep{HS2001} and constraint preferences \citep{HSTW}, in which $\theta$ encodes the agent's aversion to model uncertainty. 
Finally, recursion (\ref{e:recur}) is equivalent to \cite{EZ} preferences with unit IES, in which case $\theta$ is a transformation of the risk aversion parameter.\footnote{Specifically, $\theta = 1/(\gamma - 1)$ where $\gamma$ is the coefficient of relative risk aversion. See, e.g., Section III in \cite{HHL} for a derivation of recursion (\ref{e:recur}) from the Epstein--Zin recursion with unit IES.}

We follow much of the literature and consider environments characterized by a stationary Markov state process $X = \{X_t : t \geq 0\}$ supported on a state space $\mathcal X \subseteq \mb R^d$. The set $\mc F_t$ will denote the information set generated by the realization of $X$ up to date $t$. Let $Q$ denote the Markov transition kernel and $\mb E^Q$ denote conditional expectation with respect to $Q$.  In such environments it follows for certain commonly used specifications of $U_t$ that there exists $v : \mc X \to \mb R$ and $u : \mc X \times \mc X \to \mb R$ and  such that
\begin{align*}
 v(X_t) & = -\frac{1}{\theta} \left( V_t - \frac{1}{1-\beta} U_t \right)\,, &
 u(X_t,X_{t+1}) & =  U_{t+1} - U_t \,.
\end{align*} 
For instance, this is true when $U_t = \log (C_t)$ and consumption growth $\log(C_{t+1}/C_t)$ is a function of $(X_t,X_{t+1})$.\footnote{Our results trivially extend to allow $\log(C_{t+1}/C_t) = g(X_t,X_{t+1},Y_{t+1})$ where the conditional distribution of $(X_{t+1},Y_{t+1})$ given $(X_t,Y_t)$ depends only on $X_t$ by redefining the state as $(X_t,Y_t)$.} Under these conditions, the recursion may be rewritten in terms of the scaled continuation value function $v$:
\begin{equation} \label{e:recur-2}
 v(x) = \beta \log \mb E^{Q} \left[ \left. e^{v(X_{t+1}) + \alpha u(X_t,X_{t+1})} \right| X_t = x \right] \,,
\end{equation}
where $\alpha = -(\theta(1-\beta))^{-1}$. Recursion (\ref{e:recur-2}) may be expressed as $v = \mb T v$, where 
\[
 \mb T f(x) = \beta \log \mb E^{Q} \left[ \left. e^{f(X_{t+1}) + \alpha u(X_t,X_{t+1})} \right| X_t = x \right] \,.
\]

\subsection{Existing results}

\cite{HS2012} and \cite{npsdfd} studied this recursion in the context of Epstein--Zin preferences with unit IES and unbounded $\mc X$. \cite{HS2012} derived sufficient conditions for existence of a fixed point but not uniqueness. Their conditions restrict moments of a Perron--Frobenius eigenfunction of an operator and require convergence of a sequence of iterates of a related recursion.  \cite{npsdfd} established  uniqueness on a neighborhood for the same recursion under a spectral radius condition but did not establish existence or global uniqueness.

\subsection{New results}

Here we establish existence and uniqueness under a primitive thin-tail condition on the growth in per-period utility. Formally, we require that for some $r \geq 1$, 
\begin{equation} \label{e:u}
 \mb E^{\mu \otimes Q}\left[ \exp(|u(X_t,X_{t+1})|^r/c) \right] < \infty  \quad \mbox{ for all $c > 0$.}
\end{equation}
We verify this condition below in several examples. Note, however, that both examples in Section \ref{sec:problems} violate this condition.

We shall establish existence and uniqueness by applying Proposition \ref{p:exun}. The operator $\mb T$ is continuous, monotone, and convex under condition (\ref{e:u}); see Lemma \ref{lem:T-prop}. The proof of existence constructs an upper value $\bar v$ and shows the sequence of iterates $\{\mb T^n \bar v\}_{n \geq 1}$ is bounded from below. These steps use nothing more than repeated application of H\"older's inequality and Jensen's inequality. 
For uniqueness, by Jensen's inequality the operator $\mb T$ satisfies inequality (\ref{e:subgrad}) with subgradient
\[
 \mb D_v f(x) = \beta \mb E_v f(x)\,,
\]
where $\mb E_v$ is a distorted conditional expectation:
\begin{align}
 \mb E_v f(x) & = \mb E^Q[m_v(X_t,X_{t+1})f(X_{t+1})|X_t = x] \,, \label{eq:mf-robust} \\
  m_v(X_t,X_{t+1}) & = \frac{e^{v(X_{t+1}) + \alpha u(X_t,X_{t+1})}}{\mb E^Q[e^{v(X_{t+1}) + \alpha u(X_t,X_{t+1})}|X_t]} \,.
\end{align}
For robust preferences, $\mb E_v$ may be interpreted as expectation under the agent's ``worst-case'' model. The spectral radius condition is verified by applying Lemma \ref{lem:com}; see Lemma \ref{lem:D-bdd}.

\begin{theorem}\label{t-id-W}
Let condition (\ref{e:u}) hold. Then: $\mb T$ has a fixed point $v \in E^{\phi_r}$. Moreover, if $r > 1$ then: (i) $v$ is the unique fixed point of $\mb T$ in $ E^{\phi_s}$ for each $s \in (1,r]$, and (ii) $v$ is both the smallest fixed point and the unique stable fixed point of $\mb T$ in $E^{\phi_1}$.
\end{theorem}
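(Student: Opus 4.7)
The plan is to apply Proposition \ref{p:exun} together with Corollary \ref{cor:nonunique-gen}, with the change-of-measure calculation from Section \ref{s:sr} supplying the spectral-radius ingredient. The text has already flagged the structural prerequisites: Lemma \ref{lem:T-prop} will show that, under (\ref{e:u}), $\mb T$ is continuous, monotone, and convex on $E^{\phi_r}$, and Jensen's inequality gives the subgradient inequality (\ref{e:subgrad}) at each $v' \in E^{\phi_r}$ with monotone subgradient $\mb D_{v'} = \beta \mb E_{v'}$ as in (\ref{eq:mf-robust}).

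For existence I would construct a supersolution $\bar v \in E^{\phi_r}$: a natural ansatz is a constant plus a multiple of $\log \mb E^Q[\exp(\alpha u(X_t,X_{t+1})/(1-\beta)) \mid X_t = x]$, with the coefficients chosen so that H\"older's inequality applied inside the conditional expectation defining $\mb T \bar v$, with conjugate exponents $(1/\beta, 1/(1-\beta))$, yields $\mb T \bar v \leq \bar v$. Condition (\ref{e:u}) then places $\bar v$ in $E^{\phi_r}$. The remaining hypothesis of Proposition \ref{p:exun}(i) --- that the decreasing sequence $\{\mb T^n \bar v\}$ is uniformly bounded below by some $\ul v \in E^{\phi_r}$ --- I would handle by the opposite Jensen bound $\mb T f(x) \geq \beta \mb E^Q[f(X_{t+1}) + \alpha u(X_t, X_{t+1}) \mid X_t = x]$, which iterates into
\[
 \mb T^n \bar v(x) \geq \beta^n \mb E^Q[\bar v(X_{t+n}) \mid X_t = x] + \sum_{k=0}^{n-1} \beta^{k+1} \alpha \, \mb E^Q[u(X_{t+k},X_{t+k+1}) \mid X_t = x];
\]
stationarity of $\mu$ together with (\ref{e:u}) renders the right-hand side bounded below by a fixed $\ul v \in E^{\phi_r}$ uniformly in $n$. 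Proposition \ref{p:exun}(i) then delivers a fixed point $v \in E^{\phi_r}$ with $\ul v \leq v \leq \bar v$.

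For uniqueness assume $r > 1$. At any fixed point $v' \in E^{\phi_s}$, $s \in (1,r]$, the fixed-point equation $v'(X_t) = \beta \log \mb E^Q[e^{v'(X_{t+1}) + \alpha u} \mid X_t]$ forces
\[
 \log m_{v'}(X_t, X_{t+1}) = v'(X_{t+1}) + \alpha u(X_t, X_{t+1}) - \beta^{-1} v'(X_t),
\]
so $|\log m_{v'}|$ is controlled in $E^{\phi_s}(\mu \otimes Q)$ by the triangle inequality for the Luxemburg norm, stationarity of $(X_t,X_{t+1})$ under $\mu \otimes Q$, and the embedding $E^{\phi_r} \subseteq E^{\phi_s}$ (recall $s \leq r$). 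This verifies (\ref{e:m-thin}) with exponent $s > 1$, so Lemma \ref{lem:com} gives $\rho(\mb D_{v'}; E^{\phi_s}) < 1$, and Proposition \ref{p:exun}(ii) closes part (i). For part (ii), the same display applied to the fixed point $v \in E^{\phi_r}$ constructed above gives, via Lemma \ref{lem:com}, $\rho(\mb D_v; E^{\phi_s}) < 1$ for \emph{every} $s \geq 1$, in particular for $s = 1$; Corollary \ref{cor:nonunique-gen} applied on $\mc E = E^{\phi_1}$ then identifies $v$ as the smallest and the unique stable fixed point of $\mb T$ in $E^{\phi_1}$.

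The step I expect to be most delicate is the lower bound on $\{\mb T^n \bar v\}$ in the existence argument. Because the positive cone of $E^{\phi_r}$ is not solid, one cannot appeal to a topological boundedness argument; the envelope $\ul v \in E^{\phi_r}$ must be produced by the explicit H\"older/Jensen estimates above, and matching the Orlicz-norm control requires a careful balance in the choice of the constants and conjugate exponents used to build $\bar v$.
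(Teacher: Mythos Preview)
Your overall strategy matches the paper's, and your uniqueness argument is correct---indeed, exploiting the fixed-point identity to write $\log m_{v'}(X_t,X_{t+1}) = v'(X_{t+1}) + \alpha u(X_t,X_{t+1}) - \beta^{-1} v'(X_t)$ is a bit slicker than the paper's Lemma~\ref{lem:D-bdd}, which bounds the conditional-expectation term by Jensen for arbitrary (not only fixed-point) $v$. Your lower-bound argument for $\{\mb T^n\bar v\}$ via $\mb T f \geq \beta \mb E^Q[f + \alpha u\,|\,X_t]$ is exactly what the paper does, so the step you flag as delicate is in fact routine.

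The genuine gap is in the supersolution. A single-term ansatz $\bar v(x) = a + b\,\log h(x)$ with $h(x) = \mb E^Q[e^{\alpha u/(1-\beta)}\,|\,X_t=x]$ does not, in general, satisfy $\mb T\bar v \leq \bar v$: after your H\"older split with exponents $(1/\beta,1/(1-\beta))$ you obtain
\[
 \mb T\bar v(x) \leq \beta a + \beta^2 \log \mb E^Q\!\big[h(X_{t+1})^{b/\beta}\,\big|\,X_t=x\big] + \beta(1-\beta)\log h(x),
\]
and the middle term is a \emph{new} function of $x$ that cannot be absorbed into $a+b\log h(x)$ by tuning constants. The paper instead takes the infinite geometric mixture
\[
 \bar v(x) = (1-\beta)\sum_{n=0}^\infty \beta^{n+1}\log\!\big((\mb E^Q)^n h(x)\big),
\]
engineered so that one H\"older split with $(1/\beta,1/(1-\beta))$ followed by the countable-product H\"older inequality (Lemma~\ref{lem:holder-infinite}) applied to $\mb E^Q[e^{\bar v(X_{t+1})/\beta}\,|\,X_t]$ shifts the index by one and returns $\bar v$; membership $\bar v \in E^{\phi_r}$ then follows by termwise Jensen. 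So the extra idea the existence proof needs is in the construction of $\bar v$, not of $\ul v$.
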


\medskip

\paragraph{Example 1: Linear-Gaussian environments.} Condition (\ref{e:u}) holds for all $r \in [1,2)$ when $u(X_t,X_{t+1}) = \lambda_0'X_t + \lambda_1'X_{t+1}$ and its stationary distribution is Gaussian.   

This specification arises, for instance, with $U_t = \log(C_te^{\lambda'X_t})$ where $\log(C_{t+1}/C_t)$ is a function of $(X_t,X_{t+1})$ and the process $X$ is a stationary Gaussian VAR(1):
\[
 X_{t+1} =  \nu + A X_t + u_{t+1}\,,  \quad u_{t+1} \sim N(0,\Sigma) \,,
\]
with all eigenvalues of $A$ inside the unit circle. This setting was considered in \cite{HHL}, \cite{BHS}, and several other works. It is known that $\mb T$ has a fixed point of the form $v(x) = a + b 'x$ where $b = \alpha \beta (I - \beta A')^{-1}(\lambda_0 + A'\lambda_1) $ and 
\[
 a = \frac{\beta}{1-\beta} \Big( (\alpha \lambda_1 + b)'\nu + \frac{1}{2}(\alpha \lambda_1 + b)' \Sigma(\alpha \lambda_1 + b) \Big)  \,.
\]
Theorem \ref{t-id-W} shows that $v(x)=a+b'x$ is the unique fixed point in $E^{\phi_s}$ for all $s \in (1,2)$, and the smallest fixed point and unique stable fixed point in $E^{\phi_1}$. \hfill $\square$

\medskip

\paragraph{Example 2: Fat tails and rare disasters.} Consider the model from Section \ref{sec:bs}. 
Here with $X_t = (g_t,h_t)$ we have $u(X_t,X_{t+1}) = g_{t+1}$. 
By iterated expectations we may deduce
\[
 \mb E^{\mu \otimes Q}\left[ e^{c u(X_t,X_{t+1})}\right] = e^{c \nu_g + \frac{c^2 \sigma^2}{2}} \mb E^{\mu} \left[ \exp \left( h_t \left( {\textstyle \exp\left\{c \nu_j + \frac{c^2 \sigma_j^2}{2}\right\} - 1 } \right) \right) \right] \,.
\]
Condition (\ref{e:u}) is violated for this model: the expectation on the right-hand side is only finite if $c$ is in a neighborhood of zero because the stationary distribution of $h_t$ is a Gamma distribution. Note that uniqueness can fail for this model, as illustrated in Section \ref{sec:bs}.

One could modify this specification so that $w_{z,t+1}|j_{t+1} \sim N(\nu_j j_{t+1}^{\varsigma}, \sigma_j^2 )$ for some $\varsigma \in [\frac{1}{2},1)$. Given the low frequency of jumps, this modification is likely difficult to distinguish empirically from the original specification. Under this modification, condition (\ref{e:u}) holds for each $r \in [1, 1/\varsigma)$. Therefore, there is a unique fixed point $v \in E^{\phi_s}$ for all $s \in (1, 1/\varsigma)$, and $v$ is both the smallest fixed point and the unique stable fixed point in $E^{\phi_1}$. \hfill $\square$

\medskip

\paragraph{Example 3: Regime-switching.} Consider the same setup from Example 1 but suppose now that the parameters of the VAR are state-dependent (see, e.g., \cite{Hamilton1989}, \cite{CLM1990,CLM2000}, \cite{HS2010}, and \cite{AngTimmermann}):
\[
 X_{t+1} =  \nu_{s_t} + A_{s_t} X_t + u_{t+1}\,,  \quad u_{t+1} \sim N(0,\Sigma_{s_t}) \,,
\]
where $s_t$ is stationary, exogenous Markov state taking values in $\{1, \ldots,N\}$, and all eigenvalues of $A_s$ are inside the unit circle for each $s = 1,\ldots,N$. The full state vector is now $(X_t,s_t)$, which is Markovian and stationary. The stationary distribution of growth in per-period utilities $u(X_t,X_{t+1})$ is sub-Gaussian (see, e.g., \citeauthor{Vershynin}, \citeyear{Vershynin}, Section 2.5), and so condition (\ref{e:u}) holds for all $r \in [1,2)$. It follows by Theorem \ref{t-id-W}  there is a unique fixed point $v \in E^{\phi_s}$ for all $s \in (1,2)$ (with $E^{\phi_s}$ defined with respect to the stationary distribution of $(X_t,s_t$)), and $v$ is both the smallest fixed point and the unique stable fixed point in $E^{\phi_1}$. \hfill $\square$

\medskip

\paragraph{Example 4: Stochastic volatility.}
Consider the environment from section I.B of \cite{BansalYaron2004} in which consumption growth $g_{t+1}:=\log(C_{t+1}/C_t)$ is modeled as
\begin{align*}
 g_{t+1} & = \bar g + x_t + \sigma_t \eta^g_{t+1} \,, \\
 x_{t+1} & = \rho_x x_t + \varphi_x \sigma_t \eta^x_{t+1} \,, \\
 \sigma_{t+1}^2 & = \bar \sigma^2 + \rho_{\sigma} (\sigma_t^2 - \bar \sigma^2) + \varphi_{\sigma} \eta^{\sigma}_{t+1} \,,
\end{align*}
where $\eta^g_t$, $\eta^x_t$, and $\eta^\sigma_t$ are all i.i.d. $N(0,1)$. We alter this model slightly in two respects. First, to focus on the implications of stochastic volatility and simplify exposition we set $\rho_x = 0$ though this is not essential to our analysis. Second, to deal with the complications arising when $\sigma_{t+1}^2 < 0$ we take absolute values. This leads to the consumption growth process
\begin{align*}
 g_{t+1} & = \bar g + \sqrt{|s_t|}  \eta_{t+1}^g \,, \\
 s_{t+1} & = \bar s + \rho_s(s_t - \bar s) + \varphi_s \eta_{t+1}^s \,,
\end{align*}
where $\eta^g_t$ and $\eta^s_t$ are i.i.d. $N(0,1)$. Defining $X_t = (g_t, s_t)$, we see that $u(X_t,X_{t+1}) = g_{t+1}$ when per-period utility is logarithmic in consumption. To verify condition (\ref{e:u}), first note that 
\begin{equation} \label{eq:sv}
 \mathbb E^{\mu \otimes Q}[\exp(|(g_{t+1} - \bar g)/c|^r) ] = \mb E^\mu[ \mb E[\exp(|\sqrt{|s_t|}  \eta_{t+1}^g/c|^r)|s_t]] \,,
\end{equation} 
where the inner expectation is taken with respect to $\eta_{t+1}^g \sim N(0,1)$. 
The inner expectation is equivalent to $\mathbb E[\exp(Y^r/a^r)]$ where $Y = |Z|$ with $Z \sim N(0,1)$ and $a = c/\sqrt{|s_t|} > 0$. In Appendix \ref{s:proofs} we derive a crude bound on this expectation (see Lemma \ref{lem:hn}) from which we may deduce that for $r \in [1,2)$,
\begin{multline*}
  \mb E\left[ \left. \exp\left(\left|\frac{\sqrt{|s_t|}\eta_{t+1}^g}{c}\right|^r \right) \right|s_t \right] \\
  \leq \frac{\sqrt 2}{\sqrt \pi} \left(  \left(\frac{2\sqrt{|s_t|}^r}{c^r}\right)^\frac{1}{2-r} \exp \left( \frac{(2 |s_t|)^\frac{r}{2-r} }{c^\frac{2r}{2-r}} \right) + \left(\frac{4\sqrt{|s_t|}^r}{c^r}\right)^\frac{1}{2-r} + \sqrt \pi \right) .
\end{multline*}
As the stationary distribution of $s_t$ is Gaussian, the exponent $\frac{r}{2-r}$ of the $|s_t|$ term appearing in the right-hand side exponential must be less than $2$ (equivalently, $r \in [1,4/3)$) so that that the expectation (\ref{eq:sv}) is finite for all $c > 0$. It follows that (\ref{e:u}) holds for all $r \in [1,4/3)$. Therefore, there is a unique fixed point in $v \in E^{\phi_s}$ for all $s \in (1,4/3)$, and $v$ is both the smallest fixed point and the unique stable fixed point in $E^{\phi_1}$. \hfill $\square$

\subsection{Convergence of compact approximations}

While there are many different ways to construct versions of $\mb T$ over truncated state spaces, a natural approach is to simply restrict $\mc X$ to a large but compact set $\mc C$ and rescale the transition density of $X$ accordingly. We close this section by showing that this construction yields an operator $\mb T_{\mc C}$ whose fixed point $v_{\mc C}$ approaches the unique stable fixed point $v$ of $\mb T$ from below as $\mc C$ becomes large. In view of Theorem \ref{t-id-W}(ii), this result implies that $v_{\mc C}$ will \emph{not} converge to any unstable fixed point of $\mb T$ (if unstable fixed points of $\mb T$ do indeed exist).

Let $\mc C \subset \mc X$ be a compact set and define
\[
 \mb T_{\mc C} f(x) = \beta \log \mb E^Q \left[ \left. e^{f(X_{t+1}) + \alpha u(X_t,X_{t+1})} \frac{\ind \{X_{t+1} \in \mc C \}}{Q(\mc C|x)} \right| X_t = x \right] \,, \quad x \in \mc C \,,
\]
where $Q(\mc C|x)$ is the conditional probability (under the un-truncated law of motion $Q$) that $X_{t+1} \in \mc C$ given $X_t = x$ and $\ind\{x \in \mc C\} = 1$ if $x \in \mc C$ and $0$ otherwise. The operator $\mb T_{\mc C}$ is defined by simply truncating the support of $X$ to $\mc C$ and rescaling the transition distribution $Q$ accordingly. Let $B(\mc C)$ denote the space of bounded functions on $\mc C$ under the sup-norm. 

\begin{proposition}\label{prop:cpt-approx}
Let $\sup_{x \in \mc C} |\log \mb E^Q [ e^{\alpha u(X_t,X_{t+1})} {\ind \{X_{t+1} \in \mc C \}}/{Q(\mc C|x)} | X_t = x ]| < \infty$. Then: $\mb T_{\mc C}$ has a unique fixed point $v_{\mc C} \in B(\mc C)$. Moreover, if $\inf_{x \in \mc C}Q(\mc C|x) > 0$ then for any fixed point $v$ of $\mb T$, 
\[
 \inf_{x \in \mc C} \left( v(x) - v_{\mc C}(x) \right) \geq \frac{\beta}{1-\beta} \left( \inf_{x \in \mc C} \log Q(\mc C|x) \right) .
\]
\end{proposition}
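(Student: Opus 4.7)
The plan splits along the two claims. For the first claim (existence and uniqueness of $v_{\mc C}\in B(\mc C)$), I would invoke Blackwell's sufficient conditions. The hypothesis $\sup_{x\in\mc C}|\log\mb E^Q[e^{\alpha u(X_t,X_{t+1})}\ind\{X_{t+1}\in\mc C\}/Q(\mc C|x)\,|\,X_t=x]|<\infty$ directly gives $\mb T_{\mc C}(0)\in B(\mc C)$, and combined with the constant-shift identity $\mb T_{\mc C}(f+\kappa)=\mb T_{\mc C}(f)+\beta\kappa$ for any real constant $\kappa$ (obtained by pulling $e^\kappa$ outside the expectation), this implies $\mb T_{\mc C}$ maps $B(\mc C)$ into itself. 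Monotonicity of $\mb T_{\mc C}$ is immediate from monotonicity of $\exp$, $\log$, and conditional expectation. Blackwell's theorem then yields a contraction of modulus $\beta$ on the complete space $B(\mc C)$, and Banach's fixed-point theorem produces the unique $v_{\mc C}$.

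For the comparison bound, fix any fixed point $v$ of $\mb T$ and $x\in\mc C$. Restricting the expectation in the $\mb T$-recursion to $\{X_{t+1}\in\mc C\}$ gives
\[
 e^{v(x)/\beta}\ge \mb E^Q\big[e^{v(X_{t+1})+\alpha u(X_t,X_{t+1})}\ind\{X_{t+1}\in\mc C\}\,\big|\,X_t=x\big].
\]
On the event $\{X_{t+1}\in\mc C\}$, I would decompose $v(X_{t+1})=v_{\mc C}(X_{t+1})+(v(X_{t+1})-v_{\mc C}(X_{t+1}))$, factor $e^\delta$ outside the expectation where $\delta:=\inf_{y\in\mc C}(v(y)-v_{\mc C}(y))$, and invoke the rearranged fixed-point identity $e^{v_{\mc C}(x)/\beta}Q(\mc C|x)=\mb E^Q[e^{v_{\mc C}(X_{t+1})+\alpha u(X_t,X_{t+1})}\ind\{X_{t+1}\in\mc C\}|X_t=x]$, obtaining $v(x)-v_{\mc C}(x)\ge \beta\delta+\beta\log Q(\mc C|x)$. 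Taking the infimum over $x\in\mc C$ yields $\delta\ge \beta\delta+\beta\inf_{x\in\mc C}\log Q(\mc C|x)$, and solving for $\delta$ gives the stated inequality.

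The main obstacle is ensuring $\delta>-\infty$, since the rearrangement step is vacuous otherwise; because $v_{\mc C}\in B(\mc C)$, this reduces to a uniform lower bound on $v$ over the compact $\mc C$, which is not automatic when $v$ is only known to lie in some Orlicz space. The cleanest workaround is to recast the argument as iteration of $\mb T_{\mc C}$ on $v|_{\mc C}$: the single-step inequality $v|_{\mc C}\ge \mb T_{\mc C}(v|_{\mc C})+\beta\log Q(\mc C|\cdot)$ holds by the same Jensen-free calculation, and by induction using monotonicity and the constant-shift identity one obtains $v|_{\mc C}\ge \mb T_{\mc C}^n(v|_{\mc C})+\big(\sum_{k=1}^n\beta^k\big)\inf_{y\in\mc C}\log Q(\mc C|y)$ for every $n\ge 1$. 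Sending $n\to\infty$, the geometric sum tends to $\beta/(1-\beta)$, and the iterates $\mb T_{\mc C}^n(v|_{\mc C})$ are driven to $v_{\mc C}$ by the contraction from the first part; the remaining delicate point is justifying this last convergence when $v|_{\mc C}$ is not a priori bounded, which can be handled by sandwiching $v|_{\mc C}$ between bounded dominating functions and exploiting monotonicity.
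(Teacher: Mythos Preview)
Your first argument for the comparison bound is exactly the paper's proof: derive the pointwise inequality $v(x)-v_{\mc C}(x)\ge \beta\log Q(\mc C|x)+\beta\delta$ for $x\in\mc C$ (where $\delta=\inf_{y\in\mc C}(v(y)-v_{\mc C}(y))$), take the infimum over $x$, and solve. The only cosmetic difference is that you factor out $e^\delta$ directly, whereas the paper passes through the subgradient inequality $\mb T_{\mc C}v_{|}-\mb T_{\mc C}v_{\mc C}\ge \beta\,\mb E^Q[m_{\mc C,v_{\mc C}}(X_t,X_{t+1})(v_{|}(X_{t+1})-v_{\mc C}(X_{t+1}))\,|\,X_t=\cdot]$ and then bounds the integrand below by $\delta$. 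The Blackwell step for the first claim also matches.

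You are more careful than the paper in flagging the $\delta>-\infty$ issue: the paper simply rearranges $\delta\ge\beta\delta+\beta\inf_{x}\log Q(\mc C|x)$ without comment, which is vacuous if $\delta=-\infty$. However, your iteration-based workaround does not fully escape this difficulty. The ``sandwiching'' you invoke at the end still requires a bounded minorant of $v|_{\mc C}$ in order to feed something in $B(\mc C)$ into the contraction; that is, it again implicitly needs $v|_{\mc C}$ bounded below on $\mc C$, which is precisely the hypothesis that makes your first (and the paper's) argument go through directly. In the paper's applications the fixed points are continuous (affine in the linear-Gaussian examples), so $v|_{\mc C}$ is automatically bounded and the point is moot; the paper appears to be tacitly relying on this.
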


As $\epsilon_{\mc C} := -\frac{\beta}{1-\beta} \left( \inf_{x \in \mc C} \log Q(\mc C|x) \right) > 0$, Proposition \ref{prop:cpt-approx} implies $v_{\mc C}(x) \leq v(x) + \epsilon_{\mc C}$ holds for all $x \in \mc C$. If $\mb T$ has a second (unstable) fixed point $v' \geq v$, then for any subset of $\mc C$ upon which $v'$ and $v$ differ by more than $\epsilon_{\mc C}$, we have $v_{\mc C}(x) \leq v(x) + \epsilon_{\mc C} < v'(x)$. As such, $v_{\mc C}$ cannot converge to $v'$ as $\mc C$ becomes large (i.e., as $\epsilon_{\mc C} \to 0$).

\section{Application 2: Learning and ambiguity}
\label{s:learn}

We now extend the setting from Section \ref{s:robust} to models in which the agent learns about a hidden state, e.g. a regime, stochastic volatility, growth process, or time-varying parameter. This setting is relevant for several types of preferences, including: (i) the extension of multiplier preferences by \cite{HS2007,HS2010} to include concerns about misspecification of beliefs about the hidden state, (ii) generalized recursive smooth ambiguity preferences of \cite{JuMiao} with unit IES, (iii) special cases of recursive smooth ambiguity preferences studied by \cite{KMM2009}, and (iv) \cite{EZ} recursive preferences with unit IES and learning as used, for example, by \cite{CLL}. 

\subsection{Setting}

We again consider environments characterized by a Markov state process $X = \{X_t\}_{t \geq 0}$ with transition kernel $Q$. Partition the state as $X_t = (\varphi_t,\xi_t)$ where the agent observes $\varphi_t$ but does not observe $\xi_t$. Let $\mc O_t = \sigma(\varphi_t,\varphi_{t-1},\ldots,\varphi_0)$ denote the history of the observed state to date $t$. Beliefs about $\xi_t$ are summarized by a posterior distribution $\Pi_t$ conditional on $\mc O_t$. We consider environments in which the continuation value $V_t$ of a stream of per-period utilities $\{U_t\}_{t \geq 0}$ from date $t$ forward is defined recursively as
\begin{equation} \label{e-V-recur-learn}
 V_t = U_t - \beta \theta \log \mb E^{\Pi_t}\! \left[ \left. \mb E^Q \left[ \left.  e^{-\vartheta^{-1} V_{t+1}} \right| \mc O_t,\xi_t \right]^\frac{\vartheta}{\theta}  \right| \mc O_t \right] \,,
\end{equation}
for $\beta \in (0,1)$. 
This recursion is from \cite{HS2007,HS2010}, who introduce an extension of multiplier preferences to accommodate concerns about misspecification of the model ($Q$) and beliefs about the hidden state ($\Pi_t$), where $\vartheta>0$ and $\theta>0$ encode concerns about misspecification of $Q$ and $\Pi_t$, respectively. When $U_t = \log C_t$, recursion (\ref{e-V-recur-learn}) also arises under generalized recursive smooth ambiguity preferences of \cite{JuMiao} with unit IES, where $\theta$ and $\vartheta$ are one-to-one transformations of their ambiguity aversion and risk aversion parameters, respectively. When $\vartheta = \theta$, recursion (\ref{e-V-recur-learn}) reduces to
\[
 V_t = U_t - \beta \vartheta \log \mb E^{\Pi_t}\! \left[ \left. \mb E^Q \left[ \left.  e^{-\vartheta^{-1} V_{t+1}} \right| \mc O_t,\xi_t \right]  \right| \mc O_t \right] \,.
\]
With $U_t = \log C_t$, this recursion corresponds to Epstein--Zin recursive preferences with unit IES and learning about the hidden state. 
In the limit as $\vartheta \to \infty$ (thus, the agent is confident in $Q$ but has doubts about the hidden state) recursion (\ref{e-V-recur-learn}) becomes
\begin{equation} \label{e-V-recur-learn-limit}
 V_t = U_t - \beta \theta \log \mb E^{\Pi_t}\! \left[ \left. e^{-\theta^{-1} \mb E^Q \left[ \left.  V_{t+1}  \right| \mc O_t,\xi_t \right] }  \right| \mc O_t \right] \,.
\end{equation}
This recursion is obtained under recursive smooth ambiguity preferences of \cite{KMM2009}, when their $\phi$ function is $\phi(x)=\exp(-\theta^{-1}x)$. 

We impose several (standard) conditions to make the problem tractable. First, the state is assumed to have a conventional hidden Markov structure, in which
\[
 Q(X_{t+1}|X_t) = Q_\varphi(\varphi_{t+1}|\xi_t)Q_\xi(\xi_{t+1}|\xi_t)\,.
\]
This nests models with regime-switching studied by \cite{JuMiao} as well as models with learning about a hidden growth term as in \cite{HS2007,HS2010}, \cite{CLL} and \cite{CMST}. 
Our analysis extends to allow $\varphi_t$ to influence $\varphi_{t+1}$, but we maintain this simpler presentation for convenience. 

Second, we assume $\Pi_t$ is summarized by a finite-dimensional sufficient statistic $\hat \xi_t$:
\[
 \Pi_t(\xi_t) = \Pi_\xi(\xi_t|\hat \xi_t) 
\]
for some conditional distribution $\Pi_\xi$, where $\hat \xi$ is updated according to a time-invariant rule: 
\[
 \hat \xi_{t+1} = \Xi(\hat \xi_t,\varphi_{t+1})\,.
\]
These conditions are satisfied under Bayesian updating when the state $\xi_t$ takes finitely many values (e.g. a hidden regime) and when $X_t$ evolves as a Gaussian state-space model; see below.
The rule for $\hat \xi_t$ could also represent belief updating in a boundedly-rational way. Let $\hat X_t = (\varphi_t,\hat \xi_t)$ and let $\mc X_{\hat X}$, $\mc X_{\hat \xi}$, and $\mc X_\varphi$ denote the support of $\hat X_t$, $\hat \xi_t$, and $\varphi_t$. 
 
We assume learning is in a ``steady state'', i.e., $\{(\xi_t,\hat X_t) \}_{t \geq 0}$ is stationary. In linear-Gaussian environments, learning corresponds to the Kalman filter. If the filter is not initialized in its steady-state then this process will typically be non-stationary. The stationary problem studied here is a boundary problem representing convergence of the filter to its steady state. Solutions can be obtained by backwards iteration from the steady-state boundary solution.\footnote{A similar approach is taken by \cite{CDJL} in models featuring Epstein--Zin preferences and learning about parameters of the data-generating process.} Uniqueness of the limiting steady state recursion is necessary for uniqueness of the sequence of backward iterates. 

Finally, we require that there exists $v : \mc X_{\hat \xi} \to \mb R$ and $u : \mc X_\varphi \to \mb R$ such that
\begin{align*}
 v(\hat \xi_t) & = -\frac{1}{\theta} \left( V_t - \frac{1}{1-\beta} U_t \right)\,, &
 u(\varphi_{t+1}) & =  U_{t+1} - U_t \,.
\end{align*} 
Before proceeding, we give two examples of environments in which the preceding conditions hold. In both examples, $U_t = \log (C_t)$ and $\log(C_{t+1}/C_t)$ is a function of $\varphi_{t+1}$.

\medskip 

\paragraph{Example 1: Regime switching.}
Suppose that $\xi_t \in \{1,\ldots,N\}$ denotes a hidden Markov state with transition matrix $ \Lambda$. Let the conditional distribution of $\varphi_{t+1}$ given $\xi_t = \xi$ have density $q(\cdot|\xi)$. The posterior $\Pi_t$ is identified with a vector $\hat \xi_t$ of regime probabilities given $\mc O_t$. Beliefs $\hat \xi_t$ are updated as
\[
 \hat \xi_{t+1} =   \Lambda \frac{ q(\varphi_{t+1}) \odot \hat \xi_t}{ 1' ( q(\varphi_{t+1}) \odot \hat \xi_t)} \,,
\]
where $ q(\varphi_{t+1})$ is the $N$-vector whose entries are $q(\varphi_{t+1}|\xi)$ for $\xi \in \{1,\ldots,N\}$, $\odot$ denotes element-wise product, and $ 1$ is a $N$-vector of ones (see, e.g., \citeauthor{Hamilton1994}, \citeyear{Hamilton1994}, Section 4.2). 

For example, \cite{JuMiao} study an economy in which consumption and dividend growth is jointly dependent on a hidden regime $\xi_t$:
\[
 \log(C_{t+1}/C_t)  = \kappa_{\xi_t} +  u_{t+1}^C \,, \quad 
 \log(D_{t+1}/D_t)  = \zeta \log(C_{t+1}/C_t) + g_d + u_{t+1}^D \,,
\]
where $u_t^C$ and $u_t^D$ are i.i.d. $N(0,\sigma_C^2)$ and $N(0,\sigma_D^2)$. The observable state is $\varphi_t = \log(C_t/C_{t-1})$. The stationary distribution of $u(\varphi_{t+1})$ is a finite mixture of Gaussians. Our results also allow the \emph{volatility} of consumption and dividend growth to be state-dependent. \hfill $\square$

\paragraph{Example 2: Gaussian state-space models.}
Suppose $X$ evolves under $Q$ according to:
\[
 \varphi_{t+1}  =  A \xi_t + u_{t+1}^\varphi \,, \quad 
 \xi_{t+1} = B \xi_t + u_{t+1}^\xi \,,
\]
where $u_t^\varphi$ and $u_t^\xi$ are i.i.d. $N(0,\Sigma_u)$ and $N(0,\Sigma_w)$ and all eigenvalues of $B$ are inside the unit circle. This is the setting studied in  \cite{HS2007,HS2010},  \cite{CLL}, \cite{CMST}, and several other works.  If $\xi_0\sim N(\hat \mu_{0},\hat \Sigma_{0})$ under $\Pi_0$ then $\xi_t \sim N(\hat \mu_{t},\hat \Sigma_{t})$ under $\Pi_t$. The matrix $\hat \Sigma_{t}$ will converge to a fixed matrix $\bar \Sigma$ as $t \to \infty$. In this steady state, the sufficient statistic for $\Pi_t$ is $\hat \xi_t = \hat \mu_{t}$ which is updated using 
\[
 \hat \xi_{t+1} = B \hat \xi_{t} + B \bar\Sigma A'(A\bar\Sigma A' + \Sigma_u)^{-1}(\varphi_{t+1}  -  A \hat \xi_{t})\,.
\]
The stationary distribution of $u(\varphi_t)$ is Gaussian. \hfill $\square$

\subsection{Existing results}

The only related existence and uniqueness result we are aware of in any of these setting is that of \cite{KMM2009} for recursive smooth ambiguity preferences (recursion (\ref{e-V-recur-learn-limit})). Their result applies to bounded functions and requires bounded per-period utilities.

\subsection{New results}
 
Recursion (\ref{e-V-recur-learn}) may be reformulated as the fixed-point equation $v = \mb T v$ where
\[
 \mb T f(\hat \xi_t) = \beta \log \mb E^{\Pi_\xi}\! \left[ \left. \mb E^{Q_\varphi} \left[ \left.  e^{\frac{\theta}{\vartheta} f(\Xi(\hat \xi_t,\varphi_{t+1})) + \alpha u(\varphi_{t+1})} \right| \xi_t,\hat \xi_t \right]^\frac{\vartheta}{\theta}  \right| \hat \xi_t \right] .
\]
Recursion (\ref{e-V-recur-learn-limit}) in the limiting case with $\vartheta = +\infty$ may be reformulated as $v = \mb T v$ where
\[
 \mb T f(\hat \xi_t) = \beta \log \mb E^{\Pi_\xi}\! \left[ \left.   e^{ \mb E^{Q_\varphi} \left[ \left.  f(\Xi(\hat \xi_t,\varphi_{t+1})) + \alpha u(\varphi_{t+1})  \right| \xi_t,\hat \xi_t \right] }   \right| \hat \xi_t \right] \,.
\]
The existence and uniqueness results presented below apply to either case, though the proofs are presented only for the more involved setting in which $\vartheta < \infty$.

Let $E^{\phi_r}_{\hat X}$ be defined relative to the stationary distribution $\mu$ of $\hat X_t = (\varphi_t',\hat \xi_t')'$. Similarly, let $E^{\phi_r}_\varphi \subset E^{\phi_r}_{\hat X}$ and $E^{\phi_r}_{\hat \xi} \subset E^{\phi_r}_{\hat X}$ denote functions in $E^{\phi_r}_{\hat X}$ depending only on $\varphi$ or $\hat \xi$, respectively. The key regularity condition is again that the stationary distribution of utility growth has thin tails:
\begin{equation}\label{e:u-learn}
 u \in E^{\phi_r}_\varphi 
\end{equation}
for some $r \geq 1$.
Note that this condition depends only on the marginal distribution of the observed state and is therefore easy to verify.

We establish existence and uniqueness of fixed points of $\mb T$ by applying Proposition \ref{p:exun}. 
Further details on the form of the subgradient and verification of Lemma \ref{lem:com} are deferred to Appendix \ref{s:proofs-learn}.

\begin{theorem}\label{t-id-W-learn}
Let condition (\ref{e:u-learn}) hold. Then: $\mb T$ has a fixed point $v \in E^{\phi_r}_{\hat \xi}$. Moreover, if $r > 1$, then: (i) $v$ is the unique fixed point of $\mb T$ in $ E^{\phi_s}_{\hat \xi}$ for all $s \in (1, r]$, and (ii)  $v$ is both the smallest fixed point and the unique stable fixed point of $\mb T$ in $E^{\phi_1}_{\hat \xi}$.
\end{theorem}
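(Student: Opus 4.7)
The plan is to follow the template of Theorem~\ref{t-id-W}, deploying Proposition~\ref{p:exun}, Corollary~\ref{cor:nonunique-gen}, and Lemma~\ref{lem:com}, while paying extra bookkeeping attention to the nested conditional expectations that define $\mb T$.

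First I would verify that $\mb T$ maps $E^{\phi_r}_{\hat\xi}$ into itself and is continuous and monotone. Monotonicity is immediate because every operation defining $\mb T$ preserves pointwise order. To show that $\mb T f \in E^{\phi_r}_{\hat\xi}$ whenever $f \in E^{\phi_r}_{\hat\xi}$, I would use condition~(\ref{e:u-learn}) together with H\"older's inequality to control the compositions of exponentials, powers, and conditional expectations, obtaining growth that is at most linear in the relevant Orlicz norms. Existence of a fixed point $v \in E^{\phi_r}_{\hat\xi}$ then follows from Proposition~\ref{p:exun}(i), by exhibiting an upper function $\bar v \in E^{\phi_r}_{\hat\xi}$ with $\mb T \bar v \leq \bar v$ and showing that $\{\mb T^n \bar v\}_{n \geq 1}$ is bounded below, in parallel to the proof of Theorem~\ref{t-id-W}.

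For uniqueness, I would derive a subgradient of $\mb T$ at any fixed point $v$ by two successive applications of Jensen's inequality to the two nested logarithms. Let $G_v(\xi_t,\hat\xi_t) = \mb E^{Q_\varphi}[e^{(\theta/\vartheta) v(\Xi(\hat\xi_t,\varphi_{t+1})) + \alpha u(\varphi_{t+1})}|\xi_t,\hat\xi_t]$, and define the inner density $m^{(2)}_v = e^{(\theta/\vartheta)v(\Xi(\hat\xi_t,\varphi_{t+1})) + \alpha u(\varphi_{t+1})}/G_v$ and the outer density $m^{(1)}_v = G_v^{\vartheta/\theta}/\mb E^{\Pi_\xi}[G_v^{\vartheta/\theta}|\hat\xi_t]$, both normalized to integrate to one under their respective measures. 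Writing
\[
 \mb T g - \mb T v = \beta \log \mb E^{\Pi_\xi}\!\left[ m^{(1)}_v \, \mb E^{Q_\varphi}\!\left[ m^{(2)}_v \, e^{(\theta/\vartheta)(g-v)(\Xi(\hat\xi_t,\varphi_{t+1}))} \,\Big|\, \xi_t,\hat\xi_t \right]^{\vartheta/\theta} \,\bigg|\, \hat\xi_t \right],
\]
Jensen's inequality applied to each $\log$ produces factors of $\vartheta/\theta$ and $\theta/\vartheta$ that cancel, yielding
\[
 \mb T g - \mb T v \geq \beta\, \mb E^{\Pi_\xi}\!\left[ m^{(1)}_v \, \mb E^{Q_\varphi}\!\left[ m^{(2)}_v \, (g-v)(\Xi(\hat\xi_t,\varphi_{t+1})) \,\Big|\, \xi_t,\hat\xi_t \right] \,\bigg|\, \hat\xi_t \right] =: \mb D_v(g-v)(\hat\xi_t).
\]
The operator $\mb D_v$ is thus a $\beta$-discounted conditional expectation on the Markov chain $\hat X_t$ under the distorted kernel with density $m_v := m^{(1)}_v m^{(2)}_v$, i.e.\ of the form~(\ref{e:rn-sufficient}), and it is monotone because $m^{(1)}_v, m^{(2)}_v \geq 0$.

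The main obstacle, and the core of the argument, is to verify the thin-tail condition~(\ref{e:m-thin}) on $m_v$ so that Lemma~\ref{lem:com} can be applied. The log-likelihoods decompose as $\log m^{(2)}_v = (\theta/\vartheta) v(\Xi(\hat\xi_t,\varphi_{t+1})) + \alpha u(\varphi_{t+1}) - \log G_v$ and $\log m^{(1)}_v = (\vartheta/\theta)\log G_v - \log \mb E^{\Pi_\xi}[G_v^{\vartheta/\theta}|\hat\xi_t]$, so their positive parts are controlled by $v$ and $u$, which lie in $E^{\phi_r}_{\hat\xi}$ and $E^{\phi_r}_\varphi$ by hypothesis, together with absolute values of two log-MGF--type normalizers. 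The delicate step is to show that these normalizers admit $L^{\phi_r}$ envelopes under $\mu \otimes Q_\varphi$, using Jensen's and H\"older's inequalities along with the embeddings between $E^{\phi_r}$ and $L^p$ recalled in Section~\ref{s:orlicz}. Once (\ref{e:m-thin}) is verified, Lemma~\ref{lem:com} yields $\rho(\mb D_v; E^{\phi_s}_{\hat\xi}) < 1$ for every $s \geq 1$. Proposition~\ref{p:exun}(ii) then delivers uniqueness of $v$ in $E^{\phi_s}_{\hat\xi}$ for each $s \in (1, r]$, and Corollary~\ref{cor:nonunique-gen} identifies $v$ as both the smallest and the unique stable fixed point in $E^{\phi_1}_{\hat\xi}$.
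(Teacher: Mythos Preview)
Your proposal is correct and follows essentially the same route as the paper: verify continuity/monotonicity/convexity of $\mb T$ on $E^{\phi_s}_{\hat\xi}$, construct an upper function $\bar v$ and lower bound the iterates to invoke Proposition~\ref{p:exun}(i), then identify the subgradient as $\beta$ times the doubly-distorted conditional expectation with density $m_v = m_v^{(1)} m_v^{(2)}$ and appeal to Lemma~\ref{lem:com} for the spectral radius, finishing with Proposition~\ref{p:exun}(ii) and Corollary~\ref{cor:nonunique-gen}. The only wrinkle you do not anticipate is that the construction of $\bar v$ and the verification of $\mb T\bar v \le \bar v$ require a case split according to whether $\vartheta/\theta \le 1$ or $\vartheta/\theta > 1$, since the direction of Jensen's inequality for the power $(\,\cdot\,)^{\vartheta/\theta}$ flips; this is a routine adjustment but does not transfer verbatim from Theorem~\ref{t-id-W}.
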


\medskip

\paragraph{Example 1: Regime switching (continued).}
In the example of \cite{JuMiao}, the stationary distribution of $u(\varphi_{t+1})$ is a finite mixture of Gaussians, so (\ref{e:u-learn}) holds for all $r \in [1, 2)$, including when the volatility of consumption and dividend growth is state-dependent. Therefore, there is a unique fixed point in $v \in E^{\phi_s}_{\hat \xi}$ for all $s \in (1,2)$, and $v$ is both the smallest fixed point and the unique stable fixed point in $E^{\phi_1}_{\hat \xi}$. \hfill $\square$

\medskip

\paragraph{Example 2: Gaussian state-space models (continued).}
Here the stationary distribution of $u(\varphi_{t+1})$ is Gaussian, so (\ref{e:u-learn}) holds for all $r \in [1, 2)$. Therefore, there is a unique fixed point in $v \in E^{\phi_s}_{\hat \xi}$ for all $s \in (1,2)$, and $v$ is both the smallest fixed point and the unique stable fixed point in $E^{\phi_1}_{\hat \xi}$. \hfill $\square$

\medskip

It is straightforward (albeit more cumbersome notationally) to extend the preceding analysis to allow for $u$ to depend on $(\varphi_t,\varphi_{t+1})$ and to allow the law of motion to be of the more general form
\[
 Q(X_{t+1}|X_t) = Q_\varphi(\varphi_{t+1}|\xi_t,\varphi_t)Q_\xi(\xi_{t+1}|\xi_t)\,.
\]
In this case, however, the effective state vector will be $\hat X_t$ rather than $\hat \xi_t$.

\section{Application 3: Epstein--Zin preferences}
\label{s:ez}

In this section we study \cite{EZ} recursive utility with IES $\neq 1$. Existence and uniqueness when state variables have non-compact support is of particular importance as many prominent models, such as those in the long-run risks literature, have non-compact state space.  There are currently no uniqueness results for the recursion we study with non-compact state space. This is a complicated issue and it is beyond the scope of the paper to provide a comprehensive treatment. Rather, we show how our approach may be used to derive primitive existence conditions in empirically relevant settings.

\subsection{Setting}

The continuation value $V_t$ of the agent's consumption plan from time $t$ forward solves
\[
 V_t = \left\{ (1-\beta) (C_t)^{1-\rho} + \beta \mb E[(V_{t+1})^{1-\gamma}| \mc F_t]^\frac{1-\rho}{1-\gamma} \right\}^\frac{1}{1-\rho} \,,
\]
where $C_t$ is date-$t$ consumption, $\mathcal F_t$ is date-$t$ information, $\gamma \in (0,1) \cup (1,\infty)$ is the coefficient of relative risk aversion, and $1/\rho>0$ is the elasticity of intertemporal substitution. 

We consider the $\rho \neq 1$ case in this section as the $\rho = 1$ case is studied in Section \ref{s:robust}. 
We again consider environments characterized by a stationary Markov process $X = \{X_t : t \geq 0\}$ with state space $\mathcal X \subseteq \mb R^d$. Let $Q$ denote the Markov transition kernel and $\mb E^Q$ denote conditional expectation under $Q$. Also let  $\log(C_{t+1}/C_t) = g(X_t,X_{t+1})$ for some function $g$.\footnote{Our results trivially extend to allow $\log(C_{t+1}/C_t) = g(X_t,X_{t+1},Y_{t+1})$ where the conditional distribution of $(X_{t+1},Y_{t+1})$ given $(X_t,Y_t)$ depends only on $X_t$ by redefining the state as $(X_t,Y_t)$.} Then $(1-\rho)\log(V_t/C_t) =: v(X_t)$, where $v$ solves  
\begin{equation} \label{e:ez} 
 v(x) =  \log \left( (1-\beta) + \beta  \mb E^Q\left[ \left. e^{ \kappa v(X_{t+1}) + (1-\gamma) g(X_t,X_{t+1}) }  \right| X_t  = x \right]^\frac{1}{\kappa} \right) 
\end{equation}
with $\kappa = \frac{1-\gamma}{1-\rho}$ (see, e.g., \cite{HHL}). The properties of this recursion are different for $\kappa < 0$, $\kappa \in (0,1)$, and $\kappa \in [1,\infty)$. We focus on the case $\kappa < 0$, as it is the pertinent case in the long-run risks literature where typically $\gamma > 1$ and $1/\rho > 1$. 

\subsection{Existing results}

\cite{EZ} and \cite{MarinacciMontrucchio2010} derived sufficient conditions for existence and uniqueness when consumption growth is bounded. \cite{AlvarezJermann2005} establish existence and uniqueness when consumption growth is i.i.d. with bounded innovations. \cite{GuoHe} establish sufficient conditions for existence and uniqueness with finite state space. \citeauthor{BorovickaStachurski} (\citeyear{BorovickaStachurski}; BS hereafter) present necessary and sufficient conditions for existence when $\mc X$ is compact (under additional side conditions on $Q$). Our results below and those of BS are non-nested if $\mc X$ is compact: we do not impose any side conditions on $Q$, but we also do not establish uniqueness in the compact case.

\citeauthor{HS2012} (\citeyear{HS2012}; HS hereafter) and BS establish existence with unbounded $\mc X$ when $\kappa < 0$.\footnote{\cite{HS2012} and \cite{RenStachurski} establish uniqueness when $\kappa \geq 1$.} 
We also only present sufficient conditions for existence because the operator does not have a subgradient of the form studied in Section \ref{s:sr}. Connections between our conditions and those in HS and BS are discussed in more detail below.

\subsection{New results}

Under general conditions (see \cite{HansenScheinkman2009} and \cite{Christensen2015,npsdfd}), there exists a strictly positive function $\iota$ and scalar $\lambda > 0$ solving\footnote{Note the function $\iota$ is defined only up to scale normalization.} the equation
\begin{equation} \label{e:efn}
 \lambda \iota(x) = \mb E^Q[ \iota(X_{t+1}) (C_{t+1}/C_t)^{1-\gamma} |X_t = x] \,.
\end{equation} 
\cite{HansenScheinkman2009} use $\iota$ and $\lambda$ to define a distorted conditional expectation operator
\[
 \tilde{\mb E} f(x) =  \mb E^Q \left[ \left.  \frac{\iota(X_{t+1}) (C_{t+1}/C_t)^{1-\gamma}}{\lambda \iota(X_t)} f(X_{t+1}) \right| X_t = x \right]  \,.
\]
HS show that solving (\ref{e:ez}) is equivalent to finding a fixed point of
\begin{equation} \label{e:T-ez-1}
 \mb T f(x) = \log \left( (1-\beta) \iota(x)^{-\frac{1}{\kappa}} + \beta \lambda^{\frac{1}{\kappa}} \tilde{\mb E}[ e^{\kappa f(X_{t+1})} | X_t = x]^{\frac{1}{\kappa}} \right) \,,
\end{equation}
with the solution to recursion (\ref{e:ez}) and the fixed point of $\mb T$ differing additively by $\frac{1}{\kappa} \log \iota$.\footnote{The version of recursion (\ref{e:efn}) above appears on p. 11968 of HS. In our notation, their recursion is $\hat{\mb U}g(x) = (1-\beta) \iota(x)^{-\frac{1}{\kappa}} + \beta \lambda^{\frac{1}{\kappa}} \tilde{\mb E}[ g(X_{t+1})^\kappa | X_t = x]^{\frac{1}{\kappa}}$. Recursion (\ref{e:efn}) is obtained by setting $\mb T f = \log (\hat{\mb U} (\exp(f)))$.} 

We follow HS and assume $X$ is stationary under the law of motion corresponding to the distorted conditional expectation $\tilde{\mb E}$. Let $\tilde \mu$ denote the stationary distribution induced by $\tilde{\mb E}$ and let $\tilde E^{\phi_r}$ denote the corresponding Orlicz heart defined using $\tilde \mu$. 
Our first regularity condition requires that $\log \iota$ has thin tails, in the sense that
\begin{equation} \label{e:ez-phi-1}
 \log \iota \in \tilde E^{\phi_r} \quad \mbox{for some $r \geq 1$.}
\end{equation}
Under this condition, Lemma \ref{lem:T-prop-ez} shows that $\mb T$ is a continuous, monotone operator on $\tilde E^{\phi_s}$ for each $1 \leq s \leq r$. It is clear that $\mb T v \geq \log ( (1-\beta) \iota(x)^{-\frac{1}{\kappa}})$.  Therefore, should there exist a $\bar v \in \tilde E^{\phi_r}$ for which $\mb T \bar v \leq \bar v$, the sequence of iterates $\mb T^n \bar v$ must be bounded from below. The remainder of the proof shows that the inequality $\mb T \bar v \leq \bar v$ holds for the function
\[
 \bar v(x) = \log \left(  (1-\beta)\sum_{n=0}^\infty (\beta \lambda^{\frac{1}{\kappa}})^n \tilde{\mb E}^n(\iota^{-\frac{1}{\kappa}})(x) \right) \,.
\]
The sum is convergent under the  eigenvalue condition from \cite{HS2012}:
\begin{equation} \label{e:eval}
 \beta \lambda^{\frac{1}{\kappa}} < 1 \,.
\end{equation}

\begin{remark}
Although $\mb T$ is not contractive, it follows from Proposition \ref{p:exun}(i) that the sequence of iterates $\bar v, \mb T \bar v, \mb T^2 \bar v, \ldots$ will converge to a fixed point of $\mb T$ under the conditions of Theorem \ref{t:ez-exist-1} and Corollary \ref{c:ez-exist-2} below. The same is true for the sequence of iterates $\ul v, \mb T \ul v, \mb T^2 \ul v, \ldots$ with $\ul v (x) = \log(1-\beta) - \kappa^{-1} \log \iota(x)$.
\end{remark}

\begin{theorem}\label{t:ez-exist-1}
Let $X$ be stationary under the law of motion corresponding to the distorted conditional expectation $\tilde{\mb E}$, $\kappa < 0$, and conditions (\ref{e:ez-phi-1}) and (\ref{e:eval}) hold. Then: $\mb T$ has a fixed point in $\tilde E^{\phi_s}$ and therefore the recursion (\ref{e:ez}) has a solution $v \in \tilde E^{\phi_s}$ for all $s \in [1,r]$.
\end{theorem}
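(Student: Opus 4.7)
The plan is to apply Proposition \ref{p:exun}(i)(a) to the candidate upper value $\bar v$ specified in the excerpt. The paper has already flagged that, under condition (\ref{e:ez-phi-1}), Lemma \ref{lem:T-prop-ez} delivers continuity and monotonicity of $\mb T$ on $\tilde E^{\phi_s}$ for each $1 \leq s \leq r$, so the work reduces to producing an upper value $\bar v \in \tilde E^{\phi_s}$ with $\mb T \bar v \leq \bar v$ and to showing that the iterates $\{\mb T^n \bar v\}_{n \geq 1}$ are bounded from below by an element of $\tilde E^{\phi_s}$. The latter is immediate: since $\kappa < 0$, both summands inside the $\log$ defining $\mb T f$ are strictly positive, hence $\mb T f(x) \geq \log((1-\beta)\iota^{-1/\kappa}(x)) = \log(1-\beta) - \kappa^{-1}\log \iota(x) =: \ul v(x)$ for every $f$, and $\ul v \in \tilde E^{\phi_s}$ follows from $\log \iota \in \tilde E^{\phi_r} \subseteq \tilde E^{\phi_s}$.

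The inequality $\mb T \bar v \leq \bar v$ is a direct computation, which I would carry out in two moves. First, because $\kappa < 0$ the map $y \mapsto y^\kappa$ is convex on $(0,\infty)$, so Jensen's inequality gives $\tilde{\mb E}[e^{\kappa \bar v(X_{t+1})}|X_t = x] \geq \tilde{\mb E}[e^{\bar v(X_{t+1})}|X_t = x]^{\kappa}$; raising both sides to the (negative) power $1/\kappa$ reverses the inequality and yields
\[
 \tilde{\mb E}[e^{\kappa \bar v(X_{t+1})}|X_t = x]^{1/\kappa} \leq \tilde{\mb E}[e^{\bar v(X_{t+1})}|X_t = x].
\]
Second, writing $H(x) = \sum_{n \geq 0}(\beta\lambda^{1/\kappa})^n \tilde{\mb E}^n(\iota^{-1/\kappa})(x)$ so that $e^{\bar v(x)} = (1-\beta) H(x)$, an index-shift using the Markov property yields $\tilde{\mb E}[H(X_{t+1})|X_t = x] = (\beta \lambda^{1/\kappa})^{-1}\big(H(x) - \iota^{-1/\kappa}(x)\big)$. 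Substituting these into the definition of $\mb T \bar v$ and simplifying gives $\mb T \bar v(x) \leq \log\big((1-\beta)\iota^{-1/\kappa}(x) + (1-\beta)(H(x) - \iota^{-1/\kappa}(x))\big) = \log((1-\beta) H(x)) = \bar v(x)$.

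The main obstacle, in my view, is the verification that $\bar v$ actually belongs to $\tilde E^{\phi_s}$. The key inputs are the eigenvalue condition (\ref{e:eval}) $\beta \lambda^{1/\kappa} < 1$ and the fact that $\tilde\mu$ is stationary under $\tilde{\mb E}$, which by Jensen gives the contraction-in-norm property $\|\tilde{\mb E}^n f\|_{L^p(\tilde\mu)} \leq \|f\|_{L^p(\tilde\mu)}$ for every $p \geq 1$. Moreover, since $\log \iota \in \tilde E^{\phi_r}$ with $r \geq 1$, $\mb E^{\tilde \mu}[\exp(c|\log \iota|)] < \infty$ for every $c > 0$, whence $\iota^{-1/\kappa} = \exp(-\kappa^{-1}\log \iota) \in L^p(\tilde\mu)$ for all $p \geq 1$. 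These two facts combine to give $\|H\|_{L^p(\tilde\mu)} \leq (1-\beta\lambda^{1/\kappa})^{-1}\|\iota^{-1/\kappa}\|_{L^p(\tilde\mu)} < \infty$. To pass from $H \in L^p(\tilde\mu)$ for all $p$ to $\bar v = \log((1-\beta)H) \in \tilde E^{\phi_s}$, I would decompose $|\bar v|$ into its positive and negative parts: the negative part is controlled by the pointwise lower bound $\bar v \geq \ul v$, so is dominated by an element of $\tilde E^{\phi_r}$; the positive part satisfies $(\log H)^+ \leq H^{1/p}/(1/p)$ for any $p > 0$ (by $\log y \leq y^{1/p}/(1/p)$), so using the $L^p$ bounds and the sub-polynomial growth of $\log$ one can make $\mb E^{\tilde \mu}[\exp(|(\log H)^+|^s/c)]$ finite for every $c > 0$ by taking $p$ large enough. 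With both parts controlled, $\bar v \in \tilde E^{\phi_s}$.

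Once $\bar v \in \tilde E^{\phi_s}$, $\mb T \bar v \leq \bar v$, and $\mb T^n \bar v \geq \ul v$ are in hand, Proposition \ref{p:exun}(i)(a) produces a fixed point $v_\star \in \tilde E^{\phi_s}$ of $\mb T$, and then (as noted in the excerpt) $v_\star - \kappa^{-1}\log \iota$ solves recursion (\ref{e:ez}) and remains in $\tilde E^{\phi_s}$ because $\log \iota \in \tilde E^{\phi_r} \subseteq \tilde E^{\phi_s}$.
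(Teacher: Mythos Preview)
Your overall architecture matches the paper's: invoke Lemma \ref{lem:T-prop-ez} for continuity and monotonicity, note the uniform lower bound $\ul v(x) = \log(1-\beta) - \kappa^{-1}\log\iota(x) \in \tilde E^{\phi_s}$, and verify $\mb T \bar v \leq \bar v$ by Jensen plus an index shift. Those steps are correct and essentially identical to the paper's.

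The gap is in your verification that $\bar v \in \tilde E^{\phi_s}$ when $s > 1$. Your route through $L^p$ bounds on $H$ and the inequality $(\log H)^+ \leq p\,H^{1/p}$ does not close there: it yields $\exp\big((\log H)_+^s/c\big) \leq \exp\big(p^s H^{s/p}/c\big)$, and integrability of the right-hand side requires \emph{exponential} moments of a positive power of $H$, which is strictly stronger than $H \in L^p(\tilde\mu)$ for all $p$. A lognormal $H$ already shows the failure: every $L^p$ moment is finite, yet $\mb E[\exp(K H^\alpha)] = +\infty$ for every $K,\alpha > 0$. Your bound is adequate only for $s = 1$, where $\exp((\log H)_+/c) \leq 1 + H^{1/c}$ reduces the question to a single $L^{1/c}$ moment; for $s > 1$ ``taking $p$ large enough'' does not help, and the hand-wave to ``sub-polynomial growth of $\log$'' is not an argument. (One \emph{can} rescue this route by optimizing the Markov bound $\mb P(H>e^t)\leq e^{-pt}\mb E[H^p]$ over $p$ using quantitative growth of $\|\iota^{-1/\kappa}\|_{L^p}$ in $p$ coming from $\log\iota\in\tilde E^{\phi_r}$, in the spirit of Lemma \ref{lem:msubexp}; but that is real work you have not done.)

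The paper sidesteps all of this with a one-step convexity trick: the map $y \mapsto \exp(|(\log y)/c|^r)$ is convex on $(0,\infty)$ for $c \in (0,1]$. Writing $(1-\beta)H = \sum_{n\geq 0} w_n\,a_n$ with probability weights $w_n = (1-\beta\lambda^{1/\kappa})(\beta\lambda^{1/\kappa})^n$ (using (\ref{e:eval})) and $a_n = (1-\beta)(1-\beta\lambda^{1/\kappa})^{-1}\tilde{\mb E}^n(\iota^{-1/\kappa})$, two applications of Jensen's inequality---once to the convex combination in $n$, once to $\tilde{\mb E}^n$---push the nonlinearity all the way inside, and stationarity of $\tilde\mu$ under $\tilde{\mb E}$ collapses the sum to the single term $\mb E^{\tilde\mu}\big[\exp\big(|\log((1-\beta)(1-\beta\lambda^{1/\kappa})^{-1}\iota^{-1/\kappa})/c|^r\big)\big]$, finite directly by $\log\iota \in \tilde E^{\phi_r}$. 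This handles all $s \in [1,r]$ uniformly and avoids any moment-growth analysis.
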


Condition (\ref{e:eval}) is the eigenvalue condition under which HS establish existence in $L^1(\tilde \mu)$. BS showed this condition is necessary for existence (under some additional operator-theoretic side conditions). 
Condition (\ref{e:ez-phi-1}) is stronger than the integrability conditions imposed on $\iota$ in Assumptions 4 and 5 of HS. However, this condition does not seem to bite for models commonly encountered (see the linear-Gaussian example below) and also ensures that the stochastic discount factor (SDF)
\begin{equation} \label{e:sdf}
 \beta {\left(\frac{C_{t+1}}{C_t}\right)\!\!}^{-\rho} \left[ \frac{V_{t+1}^{1-\gamma}}{\mb E^Q[V_{t+1}^{1-\gamma}|\mathcal F_t]} \right]^\frac{\rho-\gamma}{1-\gamma} \equiv \beta e^{-\rho g(X_t,X_{t+1})}\left[ \frac{e^{ \kappa v(X_{t+1}) + (1-\gamma) g(X_t,X_{t+1}) } }{\mb E^Q[e^{ \kappa v(X_{t+1}) + (1-\gamma) g(X_t,X_{t+1}) } |X_t]} \right]^\frac{\rho-\gamma}{1-\gamma}
\end{equation}
is well defined provided consumption growth $g$ has sufficiently thin tails.

Theorem \ref{t:ez-exist-1} has implications for existence in spaces defined relative to the stationary distribution $\mu$ of $X$. Suppose that $\tilde \mu$ and $\mu$ are mutually absolutely continuous and let $\Delta = \frac{\mr d \tilde \mu}{\mr d \mu}$ denote the change of measure of $\tilde \mu$ with respect to $\mu$. Consider the thin-tail condition
\begin{equation} \label{e:rn}
 \mb E^\mu[\Delta(X_t)^{1 + \varepsilon}] < \infty \quad \mbox{and} \quad \mb E^\mu[\Delta(X_t)^{-\varepsilon}] < \infty \quad \mbox{for some $\varepsilon > 0$.}
\end{equation} 
A sufficient condition for (\ref{e:rn}) is that $\log \Delta \in L^{\phi_1}$. The spaces $\tilde E^{\phi_r}$ (defined using $\tilde \mu$) and $E^{\phi_r}$ (defined using $\mu$) are equivalent under condition (\ref{e:rn}); see Lemma \ref{lem:embed:0}. We may therefore restate condition (\ref{e:ez-phi-1}) as
\begin{equation} \label{e:ez-phi-2}
 \log \iota \in E^{\phi_r} \quad \mbox{for some $r \geq 1$.}
\end{equation}

\begin{corollary}\label{c:ez-exist-2}
Let $X$ be stationary under the law of motion corresponding to the distorted conditional expectation $\tilde{\mb E}$, $\kappa < 0$, and conditions (\ref{e:eval}), (\ref{e:rn}), and (\ref{e:ez-phi-2}) hold. Then: $\mb T$ has a fixed point in $E^{\phi_s}$ and therefore the recursion (\ref{e:ez}) has a solution $v \in E^{\phi_s}$ for all $s \in [1,r]$.
\end{corollary}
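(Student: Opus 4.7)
The plan is to reduce the corollary directly to Theorem \ref{t:ez-exist-1} by showing that the two Orlicz hearts $E^{\phi_s}$ (built from $\mu$) and $\tilde E^{\phi_s}$ (built from $\tilde\mu$) coincide as sets for each $s \in [1,r]$ under condition (\ref{e:rn}). Once this set-theoretic equivalence is established, condition (\ref{e:ez-phi-2}) is precisely condition (\ref{e:ez-phi-1}), and any fixed point produced in $\tilde E^{\phi_s}$ by Theorem \ref{t:ez-exist-1} automatically lies in $E^{\phi_s}$ as well.

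To verify the equivalence---which the paper records as Lemma \ref{lem:embed:0}---I would use a two-sided H\"older argument. For the inclusion $E^{\phi_s}\subseteq \tilde E^{\phi_s}$, fix $f \in E^{\phi_s}$ and $c > 0$, and apply H\"older's inequality to $\mr d\tilde\mu = \Delta\,\mr d\mu$ with conjugate exponents $p = 1+\varepsilon$ and $q = (1+\varepsilon)/\varepsilon$:
\[
 \mb E^{\tilde\mu}\!\bigl[\exp(|f|^s/c)\bigr] \leq \mb E^\mu\!\bigl[\Delta^{1+\varepsilon}\bigr]^{1/(1+\varepsilon)}\, \mb E^\mu\!\bigl[\exp(q|f|^s/c)\bigr]^{1/q}.
\]
The first factor is finite by (\ref{e:rn}), and the second is finite because $f \in E^{\phi_s}$ gives $\mb E^\mu[\exp(|f|^s/c')] < \infty$ for every $c' > 0$, in particular for $c' = c/q$. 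For the reverse inclusion $\tilde E^{\phi_s}\subseteq E^{\phi_s}$, I would exploit the identity $\mb E^\mu[\Delta^{-\varepsilon}] = \mb E^{\tilde\mu}[\Delta^{-(1+\varepsilon)}]$, which is again a direct change-of-measure calculation, to obtain $\Delta^{-1}\in L^{1+\varepsilon}(\tilde\mu)$, and then run the same H\"older bound in the opposite direction against $\mr d\mu = \Delta^{-1}\,\mr d\tilde\mu$.

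With the equivalence in hand, Theorem \ref{t:ez-exist-1} applies under (\ref{e:eval}) and (\ref{e:ez-phi-1}), so $\mb T$ has a fixed point $v \in \tilde E^{\phi_s}$ and recursion (\ref{e:ez}) has a solution in $\tilde E^{\phi_s}$ for every $s \in [1,r]$. The set equality then returns both into $E^{\phi_s}$. The only place the argument could meet resistance is in the H\"older bookkeeping of the first step (and in checking that the $\varepsilon$ in (\ref{e:rn}) need not match the one used for $\iota$, which is handled by the freedom to rescale $c$), but no new fixed-point construction is required: the substantive content lives entirely in Theorem \ref{t:ez-exist-1}, and the corollary is essentially a change-of-measure restatement of it.
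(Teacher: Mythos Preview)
Your proposal is correct and matches the paper's approach: the paper's proof is the single line ``Immediate from Theorem \ref{t:ez-exist-1} and Lemma \ref{lem:embed:0},'' and your argument simply unpacks the H\"older bound behind Lemma \ref{lem:embed:0} (applied in both directions via condition (\ref{e:rn})) before invoking Theorem \ref{t:ez-exist-1}. The only quibble is your aside about ``the $\varepsilon$ \ldots\ used for $\iota$,'' which is a non-issue since condition (\ref{e:ez-phi-2}) involves no $\varepsilon$; otherwise the reduction is exactly what the paper does.
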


\medskip

\paragraph{Example: Linear-Gaussian environments.} Consider an environment studied in Section I.A of \cite{BansalYaron2004}, \cite{HHL}, and \cite{BKSY2014}, amongst others, where $X$ evolves as a stationary Gaussian VAR(1):
\[
 X_{t+1} = \nu + AX_t + u_{t+1}\,, \quad u_t \sim N(0,\Sigma)\,,
\]
with all eigenvalues of $A$ inside the unit circle, and $g(X_t,X_{t+1}) = \delta'X_{t+1}$ 
for some vector $\delta$ (this is trivially true if log consumption growth is itself a component of $X_t$). Solving (\ref{e:efn}), 
\begin{align*}
 \iota(x) & = e^{ (1-\gamma) \delta' A (I - A)^{-1} x} \,, & 
 \lambda & = e^{ \frac{(1-\gamma)^2}{2}\delta'(I-A)^{-1} \Sigma (I-A')^{-1} \delta + (1-\gamma) \delta'(I-A)^{-1} \nu } \,.
\end{align*}
To apply Corollary \ref{c:ez-exist-2} we must verify conditions (\ref{e:eval}), (\ref{e:rn}), and (\ref{e:ez-phi-2}). 
To verify condition (\ref{e:rn}), first note 
\[
 \frac{\iota(X_{t+1}) (C_{t+1}/C_t)^{1-\gamma}}{\lambda \iota(X_t)} = e^{(1-\gamma)\delta'(I-A)^{-1} u_{t+1} - \frac{(1-\gamma)^2}{2}\delta'(I-A)^{-1}\Sigma(I-A')^{-1} \delta}
\]
so the $u_t$ are i.i.d. $N( (1-\gamma)\delta'(I-A)^{-1}\Sigma, \Sigma)$ under $\tilde{\mb E}$. Equivalently, under $\tilde{\mb E}$ we have
\[
 X_{t+1} = \nu + (1-\gamma)\delta'(I-A)^{-1}\Sigma + AX_t + u_{t+1}\,, \quad u_t \sim N(0,\Sigma)\,.
\]
This implies the stationary distributions $\mu$ and $\tilde \mu$ are both Gaussian, with different means but the same covariance. In consequence, $\log \Delta(x)$ is affine in $x$ and so condition (\ref{e:rn}) holds for any $\varepsilon > 0$. As $\log \iota(x)$ is also affine in $x$, we have that $\log \iota \in E^{\phi_r}$ for all $r \in [1,2)$, which verifies condition (\ref{e:ez-phi-2}).
It follows that the single condition one needs to verify for existence of recursive utilities in linear-Gaussian environments is the eigenvalue condition (\ref{e:eval}), which reduces to
\[
 \beta e^{ \frac{(1-\rho)(1-\gamma)}{2}\delta'(I-A)^{-1} \Sigma (I-A')^{-1} \delta + (1-\rho) \delta'(I-A)^{-1} \nu } < 1  \,.
\]
Note also that as $g(X_t,X_{t+1}) = \delta' X_{t+1}$, which belongs to $E^{\phi_r}$ for $r \in [1,2)$, the SDF (\ref{e:sdf}) is therefore well defined and all of its moments exist. \hfill $\square$

\appendix

\section{Proofs}\label{s:proofs}

\begin{remark}\label{rmk:r3}
Several of the proofs below require showing that a function $f$ is an element of $E^{\phi_s}$ with $s \geq 1$. That is, that $\mb E^\mu[\exp(|f(X_t)/c|^s)] < \infty$ holds for all $c > 0$. For any $0 < \bar c < c$ we have $(\bar c/c)^s < 1$ and therefore
\[
 \exp(|f(X_t)/c|^s) = (\exp(|f(X_t)/\bar c|^s))^{(\bar c/c)^s} \leq \exp(|f(X_t)/\bar c|^s)
\]
because $\exp(|f(X_t)/\bar c|^s) \geq 1$. 
In order to show that $f \in E^{\phi_s}$, one therefore only has to check that $\mb E^\mu[\exp(|f(X_t)/c|^s)]< \infty$ holds for all $c \in (0,\epsilon)$ for any fixed $\epsilon > 0$.  
\end{remark}

\subsection{Ancillary results}

A version of this first Lemma appears in Chapter 2.3 of the manuscript \cite{Pollard} and is used frequently to control the Orlicz norm $\|\cdot\|_{\phi_r}$. We include a proof for convenience. 

\begin{lemma}[\cite{Pollard}]\label{lem:pollard}
Let $\mb E^{\mu}[ \exp(|f(X)/C|^r) ] - 1 \leq C'$ for finite constants $C > 0$ and $C' \geq 1$. Then: $\|f\|_{\phi_r} \leq CC'$.
\end{lemma}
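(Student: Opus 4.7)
The goal is to show that the particular choice $c = CC'$ satisfies the defining inequality of the Luxemburg norm, namely $\mathbb{E}^\mu[\exp(|f(X)/(CC')|^r)] \leq 2$; once this is in hand, the conclusion $\|f\|_{\phi_r} \leq CC'$ follows immediately from the definition as an infimum.

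The plan is to reduce the exponential moment at scale $CC'$ to the hypothesized moment at scale $C$ via a scalar convexity inequality. Specifically, I will first establish the elementary fact that for every $\lambda \in [0,1]$ and $u \geq 0$,
\[
\exp(\lambda u) - 1 \;\leq\; \lambda\bigl(\exp(u) - 1\bigr).
\]
This follows by setting $h(u) = \lambda(\exp(u)-1) - (\exp(\lambda u) - 1)$, noting $h(0) = 0$, and computing $h'(u) = \lambda(\exp(u) - \exp(\lambda u)) \geq 0$ because $\lambda \leq 1$ and $u \geq 0$.

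Next, I take $\lambda = (C')^{-r}$ and $u = |f(X)/C|^r$. Since $C' \geq 1$ and $r \geq 1$, we have $\lambda = (C')^{-r} \in (0,1]$, so the scalar inequality applies. Observing that $\lambda u = |f(X)/(CC')|^r$, taking expectations, and then using the hypothesis $\mathbb{E}^\mu[\exp(|f(X)/C|^r)] - 1 \leq C'$ gives
\[
\mathbb{E}^\mu\bigl[\exp(|f(X)/(CC')|^r)\bigr] - 1 \;\leq\; (C')^{-r}\,\mathbb{E}^\mu\bigl[\exp(|f(X)/C|^r) - 1\bigr] \;\leq\; (C')^{1-r} \;\leq\; 1,
\]
where the last inequality again uses $C' \geq 1$ and $r \geq 1$. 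Rearranging yields $\mathbb{E}^\mu[\exp(|f(X)/(CC')|^r)] \leq 2$, and the conclusion follows from the definition of the Luxemburg norm.

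There is no real obstacle: the entire argument is a one-line convexity estimate plus direct substitution, and the condition $C' \geq 1$ in the hypothesis is exactly what is needed to make the scaling $\lambda = (C')^{-r}$ lie in the range where the convexity inequality applies and to ensure $(C')^{1-r} \leq 1$.
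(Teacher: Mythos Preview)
Your proof is correct and follows essentially the same approach as the paper: both exploit the convexity of the Orlicz function $\psi(x)=e^{x}-1$ (with $\psi(0)=0$) to obtain $\psi(\lambda u)\le \lambda\,\psi(u)$ for $\lambda\in[0,1]$, and then choose the scaling to match the hypothesis. The only cosmetic difference is that the paper scales $|f|/C$ by $\tau=1/C'$ inside the composite function $e^{|\cdot|^r}-1$, whereas you scale $u=|f/C|^r$ by $\lambda=(C')^{-r}$ inside $e^{(\cdot)}-1$; your choice yields the slightly sharper intermediate bound $(C')^{1-r}\le 1$ instead of $1$, but the conclusion and underlying idea are identical.
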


\begin{proof}[Proof of Lemma \ref{lem:pollard}]
Take $\tau \in [0,1]$. By convexity of $\psi(x) := \exp(|x|^r) - 1$, we have
\[
 \mb E^{\mu}[ \psi(\tau|f(X)|/C) ] \leq \tau \mb E^{\mu}[ \psi(|f(X)|/C) ] + (1-\tau) \psi(0) = \tau \mb E^{\mu}[ \psi(|f(X_t)|/C) ] \,.
\]
The result follows by setting $\tau = 1/C'$.
\end{proof}

\begin{lemma}[\cite{Karakostas,Chen2016}] \label{lem:holder-infinite}
Let $1 < p_i < \infty$ for $i \in \mb N$, and $\sum_{i=1}^\infty \frac{1}{p_i} = 1$. If $\prod_{i=1}^\infty \|f_i\|_{p_i} < \infty$ then $\prod_{i=1}^\infty f_i$ is well defined and $\|\prod_{i=1}^\infty f_i \|_1 \leq \prod_{i=1}^\infty \|f_i\|_{p_i}$.
\end{lemma}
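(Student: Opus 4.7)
The plan is to reduce the infinite-product bound to the classical (finite) generalized Hölder inequality by using an auxiliary factor of $1$ to absorb the ``missing mass'' of reciprocal exponents. Since $\bigl\||f_i|\bigr\|_{p_i} = \|f_i\|_{p_i}$ and $\bigl|\prod_i f_i\bigr| = \prod_i |f_i|$, I may assume without loss of generality that $f_i \geq 0$.

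For each $n \geq 1$, define $q_n \in (1,\infty)$ by $1/q_n := \sum_{i > n} 1/p_i$, which lies in $(0,1)$ by hypothesis. Then $\sum_{i=1}^n 1/p_i + 1/q_n = 1$, and applying the classical generalized Hölder inequality to the $n+1$ functions $f_1,\ldots,f_n,1$ with exponents $p_1,\ldots,p_n,q_n$ yields
\[
 \|f_1 \cdots f_n\|_1 \;\leq\; \prod_{i=1}^n \|f_i\|_{p_i} \cdot \|1\|_{q_n} \;=\; \prod_{i=1}^n \|f_i\|_{p_i} \;\leq\; \prod_{i=1}^\infty \|f_i\|_{p_i} \;=:\; M,
\]
where I have used that $\mu$ is a probability measure, so $\|1\|_{q_n} = 1$. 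Hence the partial products $P_n := \prod_{i \leq n} f_i$ are uniformly bounded in $L^1(\mu)$ by $M$.

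Setting $P := \liminf_n P_n$, Fatou's lemma gives $\|P\|_1 \leq \liminf_n \|P_n\|_1 \leq M$, so $P < \infty$ $\mu$-a.e. To upgrade the $\liminf$ to an honest a.e.\ limit, I would observe that $\prod_i \|f_i\|_{p_i} < \infty$ forces $\|f_i\|_{p_i} \to 1$, and together with the summability of $\sum_i \bigl|\log \|f_i\|_{p_i}\bigr|$ implicit in convergence of the infinite product, a Borel--Cantelli-style argument shows that $\sum_i \log f_i(x)$ converges $\mu$-a.e., so that $\prod_{i=1}^\infty f_i(x) = \lim_n P_n(x)$ exists a.e.\ and coincides with $P(x)$. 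The main obstacle is precisely this identification of ``$\prod_{i=1}^\infty f_i$ is well defined'' as an a.e.\ limit; under the natural convention that the infinite product means the $\liminf$ (or a.e.\ limit) of the partial products, the Fatou step alone already delivers the bound in the statement, so the Hölder-with-$1$ trick in Step 1 is really the only substantive content.
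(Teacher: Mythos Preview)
The paper does not supply its own proof of this lemma; it is stated with a citation to Karakostas and Chen and used as a black box. So there is no reference argument to compare against, and your attempt stands or falls on its own.

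Your central device --- padding the finite generalized H\"older inequality with the constant function $1$ carrying exponent $q_n$ defined by $1/q_n=\sum_{i>n}1/p_i$, and using $\|1\|_{q_n}=1$ since $\mu$ is a probability measure --- is correct and is the natural route to $\|P_n\|_1\le\prod_{i=1}^n\|f_i\|_{p_i}$. One minor slip: the step $\prod_{i=1}^n\|f_i\|_{p_i}\le\prod_{i=1}^\infty\|f_i\|_{p_i}$ is false in general (take $\|f_1\|_{p_1}=2$, $\|f_2\|_{p_2}=1/2$, all others equal to $1$). This does no damage, because Fatou applied directly to $\|P_n\|_1\le\prod_{i=1}^n\|f_i\|_{p_i}$ already yields $\|\liminf_n P_n\|_1\le\liminf_n\prod_{i=1}^n\|f_i\|_{p_i}=\prod_{i=1}^\infty\|f_i\|_{p_i}$; simply delete the intermediate inequality.

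The genuine gap --- which you yourself flag --- is upgrading $\liminf_n P_n$ to an a.e.\ limit. Your Borel--Cantelli sketch does not do this as written: convergence of $\prod_i\|f_i\|_{p_i}$ to a positive value gives convergence (not absolute convergence) of $\sum_i\log\|f_i\|_{p_i}$, and in any case this controls norms, not pointwise values of $\log f_i$; there is no evident way to convert it into a.e.\ summability of $\sum_i\log f_i(x)$ without additional structure on the $f_i$. For how the paper actually \emph{uses} the lemma (display~(\ref{e-exist-2}) in the proof of Theorem~\ref{t-id-W}), the infinite product is the exponential of an absolutely convergent series by construction, so a.e.\ convergence is automatic there and your H\"older-with-$1$ plus Fatou argument is already enough. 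For a self-contained proof of the lemma in the stated generality, you would need to consult the cited references for how ``well defined'' is made precise and established.
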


Let $\mu$ and $\nu$ be two probability measures on a measurable space $(\mc X,\mcr X)$. We make explicit the dependence of function classes and norms on the measures $\mu$ and $\nu$. Let $\Delta = \frac{\mr d \mu}{\mr d \nu}$, and let $\| \Delta\|_{L^p(\nu)}$ denote its $L^p(\nu)$ norm. 

\begin{lemma}\label{lem:embed:0}
Let $\mu \ll \nu$ and $\int \Delta^p \, \mr d \nu < \infty$ for some $p > 1$. Then: $E^{\phi_r}(\nu) \hookrightarrow E^{\phi_r}(\mu)$ and $L^{\phi_r}(\nu) \hookrightarrow L^{\phi_r}(\mu)$ for each $r \geq 1$.
\end{lemma}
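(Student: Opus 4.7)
The plan is to reduce both inclusions to a single Hölder-type estimate, and then invoke Lemma \ref{lem:pollard} to upgrade the set-inclusion to a continuous embedding.

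\textbf{Setup.} Let $q = p/(p-1)$ denote the H\"older conjugate of $p$. Observe that for any $c > 0$ and any measurable $f$, since $\mr d \mu = \Delta \, \mr d \nu$,
\begin{equation*}
 \mb E^\mu[\exp(|f(X)/c|^r)] \,=\, \int \Delta(x) \exp(|f(x)/c|^r) \, \mr d \nu(x) \,\leq\, \|\Delta\|_{L^p(\nu)}\, \bigl(\mb E^\nu[\exp(q|f(X)/c|^r)]\bigr)^{1/q}
\end{equation*}
by H\"older's inequality. Writing $q|f/c|^r = |f/c'|^r$ with $c' := c/q^{1/r}$, this becomes
\begin{equation*}
 \mb E^\mu[\exp(|f(X)/c|^r)] \,\leq\, \|\Delta\|_{L^p(\nu)}\, \bigl(\mb E^\nu[\exp(|f(X)/c'|^r)]\bigr)^{1/q}.
\end{equation*}

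\textbf{Set-inclusions.} If $f \in L^{\phi_r}(\nu)$, then the inner $\nu$-expectation is finite for some $c' > 0$, which gives the same conclusion for $\mu$ with $c = q^{1/r} c'$, so $f \in L^{\phi_r}(\mu)$. If $f \in E^{\phi_r}(\nu)$, the $\nu$-expectation is finite for every $c' > 0$; since $c \mapsto c/q^{1/r}$ is a bijection of $(0,\infty)$, the same holds for $\mu$ for every $c > 0$, giving $f \in E^{\phi_r}(\mu)$.

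\textbf{Continuity of the embedding.} Let $C := \|f\|_{\phi_r,\nu}$, so that $\mb E^\nu[\exp(|f/C|^r)] \leq 2$. Taking $c' = C$ in the displayed estimate and setting $c = q^{1/r} C$, we obtain
\begin{equation*}
 \mb E^\mu[\exp(|f(X)/c|^r)] - 1 \,\leq\, \|\Delta\|_{L^p(\nu)} \cdot 2^{1/q},
\end{equation*}
and hence Lemma \ref{lem:pollard} (applied with $C' := \max\{1, \|\Delta\|_{L^p(\nu)} \cdot 2^{1/q}\}$) yields
\begin{equation*}
 \|f\|_{\phi_r,\mu} \,\leq\, q^{1/r}\, \max\bigl\{1, \|\Delta\|_{L^p(\nu)} \cdot 2^{1/q}\bigr\}\, \|f\|_{\phi_r,\nu}.
\end{equation*}
The same constant works for both $L^{\phi_r}$ and its heart $E^{\phi_r}$, which establishes the continuous embeddings.

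\textbf{What is and isn't hard.} There is no real obstacle here: the argument is a direct application of H\"older with the conjugate exponent $q$, and the crucial observation is the trivial identity $q|f/c|^r = |f/(c/q^{1/r})|^r$, which lets the exponential Orlicz structure pass through the H\"older estimate without changing functional form. Everything else is bookkeeping and the invocation of Lemma \ref{lem:pollard} to turn a uniform bound on a single moment into a norm bound.
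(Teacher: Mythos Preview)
Your proof is correct and follows essentially the same approach as the paper's: both apply H\"older's inequality with conjugate exponent $q$, use the identity $q|f/c|^r = |f/(c/q^{1/r})|^r$ to preserve the Orlicz structure, and then invoke Lemma \ref{lem:pollard} to obtain the norm bound. Your version is slightly more carefully written (you keep track of the $q^{1/r}$ factor in the final constant, which the paper's displayed bound appears to drop), but the argument is otherwise identical.
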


\begin{proof}[Proof of Lemma \ref{lem:embed:0}]
To see that $E^{\phi_r}(\nu) \subseteq E^{\phi_r}(\mu)$, take any $f \in E^{\phi_r}(\nu)$ and $c > 0$. Then:
\[
 \mb E^{\mu} \left[ e^{|f(X)/c|^r}\right] = \mb E^{\nu} \left[\Delta (X) e^{|f(X)/c|^r} \right] \leq \|\Delta\|_{L^p(\nu)} \mb E^{\nu} \left[ e^{\left|f(X)/(c/q^{1/r})\right|^r} \right]^{\frac{1}{q}} < \infty \,,
\]
where $q > 1$ is the dual index of $p$. Therefore, $f \in E^{\phi_r}(\mu)$. Similarly, $L^{\phi_r}(\nu) \subseteq L^{\phi_r}(\mu)$.

For continuity of the embedding, take $f \in L^{\phi_r}(\nu)$ and $c = q^\frac{1}{r}\|f\|_{\phi_r(\nu)}$. Substituting into the above display yields
\[
 \mb E^{\mu} [ e^{|f(X)/c|^r}] \leq 2^\frac{1}{q} \|\Delta\|_{L^p(\nu)} \,.
\]
Therefore, $\|f\|_{L^{\phi_r}(\mu)} \leq ((2^\frac{1}{q}  \| \Delta\|_{L^p(\nu)}  -1 ) \vee 1) \|f\|_{L^{\phi_r}(\nu)}$ by Lemma \ref{lem:pollard}.
\end{proof}

\subsection{Proofs for Section \ref{sec:problems}}

\begin{proof}[Proof of Proposition \ref{prop:ssy-nonexist}]
Suppose a solution $v \in L^1$ to (\ref{eq:ssy_recur}) does indeed exist for some $\alpha \neq 0$. Then $v$ is a fixed point the operator $\mb T$. Consider the related operator $\mb S$, given by
\[
 \mb S f (h) = \mathsf a + \mathsf b e^{2h}  + \beta \mb E^{Q} \left[ \left. f(h_{t+1}) \right| h_t = h \right] .
\]
As $\mb S$ is a contraction mapping on $L^1$, we may deduce it has a unique fixed point $w \in L^1$ given by
\[
 w(h) = \frac{\mathsf a}{1-\beta} + \mathsf b \sum_{i=0}^\infty \beta^{i} \mb E^Q[ e^{2h_{t+i}} | h_t = h] \,. 
\]
By Jensen's inequality, $\mb T f \geq \mb S f$ holds for any $f$. Note $w - v = \mb S w - \mb T v \leq \mb S w - \mb S v$, where $\mb S w(h) - \mb S v (h) = \beta \mb E^Q[w(h_{t+1}) - v(h_{t+1})|h_t = h] =: \mb D(w - v)(h)$. Therefore, $(\mb I - \mb D)(w - v) \leq 0$. As $(\mb I - \mb D)$ is invertible on $L^1$ (see the discussion in Section \ref{s:sr}) and its inverse maps non-negative functions to non-negative functions, we have $w - v \leq 0$ and hence that $v \geq w$. Also note that $w \geq \ul w$, where
\[
 \ul w(h) = \frac{\mathsf a}{1-\beta} + \mathsf b e^{2h} \,.
\] 
By monotonicity and the fact that the fixed point $v$ of $\mb T$ is bounded below by $\ul w$, we have
\begin{equation}
 v = \mb T v \geq \mb T \ul w \,, \label{eq:ssy_non}
\end{equation}
where
\[
 \mb T \ul w (h) = \mathsf a + \mathsf b e^{2h} + \beta \log \mb E^{Q} \left[ \left. \exp \left(\frac{\mathsf a}{1-\beta} + \mathsf b e^{2h_{t+1}} \right) \right| h_t = h \right] .
\]
But note that the right-hand side expectation is $+\infty$ for every $h$ because $\mathsf b > 0$. It follows by inequality (\ref{eq:ssy_non}) that $v(h) = +\infty$ almost everywhere, which contradicts $v \in L^1$.
\end{proof}

\begin{proof}[Proof of Proposition \ref{prop:bs-nonunique}]
Substituting $v(h) = a + b h$ into (\ref{eq:bs_recur}) and using the conditional characteristic function for the autoregressive gamma process \cite[Appendix H]{BCZ}, we obtain
\[
 a + bh = \mathsf a + \mathsf b h + \beta a + \frac{ \beta \varphi b }{1-b c} h - \beta \delta \log (1 - b c) \,.
\]
Matching coefficients gives a quadratic equation in $b$. When $\mathsf q := 1 + c \mathsf b - \beta \varphi$ satisfies $\mathsf q^2 - 4 c \mathsf b > 0$, there are two solutions for $b$:
\[
 b_1 = \frac{\mathsf q - \sqrt{\mathsf q^2 - 4 c \mathsf b}}{2 c} \,, \quad \quad b_2 = \frac{\mathsf q + \sqrt{\mathsf q^2 - 4 c \mathsf b}}{2 c} \,,
\]
both of which satisfy $1 - b c > 0$. Therefore, there are two solutions of the form $v_i(h) = a_i + b_i h$, where $a_i = \frac{\mathsf a - \beta \delta \log(1 - b_i c)}{1-\beta}$, $i = 1,2$.
\end{proof}

\subsection{Proofs for Section \ref{s:preliminaries}}

\begin{proof}[Proof of Proposition \ref{p:exun}]
Existence: we prove this for case (a); similar arguments apply for (b). 
The sequence $\{\bar v_n\}_{n \geq 1}$ with $\bar v_n = \mb T^n \bar v$ is monotone and bounded below by $\ul v$. It follows by the monotone convergence property that $\{\bar v_n\}_{n \geq 1}$ converges to some $v \in \mc E$ with $v \geq \ul v$. Finally, $\| \mb Tv - v\| \leq \|\mb Tv - \mb T \bar v_n\| + \| \mb T \bar v_n - v\| = \|\mb T v - \mb T \bar v_n\| + \| \bar v_{n+1} - v\| \to 0$ by continuity of $\mb T$, hence $\mb T v = v$.

Uniqueness: Suppose $\mb T$ satisfies (\ref{e:subgrad}) at each fixed point. Let $v,v' \in \mc E$ be fixed points of $\mb T$. By (\ref{e:subgrad}), we have $v' - v = \mb T v' - \mb T v \geq \mb D_{v}(v' - v)$, which implies that 
\begin{equation} \label{e-idpf-ineq}
 (\mb I - \mb D_{v})(v' - v) \geq 0 \,.
\end{equation}
As $\rho(\mb D_v;\mc E) < 1$, we have $(\mb I - \mb D_v)^{-1} = \sum_{i=0}^\infty (\mb D_v)^i$ where the series converges in operator norm \cite[Theorem 10.15]{Kress}. The operator $\mb D_v$ is monotone and so $(\mb I - \mb D_v)^{-1}$ is also monotone. Applying $(\mb I - \mb D_v)^{-1}$ to both sides of equation (\ref{e-idpf-ineq}) yields  $v'-v \geq 0$.
A parallel argument yields $v - v' \geq 0$. Therefore, $v = v'$. 
The proof follows by parallel arguments when $\mb T$ instead satisfies (\ref{e:supergrad}) at each of its fixed points.
\end{proof}

\begin{proof}[Proof of Corollary \ref{cor:nonunique-gen}]
 Suppose $\mb T$ satisfies (\ref{e:subgrad}) at each fixed point. By (\ref{e:subgrad}), for $v,v' \in \mc V$:
\[
 v'-v = \mb T v' - \mb Tv \geq \mb D_v(v'-v)
\]
hence $(\mb I - \mb D_v)(v'-v) \geq 0$. When $\rho(\mb D_v;\mc E) < 1$, the operator $(\mb I - \mb D_v)$ is invertible on $\mc E$ with $(\mb I - \mb D_v) ^{-1} = \sum_{n=0}^\infty \mb D_v^n$. As $\mb D_v$ is monotone, so too is $(\mb I - \mb D_v)^{-1}$. Applying $(\mb I - \mb D_v)^{-1}$ to both sides of the above display yields $v' - v \geq 0$, so $v$ is the smallest fixed point of $\mb T$. 

Suppose any other $v' \in \mc V$ distinct from $v$ were also stable. Then we could apply an identical argument to obtain the reverse inequality $v - v' \geq 0$, a contradiction.
The proof when $\mb T$ satisfies (\ref{e:supergrad}) at each fixed point follows similarly.
\end{proof}

\begin{lemma}\label{lem:cge-nbhd}
Let $v \in \mc E$ be a stable fixed point of $\mb T$, and let there exist a neighborhood $N$ of $v$ for which
\begin{equation} \label{eq:frechet}
 \mb T f - \mb T v = \mb D_v (f - v) + o(\|f - v\|)
\end{equation}
for all $f \in N$. Then: there exists a neighborhood $N'$ of $v$ for which $\lim_{n \to \infty} \mb T^n f = v$ for all $f \in N'$
\end{lemma}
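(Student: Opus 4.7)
The plan is to construct an equivalent norm on $\mc E$ under which $\mb D_v$ is a strict contraction, and then combine this with the Fréchet-type expansion (\ref{eq:frechet}) to show that $\mb T$ itself is a strict contraction near $v$ in this new norm. The key point is that stability of $v$ only provides $\rho(\mb D_v;\mc E) < 1$, which is weaker than $\|\mb D_v\| < 1$; passing to an equivalent norm is the standard device to bridge this gap, and is the main obstacle the argument must negotiate. Once it is negotiated, the rest is essentially a local Banach-fixed-point-style computation.

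To carry out the first step, I would pick $\lambda \in (\rho(\mb D_v;\mc E), 1)$ and use $\rho(\mb D_v;\mc E) = \lim_n \|\mb D_v^n\|^{1/n}$ to obtain a constant $C \geq 1$ with $\|\mb D_v^n\| \leq C \lambda^n$ for every $n \geq 0$. Define
\[
 \trip f \trip := \sup_{n \geq 0} \lambda^{-n} \|\mb D_v^n f\| \,,
\]
which satisfies $\|f\| \leq \trip f \trip \leq C\|f\|$ and $\trip \mb D_v f \trip \leq \lambda \trip f \trip$ for every $f \in \mc E$, so that $\trip \cdot \trip$ is equivalent to $\|\cdot\|$ and $\mb D_v$ is a genuine $\lambda$-contraction under $\trip \cdot \trip$. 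Next, fix $\mu \in (\lambda, 1)$ and set $\varepsilon = \mu - \lambda > 0$. Combining (\ref{eq:frechet}) with the equivalence of the two norms yields a radius $\delta > 0$ such that whenever $\trip f - v \trip < \delta$, one has $f \in N$ and
\[
 \trip \mb T f - v - \mb D_v(f-v) \trip \leq \varepsilon \trip f - v \trip .
\]
Adding $\trip \mb D_v(f-v) \trip \leq \lambda \trip f - v \trip$ via the triangle inequality (and using $\mb T v = v$) gives $\trip \mb T f - v \trip \leq \mu \trip f - v \trip$ whenever $\trip f - v \trip < \delta$.

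Finally, I would take $N' = \{ f \in \mc E : \trip f - v \trip < \delta\}$, which is a neighborhood of $v$ in the original topology by the equivalence of norms. The preceding estimate shows $\mb T(N') \subseteq N'$ and that $\mb T$ contracts toward $v$ with factor $\mu$ in the new norm, so by induction $\trip \mb T^n f - v \trip \leq \mu^n \trip f - v \trip \to 0$, which also gives $\|\mb T^n f - v\| \to 0$ by norm equivalence. No continuity or monotonicity of $\mb T$ beyond the local expansion (\ref{eq:frechet}) is needed, and the renorming is what converts the spectral-radius bound into the contraction factor $\lambda$ used throughout.
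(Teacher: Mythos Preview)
Your proof is correct, but it takes a different route from the paper's. The paper does not renorm; instead it uses the spectral-radius condition directly to find an integer $n_0$ with $\|\mb D_v^{n_0}\| < 1$, then iterates the Fr\'echet expansion $n_0$ times (shrinking the neighborhood so that the first $n_0$ iterates stay where the expansion is valid) to obtain $\|\mb T^{n_0} f - v\| \leq \varrho \|f-v\|$ for some $\varrho \in (0,1)$. Convergence along multiples of $n_0$ follows immediately, and the remaining $n$ are handled by writing $n = k n_0 + j$ with $0 \leq j < n_0$ and invoking the expansion once more. Your renorming trick $\trip f \trip = \sup_{n \geq 0} \lambda^{-n}\|\mb D_v^n f\|$ is the standard device that converts $\rho(\mb D_v) < 1$ into a genuine one-step contraction, and it buys you a cleaner argument: $\mb T$ itself becomes a local contraction in $\trip\cdot\trip$, so there is no need to treat iterates modulo $n_0$ separately. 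The paper's approach, by contrast, stays entirely in the original norm and avoids introducing an auxiliary norm, at the cost of a slightly more bookkeeping-heavy endgame. Both are standard and essentially equivalent in strength; neither needs monotonicity or global continuity of $\mb T$.
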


\begin{proof}[Proof of Lemma \ref{lem:cge-nbhd}]
As $\rho(\mb D_v; \mc E) < 1$, there exists $n_0 \in \mb N$ and $\epsilon > 0$ for which $\|(\mb D_v)^{n_0} f\| \leq e^{-\epsilon n_0} \|f\|$ for all $f \in \mc E$. Recursively applying condition (\ref{eq:frechet}), we may deduce that there exists a sufficiently small neighborhood $N'$ of $v$ upon which
\begin{equation} \label{eq:frechet-1}
 \mb T^n f - v = (\mb D_v)^n (f - v) + o( \|f - v\|)  \,, \quad \mbox{ for all } 1 \leq n \leq n_0 \,,
\end{equation}
and hence
\[
 \| \mb T^{n_0} f - v \| \leq e^{-\epsilon n_0} \| f - v \| + o( \|f - v\|) \,.
\]
Making $N'$ smaller if necessary, we may therefore deduce that there is a $\varrho \in (0,1)$ for which $\| \mb T^{n_0} f - v\| \leq \varrho \| f - v\|$ holds for all $f \in N'$. For any $f \in N'$ and $k \in \mb N$, we therefore have that $\|\mb T^{k n_0} f - v\| \leq \varrho^{k} \| f - v\|$. Moreover, for any $n \in \mb N$ that is not an integer multiple of $n_0$, it follows by (\ref{eq:frechet-1}) with $k = \lfloor n/n_0 \rfloor$ that $\mb T^n f - v = \mb T^{n - k n_0}( \mb T^{k n_0} f ) -v = (\mb D_v)^{n - k n_0}( \mb T^{k n_0} f - v) + o( \| \mb T^{k n_0} f - v\|) = O( \| \mb T^{k n_0} f - v\|) = O(\varrho^k)$.
\end{proof}

\begin{proof}[Proof of Corollary \ref{cor:cge}]
Suppose condition (a) holds.  Fix $w \in \mc E$ with $w \leq \bar v$, let $w_0 = w$, and let $w_n = \mb T^n w$ for $n \in \mb N$. Also let $\bar v_n = \mb T^n \bar v$. By Proposition \ref{p:exun} we know that there is a unique fixed point $v \in \mc E$. Then by monotonicity of $\mb T$ and the subgradient inequality (\ref{e:subgrad}), for every $n \in \mb N$ we have
\[
 \bar v_n - v \geq w_n - v = \mb T w_{n-1} - \mb T v \geq \mb D_v (w_{n-1} - v) \geq (\mb D_v)^n (w - v)\,,
\]
where the final inequality is by monotonicity of $\mb D_v$. The left-hand side term $\bar v_n - v \to 0$ as $n \to \infty$ by Proposition \ref{p:exun}. Moreover, as $\rho(\mb D_v; \mc E) < 1$, there exists $n_0 \in \mb N$ and $\epsilon > 0$ for which $\|(\mb D_v)^{n_0} f\| \leq e^{-\epsilon n_0} \|f\|$ for all $f \in \mc E$, from which we may deduce that the right-hand side term $ (\mb D_v)^n (w - v) \to 0$ as $n \to \infty$. As $\|\cdot\|$ is a lattice norm, it follows that $w_n \to v$ as $n \to \infty$. 
The proof when (b) holds and $\mb T$ satisfies (\ref{e:supergrad}) follows similarly.
\end{proof}

\begin{lemma}\label{lem:mcp}
Let $\mu$ be a probability measure on $(\mc X, \mcr X)$. Then: for any $r \geq 1$, the space $E^{\phi_r}$ has the monotone convergence property.
\end{lemma}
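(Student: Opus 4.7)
The plan is to first extract a pointwise almost-everywhere limit and then upgrade this to convergence in the Luxemburg norm via dominated convergence, using crucially that $E^{\phi_r}$ is the Orlicz heart (so $\mb E^\mu[\exp(|h/c|^r)] < \infty$ for \emph{all} $c > 0$ when $h \in E^{\phi_r}$).

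First, I would fix an increasing sequence $\{f_n\}_{n \geq 1} \subset E^{\phi_r}$ with $f_n \leq g$ $\mu$-a.e.\ for some $g \in E^{\phi_r}$. Since $f_n(x)$ is increasing in $n$ and bounded above by $g(x)$ for $\mu$-a.e.\ $x$, the pointwise limit $f(x) := \lim_n f_n(x)$ exists $\mu$-a.e., defines a measurable function satisfying $f_1 \leq f \leq g$, and in particular $|f| \leq |f_1| + |g|$ $\mu$-a.e. Because $f_1, g \in E^{\phi_r}$ and $E^{\phi_r}$ is a vector lattice, $|f_1| + |g| \in E^{\phi_r}$, and domination then gives $f \in E^{\phi_r}$.

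Next I would show $\|f - f_n\|_{\phi_r} \to 0$. Fix any $\epsilon > 0$. The non-negative differences satisfy $0 \leq f - f_n \leq g - f_1$ $\mu$-a.e., so
\[
 \exp\!\left(\left|\tfrac{f(X) - f_n(X)}{\epsilon}\right|^r\right) \leq \exp\!\left(\left|\tfrac{(g - f_1)(X)}{\epsilon}\right|^r\right) .
\]
Since $g - f_1 \in E^{\phi_r}$, the right-hand side is $\mu$-integrable by definition of the Orlicz heart (this is the step where it matters that we work in $E^{\phi_r}$ rather than $L^{\phi_r}$, because we need integrability at the arbitrary scale $\epsilon$). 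Moreover, $f_n \to f$ $\mu$-a.e.\ implies $\exp(|(f - f_n)/\epsilon|^r) \to 1$ $\mu$-a.e. The dominated convergence theorem then yields
\[
 \lim_{n \to \infty} \mb E^\mu\!\left[\exp\!\left(\left|\tfrac{f(X) - f_n(X)}{\epsilon}\right|^r\right)\right] = 1 < 2,
\]
so for all sufficiently large $n$ the expectation is at most $2$, which by definition of the Luxemburg norm gives $\|f - f_n\|_{\phi_r} \leq \epsilon$. As $\epsilon > 0$ was arbitrary, $f_n \to f$ in $E^{\phi_r}$, and clearly $f \leq g$.

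The only subtle step is the second one: recognizing that Orlicz-heart membership of the dominating function $g - f_1$ provides an integrable envelope at every scale $\epsilon$, which is what makes dominated convergence applicable uniformly. The first and third steps are then essentially bookkeeping.
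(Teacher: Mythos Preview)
Your proof is correct and follows essentially the same approach as the paper's: obtain the pointwise limit, show $f\in E^{\phi_r}$ via $|f|\leq|f_1|+|g|$, and then upgrade to norm convergence by dominated convergence using that Orlicz-heart membership furnishes an integrable envelope at every scale $\epsilon$. Your execution is in fact slightly cleaner: you use the sharper bound $0\leq f-f_n\leq g-f_1$ and argue directly, whereas the paper uses the larger envelope $\exp\big(((|f_1|+|g|+|f|)/\varepsilon)^r\big)$ and phrases the last step as a reverse-Fatou contradiction argument.
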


\begin{proof}[Proof of Lemma \ref{lem:mcp}]
Let $\{f_n\}_{n \geq 1} \subset E^{\phi_r}$ be an increasing sequence of functions bounded above by some $g \in E^{\phi_r}$. As $E^{\phi_r} \hookrightarrow L^1(\mu)$, the sequence $\{f_n\}_{n \geq 1}$ is uniformly bounded in $L^1(\mu)$ and so it follows by Beppo Levi's monotone convergence theorem \cite[Theorem I.7.1]{Malliavin} that there exists $f \in L^1(\mu)$ for which $\lim_{n \to \infty} f_n = f$ ($\mu$-almost everywhere) and $\lim_{n \to \infty} \|f_n - f\|_1 = 0$, where $\|\cdot\|_1$ denotes the $L^1(\mu)$ norm. As $f_1 \leq f \leq g$, we have $|f| \leq |f_1| + |g|$. Moreover, as $f_1,g \in E^{\phi_r}$, for any $c > 0$ we have
\begin{align*}
 \mb E^{\mu}[ \exp(|f(X)/c|^r)] 
 & \leq \mb E^{\mu}[ \exp(((|f_1(X)| + |g(X)|)/c)^r)] \\
 & \leq \frac{1}{2} \mb E^{\mu}[ \exp(|2f_1(X)/c|^r)] + \frac{1}{2} \mb E^{\mu}[ \exp(|2g(X)/c|^r)] < \infty\,,
\end{align*}
from which it follows that $f \in E^{\phi_r}$. 

To establish convergence in $\|\cdot\|_{\phi_r}$,  suppose that $\limsup_{n \to \infty} \| f_n - f\|_{\phi_r} \geq 2\varepsilon$ for some $\varepsilon > 0$. Then
\begin{equation}\label{e-idpf-norm}
 \limsup_{n \to \infty} \mb E^\mu[ \exp(|(f_n(X) - f(X))/\varepsilon|^r) ] \geq 2\,.
\end{equation}
Note that $\{g_n\}_{n \geq 1}$ with $g_n = \exp(|(f_n - f)/\varepsilon|^r)$ is a monotone sequence of non-negative functions with $\limsup_{n \to \infty}g_n = 0$ ($\mu$-almost everywhere). Moreover, for each $n \geq 1$ we have that 
\[
 g_n \leq \exp \big(((|f_1| + |g| + |f|)/\varepsilon)^r \big) \,,
\]
and the right-hand side is $\mu$-integrable because $f_1, g, f \in E^{\phi_r}$. Therefore, by reverse Fatou:
\[
 \limsup_{n \to \infty} \mb E^\mu[ \exp(|(f_n(X) - f(X))/\varepsilon|^r) ] 
 \leq  \mb E^\mu \big[ \limsup_{n \to \infty} \exp(|(f_n(X) - f(X))/\varepsilon|^r) \big] = 0 \,,
\]
contradicting (\ref{e-idpf-norm}). It follows that $\|f_n - f\|_{\phi_r} \to 0$.
\end{proof}

\begin{remark}
It follows by an identical argument to Lemma \ref{lem:mcp} that the Orlicz heart $E^\psi := \{f \in L^0 : \mb E^\mu[ \psi(f(X)/c)] < \infty$ for all $c > 0\}$ defined using any monotone, continuously differentiable, strictly convex $\psi : \mb R_+ \to \mb R_+$ with $\psi(0)$ and $\lim_{x \to \infty} \psi(x)/x \to +\infty$ has the monotone convergence property when equipped with the corresponding Luxemburg norm $\|f\|_{\psi} := \inf \left\{ c > 0 : \mb E^\mu[\psi(|f(X)/c|)] \leq 1 \right\}$. 
\end{remark}

We next present an intermediate result used to prove Lemma \ref{lem:com}. Note that condition (\ref{e:m-thin}) implies that $ (\log m \vee 0) \in L^{\phi_r}(\mu \otimes Q)$, the Orlicz class of functions $f : \mc X \times \mc X \to \mb R$ defined relative to the stationary distribution $\mu \otimes Q$ of $(X_t,X_{t+1})$. With slight abuse of notation, let $\|(\log m \vee 0)\|_{\phi_r}$ denote the corresponding Orlicz norm of $(\log m \vee 0)$.

\begin{lemma}\label{lem:msubexp}
Let $\tilde{\mb E}$ be of the form (\ref{e:rn-sufficient}) and let $m$ satisfy condition (\ref{e:m-thin}). Then for any $p \in (1,\infty)$ and $n \geq 1$: 
\[
 \mb E^{\mu\otimes Q}[m(X_t,X_{t+1})^{np}]^{1/p} \leq  e^{(2 n \|(\log m \vee 0)\|_{\phi_r})^\frac{r}{r-1}(2 p)^\frac{1}{r-1}} + 2^\frac{3}{2p} \,.
\]
Moreover, for any $\beta \in (0,1)$ there exists $C \in (0,\infty)$ and $c \in (0,1-\beta)$ depending only on $\beta$, $r$, $\|(\log m \vee 0)\|_{\phi_r}$, and $p$ such that the inequality
\[
 \mb E^{\mu\otimes Q}[m(X_t,X_{t+1})^{np}]^{1/p} \leq C e^{(\beta+c)^{-n}}
\]
holds for each $n \geq 1$.
\end{lemma}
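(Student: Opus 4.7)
The plan is to reduce the bound on $\mb E[m^{np}]$ to a moment-generating-function estimate for $M := \log m \vee 0$, using the pointwise inequality $m^{np} \leq e^{npM}$. Writing $K := \|M\|_{\phi_r}$, the definition of the Luxemburg norm yields
\[
 \mb E^{\mu\otimes Q}\bigl[\exp\bigl((M/K)^r\bigr)\bigr] \leq 2.
\]

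The central step is Young's inequality with conjugate pair $(r,r')$, $r' := r/(r-1)$, applied to the product $npM = (2npK)\cdot(M/(2K))$:
\[
 npM \;\leq\; \frac{(2npK)^{r'}}{r'} + \frac{(M/K)^r}{r\,2^r} \;\leq\; \frac{(2npK)^{r'}}{r'} + \frac{(M/K)^r}{2},
\]
the last step using $r\,2^r \geq 2$ for $r\geq 1$. Exponentiating, taking expectations, and applying Jensen's inequality to $x \mapsto \sqrt{x}$ together with the Luxemburg bound above yields
\[
 \mb E^{\mu\otimes Q}[m^{np}] \;\leq\; \mb E^{\mu\otimes Q}[e^{npM}] \;\leq\; \sqrt{2}\,\exp\!\Bigl(\tfrac{(2npK)^{r'}}{r'}\Bigr).
\]
The elementary inequality $(2npK)^{r'}/r' \leq p\,(2nK)^{r'}(2p)^{1/(r-1)}$ follows from $1/r' = (r-1)/r \leq 1 \leq 2^{1/(r-1)}$ together with $r' = 1 + 1/(r-1)$. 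Setting $A := (2nK)^{r'}(2p)^{1/(r-1)}$, this already gives $\mb E^{\mu\otimes Q}[m^{np}] \leq \sqrt{2}\,e^{pA}$.

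To produce the \emph{additive} form claimed in the lemma, I would refine this by splitting
\[
 \mb E^{\mu\otimes Q}[e^{npM}] = \mb E\bigl[e^{npM}\ind\{M\leq A/n\}\bigr] + \mb E\bigl[e^{npM}\ind\{M>A/n\}\bigr].
\]
The first piece is bounded by $e^{pA}$ directly. For the tail piece I would apply Cauchy--Schwarz with (i) the MGF bound above used with $2p$ in place of $p$, and (ii) the Chebyshev-type estimate $P(M > A/n) \leq 2\exp\bigl(-(A/(nK))^r\bigr)$ coming from the Luxemburg bound. A short computation shows $(A/(nK))^r = 2^{r'(r+1)}(nKp)^{r'}$, while the MGF half contributes an exponent $2^{2r'-1}(nKp)^{r'}/r'$; since $r'(r-1) = r$, the combined exponent equals $2^{2r'-1}\bigl(1/r' - 2^r\bigr)(nKp)^{r'} < 0$ for $r>1$, so the tail piece is bounded by the absolute constant $2^{3/4} \leq 2^{3/2}$. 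The first displayed inequality of the lemma then follows from subadditivity $(a+b)^{1/p} \leq a^{1/p} + b^{1/p}$, valid for $p \geq 1$.

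For the second bound, the exponent $(2nK)^{r'}(2p)^{1/(r-1)}$ of the first bound grows only polynomially in $n$ (as $n^{r'}$ with $r'<\infty$ fixed), while $Ce^{(\beta+c)^{-n}}$ grows doubly exponentially in $n$ whenever $c \in (0,1-\beta)$, since $(\beta+c)^{-n} = \exp(n\log(1/(\beta+c)))$. Fixing any such $c$, the inequality $(2nK)^{r'}(2p)^{1/(r-1)} \leq (\beta+c)^{-n}$ thus holds for all sufficiently large $n$, and the finitely many remaining $n$, along with the additive constant $2^{3/(2p)}$, are absorbed into $C$. The main obstacle is the careful constant-tracking required to obtain the precise additive form $e^{(\cdots)} + 2^{3/(2p)}$ in the first bound (rather than the multiplicative form $\sqrt{2}\,e^{(\cdots)}$ that falls out of Young + Jensen), which is what forces the split-and-Cauchy--Schwarz refinement.
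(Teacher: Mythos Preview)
Your proof is correct and reaches the same conclusion, but the route differs from the paper's. The paper works via the tail-integration identity $\mb E[Z] = \int_0^\infty \Pr(Z \geq z)\,\mr dz$ applied to $Z = e^{np|a|}$ with $a = \log m \vee 0$: it splits $|a| = |a|\ind\{|a|\leq A\} + |a|\ind\{|a|>A\}$ at a threshold $A = (4np\|a\|_{\phi_r}^r)^{1/(r-1)}$, uses Markov's inequality with the Orlicz norm to make the heavy-tail contribution integrate to $2^{3/2}$, and bounds the bounded-part contribution by the crude upper integration limit $e^{2npA}$. You instead obtain the MGF bound $\mb E[e^{npM}] \leq \sqrt{2}\,e^{(2npK)^{r'}/r'}$ directly from Young's inequality, then recover the additive constant by an event split at $\{M \leq A/n\}$ combined with Cauchy--Schwarz and a Chebyshev tail bound. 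Both approaches hinge on the same Orlicz tail control; yours is arguably more transparent about where the exponent comes from (Young's inequality makes the $r/(r-1)$ power explicit), and in fact yields the slightly sharper additive constant $2^{3/4}$ before you relax it to $2^{3/2}$. The paper's approach is more classical and delivers the additive form in one pass without the secondary Cauchy--Schwarz step. Your argument for the second assertion (polynomial growth of the exponent versus doubly-exponential growth of $(\beta+c)^{-n}$) coincides with the paper's.
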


\begin{proof}[Proof of Lemma \ref{lem:msubexp}]
First note $\mb E^{\mu\otimes Q}[m(X_t,X_{t+1})^{np}] \leq \mb E^{\mu\otimes Q}[ e^{np|\!\log m(X_t,X_{t+1}) \vee 0|}]$. To simplify notation, let $Y_t = (X_t,X_{t+1})$, $a = \log m \vee 0$, and $\|a\|_{\phi_r} = \|(\log m \vee 0)\|_{\phi_r}$. In what follows, all probabilities (denoted $\Pr(\cdot)$) are taken with respect to  $\mu \otimes Q$.  Let $A$ be a positive constant (specified below) and set $|a| = a_+ + a_-$ with $a_+ = |a| \ind\{|a| \leq A\}$ and $a_- = |a| \ind\{|a| > A\}$. For any $z > 0$, we have
\begin{align} \label{e:tail}
  \Pr\left( e^{np|a(Y_t)|} \geq z \right) & \leq \Pr\left( a_+(Y_t)  \geq \frac{\log z}{2np}  \right)  + \Pr\left( a_-(Y_t)  \geq \frac{\log z}{2np} \right) \,.
\end{align}

By Markov's inequality and definition of $\|\cdot\|_{\phi_r}$, we have 
\begin{align*}
  \Pr\left( a_-(Y_t)  \geq \frac{\log z}{2np} \right) 
  & \leq \Pr\left( |a(Y_t)|^r \geq \frac{A^{r-1} \log z}{2np} \right) \\
  & = \Pr\left( \exp \left( \frac{|a(Y_t)|^r}{\|a\|_{\phi_r}^r} \right)  \geq \exp \left(\frac{1}{\|a\|_{\phi_r}^r} \frac{A^{r-1} \log z}{2np} \right) \right) \\
  & \leq  \frac{\mb E^{\mu \otimes Q}\left[ \exp \left( \left| a(Y_t)/\|a\|_{\phi_r} \right|^r \right) \right] }{\exp \left( \frac{1}{\|a\|_{\phi_r}^r} \frac{A^{r-1} \log z}{2np} \right)} \\
  & \leq  2 \exp \left( -\frac{1}{\|a\|_{\phi_r}^r} \frac{A^{r-1} \log z}{2np} \right) \,.
\end{align*}
Setting $A = ( \|a\|_{\phi_r}^r 4np)^\frac{1}{r-1}$, we obtain
\[
 \Pr\left( a_-(Y_t)  \geq \frac{\log z}{2np} \right) 
 \leq 2 z^{- 2 } \,.
\]
As $2z^{-2} \geq 1$ if $z \leq \sqrt 2$, we therefore have
\begin{equation} \label{e:tail:trunc}
 \int_0^\infty \Pr\left( a_-(Y_t)  \geq \frac{\log z}{2np} \right)  \, \mr d z \leq \sqrt 2 + 2 \int_{\sqrt 2}^\infty z^{-2} \, \mr d z = 2^\frac{3}{2}  \,.
\end{equation}

For the first term on the right-hand side of (\ref{e:tail}), as $a_+ \leq A$ we have 
\begin{equation} \label{e:tail:trunc-2}
 \Pr\left(  a_+(Y_t)  \geq \frac{\log z}{2np} \right)  = 0 \mbox{ if } z > e^{2np A}\,.
\end{equation}
Note $2np A = (2 n p \|a\|_{\phi_r} )^\frac{r}{r-1}2^\frac{1}{r-1}$.  Using the fact that $\mb E[Z] = \int_0^\infty \Pr(Z \geq z)\, \mr dz$ for a non-negative random variable $Z$, we may deduce from (\ref{e:tail}), (\ref{e:tail:trunc}), and  (\ref{e:tail:trunc-2}) that
\begin{align*}
 \mb E^{\mu\otimes Q}[m(X_t,X_{t+1})^{np}]
 & \leq \int_0^\infty \Pr( e^{np |a(Y)|} \geq z) \, \mr d z  \leq e^{(2 n p \|a\|_{\phi_r})^\frac{r}{r-1}2^\frac{1}{r-1}} + 2^\frac{3}{2} \,.
\end{align*}
The first assertion follows because $(x+y)^{1/p} \leq x^{1/p} + y^{1/p}$ for $x,y \geq 0$ and $p \geq 1$.
The second assertion follows as $n^{\frac{r}{r-1}} = o( (\beta + c)^{-n})$ for any $\beta \in (0,1)$ and $c \in (0,1-\beta)$.
\end{proof}

\begin{proof}[Proof of Lemma \ref{lem:com}]
We first show $\mb D$ is a bounded linear operator on $L^{\phi_s}$ for any $s \geq 1$. Linearity follows by inspection. For boundedness, fix any $s \geq 1$ and take any $f \in L^{\phi_s}$ with $\|f\|_{\phi_s} > 0$ and any $q \in (0,1)$. By applying Jensen's inequality, definition of $\tilde{\mb E}$ from (\ref{e:rn-sufficient}), and H\"older's inequality with $p^{-1} + q^{-1} = 1$, we obtain
\begin{align*}
 \mb E^{\mu}\left[ e^{|\mb D f(X_t)/(q^\frac{1}{s} \beta \|f\|_{\phi_s})|^s} \right] 
 & = \mb E^{\mu}\left[ e^{q^{-1} |\tilde{\mb E} f(X_t)/\|f\|_{\phi_s}|^s} \right] \\
 & \leq \mb E^{\mu \otimes Q}\left[  m(X_t,X_{t+1}) e^{q^{-1} | f(X_{t+1})/\|f\|_{\phi_s}|^s} \right] \\
 & \leq \mb E^{\mu \otimes Q}\left[  m(X_t,X_{t+1})^p \right]^\frac{1}{p} \mb E^\mu \left[ e^{| f(X_t)/\|f\|_{\phi_s}|^s} \right]^\frac{1}{q} \\
 & \leq 2^\frac{1}{q} \mb E^{\mu \otimes Q}\left[  m(X_t,X_{t+1})^p \right]^\frac{1}{p} \,,
\end{align*}
where the final line uses definition of $\|\cdot\|_{\phi_s}$. Note all moments of $m$ are finite under condition (\ref{e:m-thin}). It follows by Lemma \ref{lem:pollard} and definition of the operator norm $\| \mb D \|_{L^{\phi_s}}$ that
\[
 \| \mb D \|_{L^{\phi_s}} \leq \left( \left( 2^\frac{1}{q} \mb E^{\mu \otimes Q}\left[  m(X_t,X_{t+1})^p \right]^\frac{1}{p} - 1 \right) \vee 1 \right) q^\frac{1}{s}\beta < \infty \,.
\]
That $\mb D : E^{\phi_s} \to E^{\phi_s}$ may be deduced similarly. Boundedness of $\mb D$ on $E^{\phi_s}$ now follows because $E^{\phi_s}$ is a closed linear subspace of $L^{\phi_s}$.

We use Lemma \ref{lem:msubexp} to establish the spectral radius condition. We prove the result for the spaces $L^{\phi_s}$; the results for $E^{\phi_s}$ follow because $E^{\phi_s}$ is a closed linear subspace of $L^{\phi_s}$.
First consider the case with $s > 1$. Fix $p,q \in (1,\infty)$ with $p^{-1} + q^{-1} = 1$. For any $f \in L^{\phi_s}$ with $\|f\|_{\phi_s} > 0$, by two applications of Jensen's inequality we have
\begin{align*}
 \mb E^{\mu}\left[ e^{|\mb D^n f(X_t)/(q^\frac{1}{s}(\beta^{\frac{s-1}{s}})^n \|f\|_{\phi_s})|^s} \right] 
 & = \mb E^{\mu}\left[ e^{\beta^n q^{-1} |\tilde{\mb E}^n f(X_t)/\|f\|_{\phi_s}|^s} \right] \\
 & \leq \mb E^{\mu}\left[ e^{q^{-1} |\tilde{\mb E}^n f(X_t)/\|f\|_{\phi_s}|^s} \right]^{ \beta^n}  \leq \mb E^{\mu}\left[ \tilde{\mb E}^n g(X_t) \right]^{ \beta^n} \,,
\end{align*}
where $g(x) = \exp( q^{-1} |f(x)/\|f\|_{\phi_s}|^s)$. By H\"older's inequality, 
\begin{align*}
 \mb E^{\mu}\left[ \tilde{\mb E}^n g(X_t) \right] & = \mb E^{\mu \otimes Q}\left[ m(X_t,X_{t+1}) \cdots m(X_{t+n-1},X_{t+n}) g(X_{t+n}) \right] \\
 & \leq \mb E^{\mu \otimes Q}\left[ \left( m(X_t,X_{t+1}) \cdots m(X_{t+n-1},X_{t+n}) \right)^p \right]^{\frac{1}{p}} \mb E^\mu \left[ |g(X_t)|^q \right]^\frac{1}{q} \\
 & \leq \mb E^{\mu \otimes Q}\left[ m(X_t,X_{t+1})^{np} \right]^\frac{1}{np} \cdots \mb E^{\mu \otimes Q}\left[ m(X_{t+n-1},X_{t+n})^{np} \right]^\frac{1}{np} \mb E^\mu \left[ |g(X_t)|^q \right]^\frac{1}{q} \\
 & = \mb E^{\mu \otimes Q}\left[ m(X_t,X_{t+1})^{np} \right]^\frac{1}{p} \mb E^\mu \left[ |g(X_t)|^q \right]^\frac{1}{q} \,.
\end{align*}
It follows by Lemma \ref{lem:msubexp}, and definition of $g$ and $\|\cdot\|_{\phi_s}$ that
\begin{align*}
 \mb E^{\mu}\left[ \tilde{\mb E}^n g(X_t) \right] 
 \leq & \mb E^{\mu \otimes Q} \left[ m(X_t,X_{t+1})^{np} \right]^\frac{1}{p} \mb E^\mu \left[ e^{|f(X_t)/\|f\|_{\phi_s}|^s} \right]^\frac{1}{q}  \leq  2^\frac{1}{q} C e^{(\beta+c)^{-n}} 
\end{align*}
for constants  $C \in (0,\infty)$ and $c \in (0,1-\beta)$ not depending on $f$. Therefore,
\[
 \mb E^{\mu}\left[ e^{|\mb D^n f(X_t)/(q^\frac{1}{s}(\beta^{\frac{s-1}{s}})^n \|f\|_{\phi_s})|^s} \right] 
 \leq \left(2^\frac{1}{q} C e^{(\beta+c)^{-n}} \right)^{\beta^n} \,.
\]
It follows by Lemma \ref{lem:pollard} and definition of the operator norm $\| \mb D^n \|_{L^{\phi_s}}$ that
\[
 \| \mb D^n \|_{L^{\phi_s}} \leq \left( \left( \left(2^\frac{1}{q} C e^{(\beta+c)^{-n}} \right)^{\beta^n} - 1 \right) \vee 1 \right) q^\frac{1}{s}(\beta^\frac{s-1}{s})^n 
\]
and therefore $\rho(\mb D;L^{\phi_s}) \equiv \lim_{n \to \infty} \| \mb D^n \|_{L^{\phi_s}}^{1/n} \leq \beta^\frac{s-1}{s} < 1$.

Now consider the case with $s = 1$. Let $c$ be as in Lemma \ref{lem:msubexp}. Fix any $\varepsilon \in (0,1)$ and note that $\beta  < \beta + \varepsilon c < \beta + c < 1$. For any $f \in L^{\phi_1}$ with $\|f\|_{\phi_1} > 0$, we have:
\begin{align*}
 \mb E^{\mu}\left[ e^{|\mb D^n f(X_t)/(q \beta^n(\beta+\varepsilon c)^{-n} \|f\|_{\phi_1})|} \right] 
 & = \mb E^{\mu}\left[ e^{(\beta+\varepsilon c)^n q^{-1} |\tilde{\mb E}^n f(X_t)/\|f\|_{\phi_1}|} \right] \\
 & \leq \mb E^{\mu}\left[ e^{q^{-1} |\tilde{\mb E}^n f(X_t)/\|f\|_{\phi_1}|} \right]^{(\beta+\varepsilon c)^n} \\
 & \leq \mb E^{\mu}\left[ \tilde{\mb E}^n g(X_t) \right]^{(\beta+\varepsilon c)^n} \,,
\end{align*}
where $g(x) = \exp(q^{-1}|f(x)|/\|f\|_{\phi_1})$. By similar arguments to above, we obtain
\[
 \mb E^{\mu}\left[ e^{|\mb D^n f(X_t)/(q \beta^n(\beta+\varepsilon c)^{-n} \|f\|_{\phi_1})|} \right] 
 \leq (2^\frac{1}{q}  C e^{(\beta+c)^{-n}} )^{ (\beta+\varepsilon c)^n} \,.
\]
By Lemma \ref{lem:pollard} and definition of the operator norm $\| \mb D^n \|_{L^{\phi_1}}$, we may deduce that
\[
 \| \mb D^n \|_{\phi_1} \leq \left( \left(  (2^\frac{1}{q}  C e^{(\beta+c)^{-n}} )^{ (\beta+\varepsilon c)^n} - 1 \right) \vee 1 \right) q \left( \frac{\beta}{\beta+\varepsilon c} \right)^n  \,,
\]
from which it follows similarly that $\rho(\mb D;L^{\phi_1}) \equiv \lim_{n \to \infty} \| \mb D^n \|_{L^{\phi_1}}^{1/n} \leq \frac{\beta}{\beta+\varepsilon c} < 1$.
\end{proof}

\subsection{Proofs for Section \ref{s:robust}}

\begin{proof}[Proof of Theorem \ref{t-id-W}]
We verify the conditions of Proposition \ref{p:exun}. 
For existence, Lemma \ref{lem:T-prop} shows $\mb T$ is a continuous, monotone, and convex operator on $E^{\phi_s}$ for each $1 \leq s \leq r$. Let
\[
 \bar v(x) = (1-\beta) \sum_{n=0}^\infty \beta^{n+1} \log \left((\mb E^{Q})^n h(x)\right) \,,
\]
where $h(x)= \mb E^Q[ e^{\frac{\alpha}{1-\beta} u(X_{t},X_{t+1})} |X_t = x ]$. We first show that $\mb E^{\mu}[\exp(|\bar v(X_t)/(\beta c)|^r)] < \infty$ holds for each $c \in (0,1]$. By Jensen's inequality (using the fact that $\sum_{n=1}^\infty (1-\beta) \beta^n = 1$ and convexity of $x \mapsto e^{|x/c|^r}$ and $x \mapsto e^{|(\log x)/c|^r}$ for $c \in (0,1]$), we obtain
\begin{align*}
 \mb E^{\mu}\left[ e^{\left|\bar v(X_t)/(\beta c)\right|^r} \right] 
 & = \mb E^{\mu}\left[ \exp \left( \left| (1-\beta) \sum_{n=0}^\infty \beta^n \log \left((\mb E^{Q})^n h(X_t)\right)/c\right|^r \right) \right] \\
 & \leq (1-\beta) \sum_{n=0}^\infty \beta^{n}   \mb E^{\mu}\left[  \exp \left( \left| \log \left((\mb E^{Q})^n h(x)\right)/c \right|^r \right) \right] \\
 & \leq (1-\beta) \sum_{n=0}^\infty \beta^n \mb E^{\mu \otimes Q}\Big[ e^{|\frac{\alpha}{c(1-\beta)} u(X_{t+n},X_{t+n+1})|^r}\Big] \\
 & = \mb E^{\mu \otimes Q}\Big[ e^{|\frac{\alpha}{c(1-\beta)} u(X_{t+n},X_{t+n+1})|^r}\Big] <\infty \,.
\end{align*}
It follows by Remark \ref{rmk:r3} that $\bar v \in E^{\phi_r}$.

We now show that $\mb T \bar v \leq \bar v$. By Holder's inequality we first have
\begin{align}
 \mb T \bar v(X_t) & \leq \beta \log \left( \mb E^Q \Big[ e^{\bar v(X_{t+1})/\beta} \Big|X_t \Big]^{\beta} \mb E^Q \Big[ e^{\frac{\alpha}{1-\beta} u(X_t,X_{t+1})} \Big|X_t \Big]^{1-\beta} \right) \notag \\
 & = \beta^2  \log  \mb E^Q [ e^{\bar v(X_{t+1})/\beta}|X_t ] + (1-\beta) \beta \log h(X_t) \,. \label{e-exist-1}
\end{align} 
By Lemma \ref{lem:holder-infinite}, we may deduce
\begin{align}
 \log  \mb E^Q \left[ \left. e^{\bar v(X_{t+1})/\beta} \right|X_t \right] 
 & = \log  \mb E^Q \left[ \left.  \prod_{n=0}^\infty  \left((\mb E^{Q})^n h(X_{t+1})\right)^{(1-\beta)  \beta^{n}}  \right| X_t\right] \notag \\
 & \leq \log  \left( \prod_{n=0}^\infty  \mb E^Q \left[ \left.    \left((\mb E^{Q})^n h(X_{t+1})\right)  \right| X_t\right]^{(1-\beta)  \beta^{n}} \right) \notag \\
 & = (1-\beta) \sum_{n=1}^\infty \beta^{n-1}  \log \left((\mb E^{Q})^n h(X_t)\right) \,. \label{e-exist-2}
\end{align}
Substituting (\ref{e-exist-2}) into (\ref{e-exist-1}) yields $\mb T \bar v \leq \bar v$. 

We now show $\{\mb T^n \bar v\}_{n \geq 1}$ is bounded from below, first observe that
\[
 \mb T f (x) = \beta \log \mb E^Q[ e^{f(X_{t+1}) + \alpha u(X_t,X_{t+1})} | X_t = x] \geq \beta \mb E^Q[ f(X_{t+1}) + \alpha u(X_t,X_{t+1}) | X_t = x] \,.
\]
Therefore, 
\[
 \mb T^n \bar v \geq (\beta \mb E^Q)^n \bar v + \sum_{s=0}^{n-1}  (\beta \mb E^Q)^s ( h_1) 
\]
for each $n \geq 1$, where $h_1(x) = \beta \mb E^Q[\alpha u(X_t,X_{t+1})|X_t =x]$. 
Note also that $\|\beta \mb E^Q\|_{E^{\phi_r}} = \beta$ and $\rho(\beta \mb E^Q;E^{\phi_r}) = \beta$ (see Section \ref{s:sr}), and so we obtain $\liminf_{n \to \infty} \mb T^n \bar v \geq (\mb I - \beta \mb E^Q)^{-1} h_1 \in E^{\phi_r}$.

Uniqueness: $v$ is a fixed point of $\mb T : E^{\phi_s} \to E^{\phi_s}$ for each $s \in [1,r]$. Moreover, $\mb T : E^{\phi_s} \to E^{\phi_s}$ is convex by Lemma \ref{lem:T-prop} and $\mb D_v$ is a bounded, monotone linear operator with $\rho(\mb D_v;E^{\phi_s}) < 1$ for $s \in [1,r]$ by Lemma \ref{lem:D-bdd}. Uniqueness in $E^{\phi_s}$ with $s \in (1,r]$ follows by Proposition \ref{p:exun}(ii). That $v$ is the smallest and unique stable fixed point in $E^{\phi_1}$ follows  by Corollary \ref{cor:nonunique-gen}.
\end{proof}

\begin{lemma}\label{lem:T-prop}
Let condition (\ref{e:u}) hold. Then: $\mb T$ is a continuous, monotone and convex operator on $E^{\phi_s}$ for each $1 \leq s \leq r$.
\end{lemma}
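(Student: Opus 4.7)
My plan is to verify monotonicity, convexity, well-definedness of $\mb T$ as a self-map of $E^{\phi_s}$, and continuity, with the third being the most delicate. Monotonicity is immediate from the monotonicity of $\exp$, $\mb E^Q[\cdot|X_t]$, and $\beta \log$ (with $\beta>0$): $f \leq g$ gives $\mb T f \leq \mb T g$. For convexity, fix $\lambda \in [0,1]$ and use the factorization $e^{\lambda f+(1-\lambda)g + \alpha u} = (e^{f+\alpha u})^\lambda(e^{g+\alpha u})^{1-\lambda}$; H\"older's inequality with conjugate exponents $1/\lambda, 1/(1-\lambda)$ bounds the conditional expectation by $\mb E^Q[e^{f+\alpha u}|X_t]^\lambda \mb E^Q[e^{g+\alpha u}|X_t]^{1-\lambda}$, and applying $\beta \log$ yields the convex-combination inequality.

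For well-definedness, I would first establish the pointwise sandwich $|\mb T f(x)| \leq \beta L(x)$, where $L(x) := \log \mb E^Q[e^{|f(X_{t+1})|+|\alpha u(X_t,X_{t+1})|}|X_t = x]$. The upper bound follows by monotonicity from $f+\alpha u \leq |f|+|\alpha u|$. For the matching lower bound, one application of Jensen gives $\mb T f \geq \beta \mb E^Q[f+\alpha u|X_t] \geq -\beta \mb E^Q[|f|+|\alpha u||X_t]$, and a second (concavity of $\log$) gives $\mb E^Q[|f|+|\alpha u||X_t] \leq L$, so $-\mb T f \leq \beta L$. Hence it suffices to show $L \in E^{\phi_s}(\mu)$.

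Set $h := |f|+|\alpha u|$, which lies in $E^{\phi_s}(\mu \otimes Q)$ by the embedding $E^{\phi_r} \hookrightarrow E^{\phi_s}$ for $s \leq r$. For $s = 1$, Jensen together with $\mb E^{\mu \otimes Q}[e^{ph}]<\infty$ for all $p$ yields $\mb E^\mu[e^{KL}]<\infty$ for every $K>0$ directly. For $s>1$, membership of $h$ in the Orlicz heart implies the refined Laplace estimate $\log \mb E^{\mu \otimes Q}[e^{ph}] = o(p^{s'})$ as $p \to \infty$, where $s' = s/(s-1)$: for every $\epsilon>0$ there is $C(\epsilon)$ with $\log \mb E^{\mu \otimes Q}[e^{ph}] \leq C(\epsilon) + \epsilon p^{s'}$. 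Markov and Jensen ($p \geq 1$) give $\mu(\{L \geq u\}) \leq e^{-up}\mb E^{\mu \otimes Q}[e^{ph}]$; optimizing the resulting exponent over $p$ yields $\mu(\{L \geq u\}) \leq e^{C(\epsilon)}\exp(-c_s u^s/\epsilon^{s-1})$ for a constant $c_s$ depending only on $s$. Taking $\epsilon$ arbitrarily small, the layer-cake formula then gives $\mb E^\mu[\exp(K L^s)]<\infty$ for every $K>0$, so $L \in E^{\phi_s}$ and $\mb T f \in E^{\phi_s}$.

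Continuity follows from a Vitali-type argument: $\|f_n-f\|_{\phi_s} \to 0$ gives (along a subsequence) $f_n \to f$ $\mu$-almost everywhere, and uniform integrability of $\{e^{pf_n}\}$ on bounded Orlicz balls yields pointwise convergence $\mb E^Q[e^{f_n+\alpha u}|X_t] \to \mb E^Q[e^{f+\alpha u}|X_t]$ for $\mu$-a.e.\ $X_t$, hence $\mb T f_n \to \mb T f$ pointwise. The uniform domination $|\mb T f_n - \mb T f|^s \leq 2^{s-1}\beta^s(L_n^s + L^s)$ together with the monotone convergence property of $E^{\phi_s}$ (Lemma \ref{lem:mcp}) then upgrades this to norm convergence. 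The main technical obstacle is the tail bound for $L$ when $s>1$: one must exploit the distinction between the Orlicz heart $E^{\phi_s}$ (all exponential $s$-moments finite) and the Orlicz space $L^{\phi_s}$ (only some finite), encoded in the vanishing $\log \mb E[e^{ph}]/p^{s'} \to 0$, to obtain tail decay $\mu(\{L \geq u\}) \lesssim \exp(-K u^s)$ for arbitrarily large $K$ rather than just for some fixed $K$.
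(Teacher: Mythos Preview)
Your monotonicity and convexity arguments match the paper's exactly. For the self-map property, you take a genuinely different route. The paper observes that $x \mapsto \exp(|(\log x)/c|^s)$ is convex on $(0,\infty)$ for $c \in (0,1]$, so a single Jensen step gives
\[
 \mb E^\mu\big[e^{|\mb T f(X_t)/(\beta c)|^s}\big] \leq \mb E^{\mu \otimes Q}\big[e^{|(f(X_{t+1})+\alpha u(X_t,X_{t+1}))/c|^s}\big] < \infty,
\]
which is one line. Your Chernoff--layer-cake argument (Young's inequality to get $\log \mb E[e^{ph}] \leq C(\epsilon)+\epsilon p^{s'}$, then optimize $p$ in the Markov bound) is correct and more elementary in that it does not rely on the convexity of $e^{|(\log x)/c|^s}$, but it is substantially longer. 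Both reach the same conclusion $L \in E^{\phi_s}$.

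Your continuity argument has a gap. The domination $|\mb T f_n - \mb T f|^s \leq 2^{s-1}\beta^s(L_n^s + L^s)$ is not by a \emph{fixed} function: $L_n$ depends on $n$. To make this work you must first pass to a subsequence along which $|f_n| \leq g$ for some fixed $g \in E^{\phi_s}$ (possible by order continuity of the Orlicz heart: take $\|f_{n_k}-f\|_{\phi_s} \leq 2^{-k}$ and set $g = |f| + \sum_k |f_{n_k}-f|$), so that $L_n \leq L_g := \log \mb E^Q[e^{g+|\alpha u|}\,|\,X_t]$ uniformly. Second, Lemma~\ref{lem:mcp} (monotone convergence) is not the right tool; you need \emph{dominated} convergence in $E^{\phi_s}$, i.e.\ order continuity of the norm on the Orlicz heart, which is a distinct (though true) property. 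With these two fixes the subsequence argument goes through. By contrast, the paper obtains continuity directly: writing $\mb T(f+g)-\mb T f = \beta \log \mb E_f[e^{g(X_{t+1})}\,|\,X_t]$ and applying the same convexity-plus-Jensen trick followed by Cauchy--Schwarz yields a quantitative bound $\|\mb T(f+g)-\mb T f\|_{\phi_s} \leq C_f \|g\|_{\phi_s}$, avoiding any subsequence or dominated-convergence machinery.
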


\begin{proof}[Proof of Lemma \ref{lem:T-prop}]
Fix any $1 \leq s \leq r$. Take any $f \in E^{\phi_s}$ and $c \in (0,1]$. By convexity of $x \mapsto e^{|(\log x)/c|^s}$ for $c \in (0, 1]$ and Jensen's inequality:
\begin{align*}
 \mb E^{\mu}[\exp(|\mb T f(X_t)/(\beta c)|^s)]
 & = \mb E^{\mu} \left[\exp \left(\left|\frac{1}{c}\log \mb E^Q \left[ \left. e^{f(X_{t+1}) + \alpha u(X_t,X_{t+1})}\right|X_t\right]\right|^s\right)\right] \\ 
 & \leq \mb E^{\mu} \left[ \mb E^Q \left[ \left. \exp \left(\left|\frac{1}{c}\log  e^{f(X_{t+1}) + \alpha u(X_t,X_{t+1})}\right|^s\right) \right|X_t\right] \right] \\ 
 & = \mb E^{\mu \otimes Q} \left[  \exp \left(\left|\frac{f(X_{t+1}) + \alpha u(X_t,X_{t+1})}{c}\right|^s\right) \right] < \infty
\end{align*}
which is finite for any $f \in E^{\phi_s}$ under (\ref{e:u}). It follows by Remark \ref{rmk:r3} that $\mb T: E^{\phi_s} \to E^{\phi_s}$.

Continuity: Fix any $f\in E^{\phi_s}$. Take $g \in E^{\phi_s}$ with $\|g\|_{\phi_s}  \in (0,2^{-1/s}]$ and set $c = 2^{1/s} \|g\|_{\phi_s}$. Let $\mb E_f$ denote the distorted conditional expectation operator from (\ref{eq:mf-robust}) with $f$ in place of $v$. By convexity of $x \mapsto e^{|(\log x)/c|^s}$ for $c \in (0, 1]$ and the Jensen and Cauchy-Schwarz inequalities,
\begin{align*}
 \mb E^{\mu} \left[ \phi_s(| \mb T(f + g)(X_t) - \mb T f(X_t)|/(\beta c)) \right] +1
 & = \mb E^{\mu} \left[\exp \left(\left|\frac{1}{c}\log \mb E_f \left[ \left. e^{g(X_{t+1}) }\right|X_t\right]\right|^s\right)\right] \\
 & \leq \mb E^{\mu} \left[\mb E_f \left[ \left. \exp \left(\left|\frac{1}{c}\log  e^{g(X_{t+1}) } \right|^s\right) \right|X_t\right] \right] \\
 & = \mb E^{\mu \otimes Q} \left[m_f(X_t,X_{t+1})  \exp \left(\left|\frac{g(X_{t+1})}{c} \right|^s\right)  \right] \\
 & \leq \mb E^{\mu} \big[ e^{2|g(X_t)/c|^s} \big]^{1/2} \mb E^{\mu \otimes Q}[m_f(X_t,X_{t+1})^2]^{1/2} \\
 & =  \sqrt {2 \mb E^{\mu \otimes Q}[m_f(X_t,X_{t+1})^2] }
\end{align*}
because $c = 2^{1/s} \|g\|_{\phi_s}$. Finiteness of $\mb E^{\mu \otimes Q}[m_f(X_t,X_{t+1})^2]$ holds for any $f \in E^{\phi_s}$ under (\ref{e:u}). To see this, by several applications of the Cauchy--Schwarz and Jensen inequalities, we have
\begin{align*}
 \mb E^{\mu \otimes Q}[m_f(X_t,X_{t+1})^2] & = \mb E^{\mu \otimes Q}\left[\left(\frac{e^{f(X_{t+1}) + \alpha u(X_t,X_{t+1})}}{\mb E^Q[e^{f(X_{t+1}) + \alpha u(X_t,X_{t+1})}|X_t]}\right)^2\right]  \\
 & \leq \mb E^{\mu \otimes Q}\left[ e^{4|f(X_{t+1}) + \alpha u(X_t,X_{t+1})|} \right] \\
 & \leq \mb E^{\mu }\left[ e^{8|f(X_t) |} \right]^{1/2} \mb E^{\mu \otimes Q}\left[ e^{8|\alpha u(X_t,X_{t+1})|} \right]^{1/2} \,,
\end{align*}
which is finite for any $f \in E^{\phi_s}$ under (\ref{e:u}). Continuity now follows by Lemma \ref{lem:pollard}.
Monotonicity follows from monotonicity of $\exp(\cdot)$, $\log(\cdot)$, and conditional expectations. Convexity follows by applying H\"older's inequality to the conditional expectation 
\[
 \mb E^Q \left[ \left. e^{\tau(v_1(X_{t+1}) + \alpha u(X_t,X_{t+1})) + (1-\tau)(v_2(X_{t+1}) + \alpha u(X_t,X_{t+1}))}  \right|X_t=x \right]
\]
with $p = \tau^{-1}$ and $q = (1-\tau)^{-1}$.
\end{proof}

\begin{lemma}\label{lem:D-bdd}
Let condition (\ref{e:u}) hold with $r > 1$ and fix any $v \in E^{\phi_{r'}}$ with $r' > 1$. Then: for each $s \geq 1$, $\mb D_v$ is a continuous linear operator on $E^{\phi_s}$ with $\rho(\mb D_v;E^{\phi_s}) < 1$.
\end{lemma}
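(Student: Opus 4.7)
The plan is to obtain Lemma~\ref{lem:D-bdd} as a direct consequence of Lemma~\ref{lem:com} applied with $\mb D = \mb D_v = \beta \mb E_v$ and $m = m_v$, the change of measure in (\ref{eq:mf-robust}). Once the thin-tail condition (\ref{e:m-thin}) is verified for $m_v$ at some exponent strictly greater than $1$, Lemma~\ref{lem:com} simultaneously delivers boundedness of $\mb D_v$ on $E^{\phi_s}$ and the spectral radius bound $\rho(\mb D_v;E^{\phi_s}) < 1$ for every $s \geq 1$; continuity is then immediate from linearity and boundedness. So all the work lies in checking (\ref{e:m-thin}) for $m_v$.

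Set $r'' := \min\{r,r'\} > 1$ and write $Y = v(X_{t+1}) + \alpha u(X_t,X_{t+1})$. The awkward term in $\log m_v$ is the normalization $-\log \mb E^Q[e^Y | X_t]$, which is hard to bound in absolute value. I would bypass this by using Jensen's inequality in the form $\log \mb E^Q[e^Y | X_t] \geq \mb E^Q[Y|X_t]$ to obtain the pointwise bound
\[
 (\log m_v(X_t,X_{t+1})) \vee 0 \leq |v(X_{t+1})| + |\alpha u(X_t,X_{t+1})| + \mb E^Q\bigl[|v(X_{t+1})| + |\alpha u(X_t,X_{t+1})| \,\big|\, X_t\bigr].
\]
Only the positive part of $\log m_v$ is needed for (\ref{e:m-thin}), which is exactly what this inequality delivers.

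Next, applying $(a_1+a_2+a_3+a_4)^{r''} \leq 4^{r''-1}(a_1^{r''} + \cdots + a_4^{r''})$ followed by the elementary inequality $\exp(\sum_i b_i) \leq \tfrac{1}{4}\sum_i \exp(4 b_i)$ splits the exponential moment of $|(\log m_v)\vee 0|^{r''}/c$ under $\mu \otimes Q$ into four pieces. The two pieces not involving a conditional expectation are finite for every $c > 0$: the $v$-piece because $v \in E^{\phi_{r'}} \hookrightarrow E^{\phi_{r''}}$, and the $u$-piece because condition (\ref{e:u}) combined with the same embedding places $u$ in $E^{\phi_{r''}}(\mu \otimes Q)$. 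For the two pieces involving $\mb E^Q[\cdot|X_t]$, I would use convexity of $x \mapsto \exp(c|x|^{r''})$ on $\mb R$ (which holds since $r'' \geq 1$) and apply Jensen's inequality once more to pull the exponential inside the conditional expectation; stationarity of $X$ under $\mu$ then reduces these contributions to the unconditional moments already bounded.

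This chain shows $\mb E^{\mu \otimes Q}[\exp(|(\log m_v) \vee 0|^{r''}/c)] < \infty$ for all $c > 0$, which is strictly stronger than condition (\ref{e:m-thin}) with exponent $r'' > 1$, and Lemma~\ref{lem:com} then completes the proof. The only real subtlety is the Jensen step bounding $(\log m_v)\vee 0$: everything thereafter is a routine application of convexity of $\exp(|\cdot|^{r''})$, AM-GM-type splitting of exponentials of sums, and the embeddings between Orlicz hearts listed in Section~\ref{s:orlicz}. Both hypotheses $r>1$ and $r'>1$ are needed to ensure that $r''>1$, which is in turn what Lemma~\ref{lem:com} requires to drive the spectral radius below one.
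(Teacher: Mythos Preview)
Your proposal is correct and follows essentially the same approach as the paper: both verify the thin-tail condition (\ref{e:m-thin}) for $m_v$ at exponent $\ul r = r \wedge r' > 1$ and then invoke Lemma~\ref{lem:com}. The only technical difference is in how the normalizing term $\log \mb E^Q[e^Y|X_t]$ is handled: the paper bounds the full $|\log m_v|$ by using convexity of $x \mapsto e^{|(\log x)/c|^{\ul r}}$ (for $c \in (0,1]$) to show directly that $\log \mb E^Q[e^Y|X_t] \in E^{\phi_{\ul r}}$, whereas you exploit the fact that (\ref{e:m-thin}) only requires control of $(\log m_v)\vee 0$ and use the simpler Jensen bound $\log \mb E^Q[e^Y|X_t] \geq \mb E^Q[Y|X_t]$ followed by convexity of $\exp(|\cdot|^{r''})$. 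Your route is slightly more elementary (it avoids the less obvious convexity fact), while the paper's route is a touch shorter and yields the marginally stronger conclusion $\log m_v \in E^{\phi_{\ul r}}(\mu \otimes Q)$; either way the invocation of Lemma~\ref{lem:com} is identical.
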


\begin{proof}[Proof of Lemma \ref{lem:D-bdd}]
We verify condition (\ref{e:m-thin}) from Lemma \ref{lem:com}. The log change-of-measure is
\[
 \log m_v(X_t,X_{t+1}) = v(X_{t+1}) + \alpha u(X_t,X_{t+1}) - \log \mb E^Q[ e^{v(X_{t+1}) + \alpha u(X_t,X_{t+1})}|X_t] \,.
\]
For any $v \in E^{\phi_{r'}}$ with $r' > 1$, setting $\ul r = (r \wedge r') > 1$ and taking any $c \in (0,1]$, 
\[
 \mb E^{\mu} \left[ e^{|\log \mb E^Q[ e^{v(X_{t+1}) + \alpha u(X_t,X_{t+1})}|X_t]/c|^{\ul r}} \right] \leq \mb E^{\mu \otimes Q} \left[ e^{|(v(X_{t+1}) + \alpha u(X_t,X_{t+1}))/c|^{\ul r}} \right]
\]
by Jensen's inequality. The right-hand side is finite by condition (\ref{e:u}). Therefore, 
\[
 \mb E^{\mu \otimes Q} \left[ e^{|\log m_v(X_t,X_{t+1})/c|^{\ul r}} \right] < \infty
\]
for any $c \in (0,1]$ and hence for any $c > 0$ (see Remark \ref{rmk:r3}), verifying (\ref{e:m-thin}). 
\end{proof}

\begin{lemma}\label{lem:hn}
Let $Y = |Z|$ with $Z \sim N(0,1)$. Then for $a > 0$ and $r \in [1,2)$, we have
\[
 \mathbb E\left[\exp\left(\frac{Y^r}{a^r}\right)\right] \leq \frac{\sqrt 2}{\sqrt \pi} \left( \left(\frac{2}{a^r}\right)^\frac{1}{2-r} \exp \left( \frac{2^\frac{r}{2-r}}{a^\frac{2r}{2-r}} \right) + \left(\frac{4}{a^r}\right)^\frac{1}{2-r} + \sqrt \pi \right) \,.
\]
\end{lemma}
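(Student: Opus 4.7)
The plan is to reduce the expectation to a one-dimensional Gaussian-type integral via the half-normal density of $Y$, and then chop the range of integration into three pieces on which the exponent $\phi(y) := y^r/a^r - y^2/2$ admits an easy pointwise bound. First I write
\[
 \mb E[\exp(Y^r/a^r)] = \sqrt{2/\pi} \int_0^\infty \exp(\phi(y)) \, dy,
\]
so the task reduces to bounding this integral.

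Next I introduce two thresholds matching the constants in the statement: $y_0 = (2/a^r)^{1/(2-r)}$ and $y_1 = (4/a^r)^{1/(2-r)}$. The identity $y_0^{2-r} = 2/a^r$ rearranges to $y_0^r/a^r = y_0^2/2$, so $\phi(y_0) = 0$; similarly, $a^r y^{2-r} \geq 4$ for $y \geq y_1$ yields $y^r/a^r \leq y^2/4$, so $\phi(y) \leq -y^2/4$ on $[y_1,\infty)$. Because $r < 2$, the unique positive critical point $y^* = (r/a^r)^{1/(2-r)}$ of $\phi$ lies below $y_0$, and a quick computation gives $\phi''(y^*) = r - 2 < 0$, so $\phi$ is monotone decreasing on $[y_0,\infty)$ and hence $\phi \leq 0$ on $[y_0,y_1]$.

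With these three pointwise facts in hand, I split the integral at $y_0$ and $y_1$ and estimate each piece crudely:
\[
 \int_0^{y_0}\! e^{\phi(y)} \, dy \leq e^{y_0^r/a^r} \int_0^{y_0} dy = y_0 \, e^{y_0^r/a^r},
\]
\[
 \int_{y_0}^{y_1}\! e^{\phi(y)} \, dy \leq y_1 - y_0 \leq y_1, \qquad \int_{y_1}^\infty\! e^{\phi(y)} \, dy \leq \int_0^\infty e^{-y^2/4} \, dy = \sqrt{\pi}.
\]
Substituting the explicit values $y_0 = (2/a^r)^{1/(2-r)}$, $y_1 = (4/a^r)^{1/(2-r)}$, and $y_0^r/a^r = 2^{r/(2-r)}/a^{2r/(2-r)}$, and multiplying through by the normalizing constant $\sqrt{2/\pi}$, produces precisely the three-term bound asserted.

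No step poses a genuine obstacle; the entire argument is calculus-level manipulation. The only judgement required is the specific choice of the constants $2$ and $4$ in the thresholds, engineered so that $\phi$ vanishes at $y_0$ and is dominated by $-y^2/4$ past $y_1$; this alignment is what yields the clean three-piece estimate with exactly the constants appearing in the statement (as opposed to the coarser bound one would obtain from, e.g., a Young's-inequality decomposition of $y^r/a^r$).
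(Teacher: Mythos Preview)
Your proof is correct and follows essentially the same approach as the paper: write the expectation as $\sqrt{2/\pi}\int_0^\infty e^{\phi(y)}\,dy$, split at the thresholds $(2/a^r)^{1/(2-r)}$ and $(4/a^r)^{1/(2-r)}$, and use the pointwise bounds $\phi(y)\le y^r/a^r$, $\phi(y)\le 0$, and $\phi(y)\le -y^2/4$ on the three respective pieces. Your justification of the middle bound via the location of the critical point $y^*$ is slightly more explicit than the paper's one-line assertion, but the argument is otherwise identical.
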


\begin{proof}[Proof of Lemma \ref{lem:hn}]
First write
\begin{align*}
 \mathbb E\left[\exp\left(\frac{Y^r}{a^r}\right)\right] & = \frac{\sqrt 2}{\sqrt \pi} \int_0^\infty \exp \left( \frac{y^r}{a^r} - \frac{1}{2}y^2 \right) \mathrm d y \\
 & \leq \frac{\sqrt 2}{\sqrt \pi} \left( \int_0^{(\frac{2}{a^r})^\frac{1}{2-r}} \exp \left( \frac{y^r}{a^r} \right) \mathrm d y + \int_{(\frac{2}{a^r})^\frac{1}{2-r}}^{(\frac{4}{a^r})^\frac{1}{2-r}} \mathrm d y + \int_{(\frac{4}{a^r})^\frac{1}{2-r}}^\infty \exp \left( - \frac{1}{4}y^2 \right) \mathrm d y\right) \\
 & \leq \frac{\sqrt 2}{\sqrt \pi} \left( \left(\frac{2}{a^r}\right)^\frac{1}{2-r} \exp \left( \frac{2^\frac{r}{2-r}}{a^\frac{2r}{2-r}} \right) + \left(\frac{4}{a^r}\right)^\frac{1}{2-r} +  \sqrt \pi \right) \,.
\end{align*}
The first inequality follows by noting that $\frac{y^r}{a^r} - \frac{1}{2}y^2 \leq \frac{y^r}{a^r}$ (for the first integral), $\frac{y^r}{a^r} - \frac{1}{2}y^2 \leq 0$ over $[(\frac{2}{a^r})^\frac{1}{2-r},\infty)$ (for the second integral), and $\frac{y^r}{a^r} - \frac{1}{2}y^2 \leq - \frac{1}{4}y^2$ over $[(\frac{4}{a^r})^\frac{1}{2-r},\infty)$ (for the third integral). For the three integrals on the second line, the first is bounded using the inequality $\int_0^b \exp ( \frac{y^r}{a^r} ) d y \leq b \exp ( \frac{b^r}{a^r} )$ (valid for $b \geq 0$); the second and third are trivial.
\end{proof}

\begin{proof}[Proof of Proposition \ref{prop:cpt-approx}]
The boundedness condition in the statement of the lemma ensures $\mb T_{\mc C}$ is a self map on $B(\mc C)$. It is then straightforward to verify that $\mb T_{\mc C}$ satisfies Blackwell's sufficient conditions, and therefore has a unique fixed point $v_{\mc C} \in B(\mc C)$.

To relate $v$ and $v_{\mc C}$, let $v_|$ denote the restriction of $v$ to $\mc C$. Then for $x \in \mc C$, we have
\begin{align*}
 v(x) - v_{\mc C}(x) & = \beta \log \mb E^Q[e^{v(X_{t+1}) + \alpha u(X_t,X_{t+1})}|X_t = x] - \mb T_{\mc C} v_{\mc C}(x) \\
 & \geq \beta \log \mb E^Q[ e^{v_|(X_{t+1}) + \alpha u(X_t,X_{t+1})} \ind \{X_{t+1} \in \mc C \} | X_t = x ] - \mb T_{\mc C} v_{\mc C}(x) \\
 & = \beta \log Q(\mc C | x ) + \mb T_{\mc C} v_{|}(x) - \mb T_{\mc C} v_{\mc C}(x) \\
 & \geq \beta \log Q(\mc C | x ) + \beta \mb E^Q \left[ \left. m_{\mc C,v_{\mc C}}(X_t,X_{t+1}) (v_|(X_{t+1}) - v_{\mc C}(X_{t+1}))  \right| X_t = x \right] \\
 & \geq \beta \log Q(\mc C | x ) + \beta \inf_{x \in \mc C} \left( v(x) - v_{\mc C}(x) \right) \,,
\end{align*}
where the first inequality is by monotonicity of expectations, the second equality is because $\inf_{x \in \mc C} Q(\mc C | x) > 0$, and the second inequality is by Jensen's inequality with
\[
 m_{\mc C,v_{\mc C}}(X_t, X_{t+1}) = \frac{e^{v_{\mc C}(X_{t+1}) + \alpha u(X_t,X_{t+1})}\ind\{X_{t+1} \in \mc C\}}{\mb E^Q [ e^{v_{\mc C}(X_{t+1}) + \alpha u(X_t,X_{t+1})} \ind\{X_{t+1} \in \mc C\}| X_t ]} \,.
\]
The result follows by taking the infimum of both sides with respect to $x \in \mc C$.
\end{proof}

\subsection{Proofs for Section \ref{s:learn}} \label{s:proofs-learn}

Recall $\hat X_t = (\hat \xi_t, \varphi_t)$. The conditional distribution $\hat Q$ of $(\xi_t,\hat X_{t+1})$ given $\hat X_t$ may be represented by 
\[
 \mb E^{\hat Q} [h(\xi_t,\hat X_{t+1})|\hat X_t] = \mb E^{\hat Q} [h(\xi_t,\hat X_{t+1})|\hat \xi_t] = \mb E^{\Pi_\xi \otimes Q_\varphi}[ h(\xi_t,\varphi_{t+1},\Xi(\hat \xi_t,\varphi_{t+1})) | \hat \xi_t] \,.
\]
Recall that $\mu$ is the stationary distribution of $\hat X_t$ under $\hat Q$. For $v \in E^{\phi_1}_{\hat \xi}$, define
\begin{align*}
 m_v^{\Pi_\xi}(\xi_t,\hat \xi_t) & = \frac{ \mb E^{Q_\varphi} \left[ \left.  e^{\frac{\theta}{\vartheta} v(\Xi(\hat \xi_t,\varphi_{t+1})) + \alpha u(\varphi_{t+1})} \right| \xi_t ,\hat \xi_t \right]^\frac{\vartheta}{\theta} }{\mb E^{\Pi_\xi}\! \left[ \left. \mb E^{Q_\varphi} \left[ \left.  e^{\frac{\theta}{\vartheta} v(\Xi(\hat \xi_t,\varphi_{t+1})) + \alpha u(\varphi_{t+1})} \right| \xi_t ,\hat \xi_t \right]^\frac{\vartheta}{\theta}  \right| \hat \xi_t \right]} \\
 m_v^{Q_\varphi}(\xi_t,\hat \xi_t,\varphi_{t+1}) & = \frac{e^{\frac{\theta}{\vartheta} v(\Xi(\hat \xi_t,\varphi_{t+1})) + \alpha u(\varphi_{t+1})}}{\mb E^{Q_\varphi} \left[ \left.  e^{\frac{\theta}{\vartheta} v(\Xi(\hat \xi_t,\varphi_{t+1})) + \alpha u(\varphi_{t+1})} \right| \xi_t, \hat \xi_t \right]} \,.
\end{align*}
The quantity $m_v^{\Pi_\xi}$ distorts the posterior distribution for $\xi_t$ given $\hat X_t$ whereas $m_v^{ Q_\varphi}$ distorts the conditional distribution $Q_\varphi$. To simplify notation, define the distorted conditional expectations $\mb E^{\Pi_\xi}_v$ and  $\mb E^{Q_\varphi}_v$ by
\begin{align*}
 \mb E^{\Pi_\xi}_v f(\hat \xi) & = \mb E^{\Pi_\xi} \left[ \left. m_v^{\Pi_\xi}(\xi_t, \hat \xi_t) f(\xi_t,\hat \xi_t) \right| \hat \xi_t = \hat \xi \right] \,, \\
 \mb E^{Q_\varphi}_v f(\xi, \hat \xi) & = \mb E^{\Pi_\xi} \left[ \left. m_v^{Q_\varphi}(\xi_t, \hat \xi_t, \varphi_{t+1}) f(\xi_t,\hat \xi_t, \varphi_{t+1}) \right| \xi_t = \xi, \hat \xi_t = \hat \xi \right] \,.
\end{align*}
The subgradient of $\mb T$ at $v$ is the composition of these two distorted conditional expectations, discounted by $\beta$:
\begin{equation} \label{e:subgradient-learning}
 \mb D_v f(\hat \xi) = \beta \mb E^{\hat Q} \left[ \left. m_v (\xi_t,\hat \xi_t,\varphi_{t+1}) f(\hat \xi_{t+1}) \right| \hat \xi_t=\hat \xi \right] 
\end{equation}
where $m_v (\xi_t,\hat \xi_t,\varphi_{t+1}) = m_v^{\Pi_\xi}(\xi_t,\hat \xi_t) m_v^{Q_\varphi}(\xi_t,\hat \xi_t,\varphi_{t+1})$.

\begin{proof}[Proof of Theorem \ref{t-id-W-learn}]
We verify the conditions of Proposition \ref{p:exun}. Lemma \ref{lem:T-prop-learn} shows that $\mb T$ is a continuous, monotone, and convex operator on $E^{\phi_s}_{\hat \xi}$ for each $1 \leq s \leq r$. If $\theta < \vartheta$, let
\[
 \bar v(\hat \xi) = (1-\beta) \sum_{n=0}^\infty \beta^{n+1} \log \left( \left(\mb E^{\hat Q}\right)^{n+1} g_1(\hat \xi) \right) \,, 
\]
where $g_1(\hat X_t) =  \exp(\frac{\alpha\vartheta}{(1-\beta)\theta} u(\varphi_t))$. For any $c > 0$, by Jensen's inequality we may deduce
\begin{align*}
 \mb E^{\mu}[ e^{|\bar v(\hat \xi_t)/(\beta c)|^r}] & \leq (1-\beta) \sum_{n=0}^\infty \beta^n \mb E^{\mu}  \left[\left( \left(\mb E^{\hat Q}\right)^{n+1} g_1^r (\hat \xi_t) \right) \right] \,,
\end{align*}
where $g_1^r(\hat X_t) = \exp(|\frac{\alpha\vartheta}{(1-\beta)\theta c} u(\varphi_t)|^r)$. As $u \in E^{\phi_r}_\varphi$, the right-hand side of the preceding display is finite and so $\bar v \in E^{\phi_r}_{\hat \xi}$. 

To show $\mb T \bar v \leq \bar v$, first by the Jensen and H\"older inequalities,
\begin{align*}
 \mb T \bar v(\hat \xi) & = \beta \log \mb E^{\Pi_\xi}\! \left[ \left. \mb E^{Q_\varphi} \left[ \left.  e^{\frac{\theta}{\vartheta} \bar v(\Xi(\hat \xi_t,\varphi_{t+1})) + \alpha u(\varphi_{t+1})} \right| \xi_t, \hat \xi_t \right]^{\vartheta/\theta}  \right| \hat \xi_t = \hat \xi \right] \\
 & \leq \beta \log \mb E^{\hat Q} \left[ \left.  e^{ \bar v(\hat \xi_{t+1}) + \alpha \frac{\vartheta}{\theta} u(\varphi_{t+1})}  \right| \hat \xi_t = \hat \xi \right] \\ 
 & \leq \beta^2 \log \mb E^{\hat Q} \left[ \left.  e^{ \bar v(\hat \xi_{t+1})/\beta }  \right| \hat \xi_t = \hat \xi \right] +  \beta(1-\beta) \log \mb E^{\hat Q} \left[ \left.  e^{  \frac{\alpha\vartheta}{(1-\beta)\theta} u(\varphi_{t+1})}  \right| \hat \xi_t = \hat \xi \right]  \,.
\end{align*}
By Lemma \ref{lem:holder-infinite}, we may deduce
\[
 \log \mb E^{\hat Q} \left[ \left.  e^{ \bar v(\hat \xi_{t+1})/\beta }  \right| \hat \xi_t = \hat \xi \right] \leq (1-\beta) \sum_{n=1}^\infty \beta^{n-1} \log \left( \Big(\mb E^{\hat Q}\Big)^{n+1} g_1(\hat \xi) \right)  \,,
\]
hence $\mb T \bar v \leq \bar v$. 

On the other hand, if $\vartheta \leq \theta$, let $\bar v(\hat \xi) = \frac{\vartheta}{\theta} (1-\beta) \sum_{n=0}^\infty \beta^{n+1} \log ( (\mb E^{\hat Q})^{n+1} g_2(\hat \xi) )$ where $g_2(\hat X_t) = e^{\frac{\alpha}{1-\beta} u(\varphi_t)}$. By similar arguments to above, we may use the condition $u \in E^{\phi_r}_\varphi$ to deduce $\bar v \in E^{\phi_r}_{\hat \xi}$. Again  by the Jensen and H\"older inequalities,
\begin{align*}
 \mb T \bar v(\hat \xi) & = \beta \log \mb E^{\Pi_\xi}\! \left[ \left. \mb E^{Q_\varphi} \left[ \left.  e^{\frac{\theta}{\vartheta} \bar v(\Xi(\hat \xi_t,\varphi_{t+1})) + \alpha u(\varphi_{t+1})} \right| \hat \xi_t, \xi_t \right]^\frac{\vartheta}{\theta}  \right| \hat \xi_t = \hat \xi \right] \\
 & \leq \frac{\vartheta}{\theta} \beta \log \mb E^{\hat Q} \left[ \left. e^{\frac{\theta}{\vartheta} \bar v(\hat \xi_{t+1}) + \alpha u(\varphi_{t+1})}  \right| \hat \xi_t = \hat \xi \right] \\ 
 & \leq \frac{\vartheta}{\theta}  \beta^2 \log \mb E^{\hat Q} \left[ \left.  e^{ \frac{\theta}{\vartheta} \bar v(\hat \xi_{t+1})/\beta }  \right| \hat \xi_t = \hat \xi \right] + \frac{\vartheta}{\theta}  \beta(1-\beta) \log \mb E^{\hat Q} \left[ \left.  e^{  \frac{\alpha}{1-\beta} u(\varphi_{t+1})}  \right| \hat \xi_t = \hat \xi \right]  \,.
\end{align*}
The inequality $\mb T \bar v \leq \bar v$ now follows by similar arguments to the previous case. 

To show that the sequence of iterates $\mb T^n \bar v$ is bounded from below, first note that for any $f \in E^{\phi_r}_{\hat \xi}$, we have
\[
 \mb T f(\hat \xi) \geq \beta \mb E^{\hat Q}\left[ \left.  f(\hat \xi_{t+1}) + \alpha \frac{\vartheta}{\theta} u(\varphi_{t+1})   \right| \hat \xi_t = \hat \xi \right]
\]
which follows by several applications of Jensen's inequality. It follows that
\[
 \mb T^n \bar v(\hat \xi) \geq \left(\beta \mb E^{\hat Q} \right)^n \bar v (\hat \xi) + \sum_{i=0}^{n-1} \left(\beta \mb E^{\hat Q}\right)^i g_3(\hat \xi)
\]
where $g_3(\hat \xi) = \beta \mb E^{\hat Q}[ \alpha \frac{\vartheta}{\theta} u(\varphi_{t+1}) | \hat \xi_t = \hat \xi ] \in E^{\phi_r}_{\hat \xi}$. Note also that $\rho(\beta \mb E^{\hat Q};E^{\phi_r}) = \beta$ (see Section \ref{s:sr}), hence $\liminf_{n \to \infty} \mb T^n \bar v \geq (\mb I - \beta \mb E^{\hat Q})^{-1} g_3 \in E^{\phi_r}$. This completes the proof of existence.

For uniqueness, $v$ is necessarily a fixed point of $\mb T : E^{\phi_s}_{\hat \xi} \to E^{\phi_s}_{\hat \xi}$ for each $1 \leq s \leq r$. The subgradient $\mb D_v$ is  monotone. Lemma \ref{lem:D-bdd-learn} shows $\mb D_v : E^{\phi_s}_{\hat \xi} \to E^{\phi_s}_{\hat \xi}$ is bounded and $\rho(\mb D_v;E^{\phi_s}_{\hat \xi}) < 1$ for $s \in [1,r]$.  Uniqueness follows by Proposition \ref{p:exun}(ii) and Corollary \ref{cor:nonunique-gen}.
\end{proof}

\begin{lemma} \label{lem:T-prop-learn}
Let condition (\ref{e:u-learn}) hold. Then: $\mb T$ is a continuous, monotone, and convex operator on $ E^{\phi_s}_{\hat \xi}$ for each $1 \leq s \leq r$.
\end{lemma}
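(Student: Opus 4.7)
The plan is to parallel the proof of Lemma \ref{lem:T-prop}, but with the additional complication of a nested two-layer conditional expectation. Writing $h_f(\xi_t,\hat \xi_t,\varphi_{t+1}) = \frac{\theta}{\vartheta} f(\Xi(\hat \xi_t,\varphi_{t+1})) + \alpha u(\varphi_{t+1})$, we have $\mb T f(\hat \xi) = \beta \log \mb E^{\Pi_\xi}[(\mb E^{Q_\varphi}[e^{h_f}\,|\,\xi_t,\hat \xi_t])^{\vartheta/\theta} \,|\, \hat \xi_t = \hat \xi]$. The self-map and continuity claims proceed by two sequential applications of Jensen's inequality, one for each expectation layer, while convexity uses two sequential applications of Hölder's inequality. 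Monotonicity is immediate from the monotonicity of $\exp$, $\log$, the conditional expectations $\mb E^{\Pi_\xi}$ and $\mb E^{Q_\varphi}$, and of $x \mapsto x^{\vartheta/\theta}$ on $\mb R_+$.

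For the self-map property, I would fix $f \in E^{\phi_s}_{\hat \xi}$. By Remark \ref{rmk:r3}, it suffices to prove $\mb E^\mu[\exp(|\mb T f(\hat \xi_t)/(\beta c)|^s)] < \infty$ for all sufficiently small $c > 0$. Restricting to $c \in (0,(\vartheta/\theta)\wedge 1]$, the map $x \mapsto \exp(|(\log x)/c|^s)$ is convex, so Jensen applied to the outer $\mb E^{\Pi_\xi}$ moves the exponential inside. The $\log$ of the $\vartheta/\theta$-power equals $(\vartheta/\theta)\log(\cdot)$, which together with rescaling $c' := c\theta/\vartheta \in (0,1]$ lets a second Jensen step on $\mb E^{Q_\varphi}$ bound the quantity by
\[
 \mb E^{\mu \otimes \Pi_\xi \otimes Q_\varphi}\!\left[ \exp\!\left( \left| f(\Xi(\hat \xi_t,\varphi_{t+1}))/c + \alpha u(\varphi_{t+1})/c' \right|^s \right) \right].
\]
Combining $|a+b|^s \leq 2^{s-1}(|a|^s+|b|^s)$ with Cauchy-Schwarz splits this into a product of two expectations, both finite by (\ref{e:u-learn}) and $f \in E^{\phi_s}_{\hat \xi}$. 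Hence $\mb T f \in E^{\phi_s}_{\hat \xi}$.

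For continuity, I would consider $\mb T(f+g) - \mb T f$ for $g \in E^{\phi_s}_{\hat \xi}$ with small norm, writing the difference as $\beta \log$ of a ratio that introduces the composite change of measure $m_f := m_f^{\Pi_\xi} m_f^{Q_\varphi}$ defined just above the theorem. Applying the same two-layer Jensen argument, but now under $m_f$-weighted conditional expectations, and then controlling $\mb E^{\mu \otimes \Pi_\xi \otimes Q_\varphi}[m_f^2]$ via Cauchy-Schwarz together with (\ref{e:u-learn}) and $f \in E^{\phi_s}_{\hat \xi}$ (exactly as in the proof of Lemma \ref{lem:T-prop}), yields a Lipschitz bound $\|\mb T(f+g) - \mb T f\|_{\phi_s} \lesssim \|g\|_{\phi_s}$ on a norm-neighborhood of zero, whence continuity.

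Convexity is where the nested structure matters most. For $f_1, f_2 \in E^{\phi_s}_{\hat \xi}$ and $\tau \in [0,1]$, linearity gives $h_{\tau f_1 + (1-\tau)f_2} = \tau h_{f_1} + (1-\tau) h_{f_2}$, hence $e^{h_{\tau f_1+(1-\tau)f_2}} = (e^{h_{f_1}})^\tau (e^{h_{f_2}})^{1-\tau}$. Hölder's inequality for $\mb E^{Q_\varphi}$ with conjugate exponents $1/\tau$ and $1/(1-\tau)$ bounds the inner expectation by $(\mb E^{Q_\varphi}[e^{h_{f_1}}])^\tau (\mb E^{Q_\varphi}[e^{h_{f_2}}])^{1-\tau}$; raising to the $\vartheta/\theta$ power preserves this product form; and a second Hölder application to the outer $\mb E^{\Pi_\xi}$, with the same exponents, yields $H(\tau f_1 + (1-\tau) f_2) \leq H(f_1)^\tau H(f_2)^{1-\tau}$, where $H(f) := \mb E^{\Pi_\xi}[(\mb E^{Q_\varphi}[e^{h_f}])^{\vartheta/\theta}]$. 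Taking $\beta \log$ of both sides delivers the desired inequality $\mb T(\tau f_1 + (1-\tau) f_2) \leq \tau \mb T f_1 + (1-\tau) \mb T f_2$. The main subtlety throughout is bookkeeping the $\vartheta/\theta$ rescalings at the inner layer; no new ideas beyond those already used in Lemma \ref{lem:T-prop} are required.
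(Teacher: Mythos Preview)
Your proposal is correct and follows essentially the same approach as the paper's proof: two-layer Jensen for the self-map and continuity claims (with the $\vartheta/\theta$ rescaling absorbed into the constant $c$), Cauchy--Schwarz to control the change-of-measure term $m_f$, and H\"older for convexity. The only cosmetic difference is that you spell out the nested H\"older argument for convexity and the $|a+b|^s$ split for finiteness, whereas the paper states these steps tersely.
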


\begin{proof}[Proof of Lemma \ref{lem:T-prop-learn}]
Fix $s \in [1,r]$. We first show $\mb E^{\mu}[\exp(|\mb T f(\hat \xi_t)/(\beta c)|^s)] < \infty$ holds for each $f \in E^{\phi_s}_{\hat \xi}$ and $c \in (0,\frac{\vartheta}{\theta} \wedge 1]$. By convexity of $x \mapsto e^{|(\log x)/c|^s}$ for $c \in (0,1]$ and Jensen's inequality, 
\begin{align*}
 \mb E^{\mu}\left[\exp\left(\left|\frac{\mb T f(\hat \xi_t)}{\beta c}\right|^s\right)\right] 
 & = \mb E^{\mu} \left[ \exp \left( \frac{1}{c^s} \left| \log \mb E^{\Pi_\xi}\! \left[ \left. \mb E^{Q_\varphi} \left[ \left. e^{\frac{\theta}{\vartheta} f(\Xi(\hat \xi_t,\varphi_{t+1})) + \alpha u (\varphi_{t+1})} \right| \xi_t, \hat \xi_t \right]^\frac{\vartheta}{\theta}  \right| \hat \xi_t \right]  \right|^s \right) \right] \\
 & \leq \mb E^{\mu} \left[ \mb E^{\Pi_\xi}\! \left[ \left. \exp \left( \frac{1}{c^s}  \left| \log  \mb E^{Q_\varphi} \left[ \left. e^{\frac{\theta}{\vartheta} f(\Xi(\hat \xi_t,\varphi_{t+1})) + \alpha u (\varphi_{t+1})} \right| \xi_t, \hat \xi_t \right]^\frac{\vartheta}{\theta}  \right|^s \right) \right| \hat \xi_t \right]\right] \\ 
 & \leq \mb E^{\mu} \left[ \mb E^{\Pi_\xi}\! \left[ \left. \mb E^{Q_\varphi} \left[ \left.  \exp \left( \frac{1}{c^s}  \left| \frac{\vartheta}{\theta} \log  e^{\frac{\theta}{\vartheta} f(\Xi(\hat \xi_t,\varphi_{t+1})) + \alpha u (\varphi_{t+1})}  \right|^s \right)  \right| \xi_t, \hat \xi_t \right]  \right| \hat \xi_t \right]\right] \\ 
 & = \mb E^{\mu \otimes \Pi_\xi \otimes Q_\varphi} \left[ \exp \left( \frac{1}{c^s}  \left| f(\Xi(\hat \xi_t,\varphi_{t+1}))  +\frac{\vartheta}{c\theta}  \alpha u (\varphi_{t+1})  \right|^s \right)  \right]  
\end{align*}
which is finite because $f \in {E}^{\phi_s}_{\hat \xi}$ and $u \in E^{\phi_r}_\varphi$. It follows by Remark \ref{rmk:r3} that $\mb T :E^{\phi_s}_{\hat \xi} \to E^{\phi_s}_{\hat \xi}$. 

For continuity, fix $f\in E^{\phi_s}_{\hat \xi}$. Take $g \in E^{\phi_s}_{\hat \xi}$ with $0 < \|g\|_{\phi_s} \leq 2^{-1/s}(1 \wedge \frac{\vartheta}{\theta})$ and set $c = 2^{1/s} \|g\|_{\phi_s}$. Note
\[
 \mb T(f+g)(\hat \xi) - \mb T f(\hat \xi) = \beta \log \left( \mb E^{\Pi_{\xi}}_f \left[ \left. \mb E^{Q_\varphi}_f \left[ \left. e^{\frac{\theta}{\vartheta} g(\Xi(\hat \xi_t,\varphi_{t+1}))} \right| \xi_t, \hat \xi_t \right]^\frac{\vartheta}{\theta} \right| \hat \xi_t=\hat \xi \right] \right)\,. 
\]
By similar arguments to the above, we may deduce
\begin{align*}
 \mb E^{\mu} \left[ \exp \left( \left| \frac{\mb T(f + g)(\hat \xi_t) - \mb T f(\hat \xi_t)}{\beta c} \right|^s \right) \right] 
 & \leq \mb E^{\mu} \left[ \mb E^{\Pi_\xi}_f\! \left[ \left. \mb E^{Q_\varphi}_f \left[ \left.  \exp \left(  \left| \frac{1}{c} g(\Xi(\hat \xi_t,\varphi_{t+1}))  \right|^s \right)  \right| \xi_t, \hat \xi_t \right]  \right| \hat \xi_t \right]\right] \\ 
 & = \mb E^{\mu} \left[ \mb E^{\hat Q} \left[ \left. m_f(\xi_t,\hat \xi_t,\varphi_{t+1}) \exp \left(  \left| \frac{1}{c} g(\Xi(\hat \xi_t,\varphi_{t+1}))  \right|^s \right)   \right| \hat \xi_t \right]\right] \\
 & \leq \mb E^{\mu \otimes \hat Q}\! \left[ m_f(\xi_t, \hat \xi_t, \varphi_{t+1})^2 \right]^{1/2}  \mb E^{\mu} \left[ \exp(2 |g(\hat \xi_{t+1})/c|^s \right]^{1/2}  \\
 & \leq  \left( 2 \mb E^{\mu \otimes \hat Q}\! \left[ m_f (\xi_t,\hat \xi_t,\varphi_{t+1})^2 \right] \right)^{1/2}  \,,
\end{align*}
because $c = 2^{1/s} \|g\|_{\phi_s}$. The expectation on the right-hand side is finite  because $f \in E^{\phi_s}_{\hat \xi}$ and $u \in E^{\phi_r}_\varphi$. It follows by Lemma \ref{lem:pollard} that $\|\mb T(f + g) - \mb T f\|_{\phi_s} \to 0$ as $\| g \|_{\phi_s} \to 0$.

Finally, monotonicity follows from monotonicity of the exponential and logarithm functions and monotonicity of conditional expectations. Convexity follows by H\"older's inequality.
\end{proof}

\begin{lemma}\label{lem:D-bdd-learn}
Let condition (\ref{e:u-learn}) hold. Fix any $v \in E^{\phi_{r'}}_{\hat \xi}$ with $r' > 1$. Then: for each $s \geq 1$, $\mb D_v$ is a continuous linear operator on $E^{\phi_s}_{\hat \xi}$ with $\rho(\mb D_v;E^{\phi_s}_{\hat \xi})  < 1$.
\end{lemma}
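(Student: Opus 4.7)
My plan is to invoke Lemma \ref{lem:com}, treating the pair $\hat X_t = (\varphi_t,\hat \xi_t)$ as the Markov state and $\hat Q$ as the transition kernel. By (\ref{e:subgradient-learning}), $\mb D_v$ fits the template $\beta \tilde{\mb E}$ of Lemma \ref{lem:com}, with $\tilde{\mb E}$ the $\hat Q$-conditional expectation twisted by the joint change of measure $m_v = m_v^{\Pi_\xi}\cdot m_v^{Q_\varphi}$ (the variable $\xi_t$ being integrated out against $\Pi_\xi$ at each step). Once the thin-tail condition (\ref{e:m-thin}) is verified for $m_v$ with exponent $\tilde r := r \wedge r' > 1$, both continuity on, and the strict spectral-radius bound $\rho(\mb D_v;E^{\phi_s}_{\hat \xi})<1$ over, every $E^{\phi_s}_{\hat \xi}$ with $s \geq 1$ follow at once from the conclusion of Lemma \ref{lem:com}.

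The verification of (\ref{e:m-thin}) will closely parallel that of Lemma \ref{lem:D-bdd}. Writing $Z := (\theta/\vartheta) v(\Xi(\hat \xi_t,\varphi_{t+1})) + \alpha u(\varphi_{t+1})$, one reads off from the definitions that
$\log m_v^{Q_\varphi} = Z - \log \mb E^{Q_\varphi}[e^Z\mid \xi_t,\hat \xi_t]$ and
$\log m_v^{\Pi_\xi} = (\vartheta/\theta)\log \mb E^{Q_\varphi}[e^Z\mid \xi_t,\hat \xi_t] - \log \mb E^{\Pi_\xi}[\mb E^{Q_\varphi}[e^Z\mid \xi_t,\hat \xi_t]^{\vartheta/\theta}\mid \hat \xi_t]$. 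Convexity of $x \mapsto x^{\tilde r}$ gives $(\log m_v\vee 0)^{\tilde r} \leq 2^{\tilde r-1}(|\log m_v^{Q_\varphi}|^{\tilde r} + |\log m_v^{\Pi_\xi}|^{\tilde r})$, and after a Cauchy--Schwarz split of the outer expectation it suffices to bound each of the two exponential factors individually for sufficiently small $c$ (Remark \ref{rmk:r3} then lifts the bound to all $c>0$). The crucial tool is the Jensen estimate $\mb E[\exp(|\log \mb E[e^W]/c|^{\tilde r})] \leq \mb E[\exp(|W/c|^{\tilde r})]$ (valid for $c\in(0,1]$ and $\tilde r\geq 1$, exactly as used in the proof of Lemma \ref{lem:D-bdd}): applied to the $Q_\varphi$-log-normalizer for $m_v^{Q_\varphi}$, and then a second time to the $\Pi_\xi$-log-normalizer for $m_v^{\Pi_\xi}$ (with the $(\vartheta/\theta)$ exponent absorbed into a rescaling of $c$), it collapses each factor into a multiple of $\mb E^{\mu \otimes \hat Q}[\exp(C|Z/c|^{\tilde r})]$ for a constant $C$ depending on $\vartheta/\theta$. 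This last quantity is finite because $v \in E^{\phi_{r'}}_{\hat \xi}$, $u \in E^{\phi_r}_\varphi$ by (\ref{e:u-learn}), and $\hat X$ is $\hat Q$-stationary, so that $Z \in E^{\phi_{\tilde r}}$ (with respect to $\mu \otimes \hat Q$).

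The main obstacle I anticipate is the nested $\log$--$\mb E$--power-$(\vartheta/\theta)$--$\mb E$--$\log$ structure in $m_v^{\Pi_\xi}$ when $\vartheta \neq \theta$: one must chain the Jensen log-expectation inequality twice (first against $\Pi_\xi$, then against $Q_\varphi$) while carefully tracking how the $(\vartheta/\theta)$ factor passes through so that the final exponential still takes the form $\exp(C|Z/c|^{\tilde r})$ with a constant $C$ one can absorb by shrinking $c$. The strict inequality $\tilde r > 1$ is what provides enough slack for this absorption, and it is precisely what forces us to assume $v \in E^{\phi_{r'}}_{\hat \xi}$ with $r'>1$ rather than merely $v \in E^{\phi_1}_{\hat \xi}$.
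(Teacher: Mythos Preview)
Your proposal is correct and follows essentially the same route as the paper: reduce to Lemma \ref{lem:com} by viewing $\mb D_v$ as $\beta$ times a distorted conditional expectation on a stationary Markov state, then verify the thin-tail condition (\ref{e:m-thin}) on the change of measure via the Jensen log-expectation bounds, exactly as in Lemma \ref{lem:D-bdd}. The only cosmetic difference is that the paper works with $\hat\xi$ as the Markov state and explicitly replaces $m_v(\xi_t,\hat\xi_t,\varphi_{t+1})$ by its conditional expectation $\bar m_v(\hat\xi_t,\hat\xi_{t+1})$ given $(\hat\xi_t,\hat\xi_{t+1})$ before invoking Lemma \ref{lem:com}, whereas you carry the auxiliary $\xi_t$ along and integrate it out ``at each step''; once one notes (by the same convexity of $x\mapsto e^{|(\log x)/c|^{\tilde r}}$ you already use) that the thin-tail bound for $\bar m_v$ is dominated by the one for $m_v$, the two arguments are identical.
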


\begin{proof}[Proof of Lemma \ref{lem:D-bdd-learn}]
It suffices to verify the conditions of Lemma \ref{lem:com}. Note that the process $\hat \xi = \{\hat \xi_t\}_{t \in T}$ is a stationary Markov process (this follows from our maintained assumptions that learning is in a steady state and the conventional hidden Markov structure on $X$). By iterated expectations, we may rewrite the subgradient from (\ref{e:subgradient-learning}) as
\[
 \mb D_v f(\hat \xi)  = \beta \mb E^{\hat Q} \left[ \left. \bar m_v (\hat \xi_t,\hat \xi_{t+1}) f(\hat \xi_{t+1}) \right| \hat \xi_t=\hat \xi \right] 
\]
where $\bar m_v (\hat \xi_t, \hat \xi_{t+1})$ denotes the conditional expectation of $m_v (\xi_t,\hat \xi_t,\varphi_{t+1})$ given $\hat \xi_t,\hat \xi_{t+1}$ under $\hat Q$. 
The thin-tail condition on $m_v$ then follows by similar arguments to the proof of Lemma \ref{lem:D-bdd} for any $v \in E^{\phi_{r'}}_{\hat \xi}$ with $r' > 1$.
\end{proof}

\subsection{Proof for Section \ref{s:ez}}

\begin{proof}[Proof of Theorem \ref{t:ez-exist-1}]
In view of the discussion preceding Theorem \ref{t:ez-exist-1} and Lemma \ref{lem:T-prop-ez}, it suffices to show that $\bar v \in \tilde E^{\phi_r}$ and that $\mb T \bar v \leq \bar v$. By (\ref{e:eval}), convexity of $x \mapsto e^{|(\log x)/c|^r}$ for $c \in (0,1]$, and two applications of Jensen's inequality, for any $c \in (0,1]$ we have
\begin{align*}
 \mb E^{\tilde \mu} \left[ e^{|\bar v(X_t)/c|^r} \right]
 & = \mb E^{\tilde \mu} \left[ e^{\left|\log \left(  (1-\beta)\sum_{n=0}^\infty (\beta \lambda^{\frac{1}{\kappa}})^n \tilde{\mb E}^n(\iota^{-\frac{1}{\kappa}})(X_t) \right) /c\right|^r} \right] \\
 & \leq (1-\beta \lambda^\frac{1}{\kappa})\sum_{n=0}^\infty (\beta \lambda^{\frac{1}{\kappa}})^n \mb E^{\tilde \mu} \left[ \tilde{\mb E}^n \exp \left( \left|\log \left(  (1-\beta) (1-\beta \lambda^\frac{1}{\kappa})^{-1} (\iota(X_t))^{-\frac{1}{\kappa}} \right) /c \right|^r \right) \right] \\
 & = \mb E^{\tilde \mu} \left[  \exp \left( \left|\log \left(  (1-\beta) (1-\beta \lambda^\frac{1}{\kappa})^{-1} (\iota(X_t))^{-\frac{1}{\kappa}} \right) /c \right|^r \right) \right]  < \infty 
\end{align*} 
by condition (\ref{e:ez-phi-1}), with the final equality because $\tilde \mu$ is the stationary distribution corresponding to $\tilde{\mb E}$. It follows by Remark \ref{rmk:r3} that $\bar v \in E^{\phi_r}$.

To see that $\mb T \bar v \leq \bar v$, first note by Jensen's inequality that $\mb E[Z^\kappa]^{1/\kappa} \leq \mb E[Z]$ holds when $\kappa < 0$ for any random variable $Z$ that is (strictly) positive with probability $1$. Therefore,
\begin{align*}
 \mb T \bar v(x) & = \log \left( (1-\beta) \iota(x)^{-\frac{1}{\kappa}} + \beta \lambda^{\frac{1}{\kappa}} \tilde{\mb E}\left[ \left. \left(  (1-\beta)\sum_{n=0}^\infty (\beta \lambda^{\frac{1}{\kappa}})^n \tilde{\mb E}^n (\iota^{-\frac{1}{\kappa}})(X_{t+1}) \right)^{\kappa} \right| X_t = x \right]^{\frac{1}{\kappa}} \right) \\
 & \leq \log \left( (1-\beta) \iota(x)^{-\frac{1}{\kappa}} + \beta \lambda^{\frac{1}{\kappa}} \tilde{\mb E}\left[ \left. (1-\beta)\sum_{n=0}^\infty (\beta \lambda^{\frac{1}{\kappa}})^n \tilde{\mb E}^n (\iota^{-\frac{1}{\kappa}})(X_{t+1}) \right| X_t = x \right]\right) \\
 & = \log \left( (1-\beta) \iota(x)^{-\frac{1}{\kappa}} +  (1-\beta)\sum_{n=1}^\infty (\beta \lambda^{\frac{1}{\kappa}})^n \tilde{\mb E}^n (\iota^{-\frac{1}{\kappa}})(x) \right)  = \bar v(x) \,.
\end{align*}
Existence now follows by Proposition \ref{p:exun}(i).
\end{proof}

\begin{lemma}\label{lem:T-prop-ez}
Let condition (\ref{e:ez-phi-1}) hold. Then for any $\kappa \neq 0$, the operator $\mb T$ from (\ref{e:T-ez-1}) is a continuous, monotone operator on $\tilde E^{\phi_s}$ for each $1 \leq s \leq r$.
\end{lemma}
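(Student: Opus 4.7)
The plan is to adapt the strategy of Lemma~\ref{lem:T-prop} --- which uses convexity of $\psi_c(y) := \exp(|(\log y)/c|^s)$ on $(0,\infty)$ for suitably small $c$ together with Jensen's inequality to push $\exp(|\log\cdot|^s)$ through a conditional expectation --- to the present operator, whose new difficulty is that the conditional expectation sits inside a sum rather than alone beneath the outer logarithm. Monotonicity is immediate: for $f \leq g$, the composition of $y \mapsto e^{\kappa y}$, $\tilde{\mb E}[\,\cdot\,|X_t]$, and $y \mapsto y^{1/\kappa}$ is monotone regardless of the sign of $\kappa$ (two sign flips when $\kappa < 0$, none when $\kappa > 0$), so $\tilde{\mb E}[e^{\kappa f}|X_t]^{1/\kappa} \leq \tilde{\mb E}[e^{\kappa g}|X_t]^{1/\kappa}$, and monotonicity of the outer $\log$ finishes the argument.

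To establish the self-map property on $\tilde E^{\phi_s}$, the elementary inequality $|\log(a+b)| \leq \log 2 + |\log a| + |\log b|$ applied with $a = (1-\beta)\iota^{-1/\kappa}$ and $b = \beta\lambda^{1/\kappa}\tilde{\mb E}[e^{\kappa f(X_{t+1})}|X_t]^{1/\kappa}$ gives
\[
|\mb T f(x)| \leq C + \tfrac{1}{|\kappa|}|\log \iota(x)| + \tfrac{1}{|\kappa|}\bigl|\log \tilde{\mb E}[e^{\kappa f(X_{t+1})}|X_t = x]\bigr|
\]
for a constant $C$ depending only on $\beta,\lambda,\kappa$. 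The $|\log \iota|$ contribution lies in $\tilde E^{\phi_s}$ by~(\ref{e:ez-phi-1}) because $s \leq r$. For the last term, a case split on whether $\tilde{\mb E}[e^{\kappa f}|X_t]$ exceeds $1$, combined with Jensen's inequality for $1/x$, yields the universal bound $\bigl|\log \tilde{\mb E}[e^{\kappa f(X_{t+1})}|X_t=x]\bigr| \leq \log \tilde{\mb E}[e^{|\kappa||f(X_{t+1})|}|X_t=x]$, whose right-hand side is non-negative. I can then apply Jensen's inequality with the convex function $\psi_{c|\kappa|}$ to conclude
\[
\mb E^{\tilde \mu}\!\left[\exp\!\left(\bigl|\tfrac{1}{c|\kappa|}\log \tilde{\mb E}[e^{|\kappa||f(X_{t+1})|}|X_t]\bigr|^s\right)\right] \leq \mb E^{\tilde \mu}\!\left[\exp\!\left(|f(X_t)/c|^s\right)\right] < \infty
\]
for $f \in \tilde E^{\phi_s}$, using stationarity of $X$ under $\tilde{\mb E}$. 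Splitting $\exp(|\mb T f|^s/c^s)$ as a product via $|a+b+c|^s \leq 3^{s-1}(a^s+b^s+c^s)$, applying H\"older's inequality, and using Remark~\ref{rmk:r3} to promote the bound from small $c$ to all $c > 0$ then delivers $\mb T f \in \tilde E^{\phi_s}$.

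Continuity follows the same template. Because $(1-\beta)\iota^{-1/\kappa}$ appears unchanged in both numerator and denominator of $e^{\mb T(f+g)}/e^{\mb T f}$, that ratio is a convex combination of $1$ and $\tilde{\mb E}[e^{\kappa(f+g)}|X_t]^{1/\kappa}/\tilde{\mb E}[e^{\kappa f}|X_t]^{1/\kappa}$, so
\[
|\mb T(f+g)(x) - \mb T f(x)| \leq \tfrac{1}{|\kappa|}\bigl|\log \tilde{\mb E}_f[e^{\kappa g(X_{t+1})}|X_t = x]\bigr|,
\]
where $\tilde{\mb E}_f$ denotes the distortion of $\tilde{\mb E}$ by $m_f(X_t,X_{t+1}) = e^{\kappa f(X_{t+1})}/\tilde{\mb E}[e^{\kappa f}|X_t]$. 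Applying the Jensen-$\psi$ argument with $\tilde{\mb E}_f$ in place of $\tilde{\mb E}$ bounds the Orlicz norm by $\mb E^{\tilde \mu}[\tilde{\mb E}[m_f \exp(|g|^s/c^s)|X_t]]$, which via H\"older's inequality is dominated by $\|m_f\|_{L^p}^{1/p} \cdot \mb E^{\tilde \mu}[\exp(q|g|^s/c^s)]^{1/q}$ for dual indices $p,q$. Since $f \in \tilde E^{\phi_s} \subset \tilde E^{\phi_1}$ makes every exponential moment of $\kappa f$ finite, $\|m_f\|_{L^p}$ is controlled, and choosing $c$ proportional to $\|g\|_{\phi_s}$ drives the second factor to a constant, yielding $\|\mb T(f+g) - \mb T f\|_{\phi_s} \to 0$ as $\|g\|_{\phi_s} \to 0$.

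The hard part will be verifying convexity of $\psi_c(y) = \exp(|(\log y)/c|^s)$ on $[1,\infty)$: setting $u = \log y$ and $\Psi(u) = e^{(u/c)^s}$ gives $\psi_c''(y) = y^{-2}[\Psi''(u) - \Psi'(u)]$, so the requirement reduces to $\Psi''(u) \geq \Psi'(u)$ for $u \geq 0$, which direct differentiation shows holds precisely when $c$ lies below a threshold $c_s$ depending only on $s$. This threshold constrains each Jensen application to small $c$, but Remark~\ref{rmk:r3} permits restricting to arbitrarily small $c$ when verifying $\mb T f \in \tilde E^{\phi_s}$, so the argument goes through.
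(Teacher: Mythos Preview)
Your proof is correct and close in spirit to the paper's, sharing the central device: convexity of $y \mapsto \exp(|(\log y)/c|^s)$ for small $c$, Jensen's inequality to push this through conditional expectations, stationarity under $\tilde{\mb E}$, and Lemma~\ref{lem:pollard} with Remark~\ref{rmk:r3} to finish. The one substantive difference is how you handle the two-term sum beneath the outer logarithm. The paper observes directly that $e^{\mb T f} = (1-\beta)\,\iota^{-1/\kappa} + \beta\,\lambda^{1/\kappa}\tilde{\mb E}[e^{\kappa f}|X_t]^{1/\kappa}$ is already a convex combination with weights $(1-\beta)$ and $\beta$, so a single application of convexity of $\psi_c$ splits it cleanly into the $\iota$-piece and the $\tilde{\mb E}$-piece; this yields the self-map bound in one line and, for continuity, the convex-combination structure of $e^{\mb T(f+h)-\mb T f}$ is exploited the same way. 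You instead use the pointwise inequality $|\log(a+b)| \leq \log 2 + |\log a| + |\log b|$ followed by $|a+b+c|^s \leq 3^{s-1}(|a|^s + |b|^s + |c|^s)$ and H\"older to separate terms, and the companion bound $|\log \tilde{\mb E}[e^{\kappa f}|X_t]| \leq \log \tilde{\mb E}[e^{|\kappa||f|}|X_t]$ to force the argument into $[1,\infty)$ before applying $\psi_c$-convexity. Your route is slightly more elementary and has the minor advantage of only needing convexity of $\psi_c$ on $[1,\infty)$ rather than all of $(0,\infty)$; the paper's route is shorter and avoids the extra H\"older step and constants. For continuity the two arguments essentially coincide: your pointwise bound $|\mb T(f+g) - \mb T f| \leq |\kappa|^{-1}|\log \tilde{\mb E}_f[e^{\kappa g}|X_t]|$ is exactly what the paper extracts from the convex-combination representation, and both finish by Cauchy--Schwarz against $m_f$ and Lemma~\ref{lem:pollard}.
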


\begin{proof}[Proof of Lemma \ref{lem:T-prop-ez}]
Fix any $s \in [1,r]$. We first show that $\mb E^\mu[ e^{|\mb Tf(X_t)/c|^s} ]<\infty $ holds for any $f \in \tilde E^{\phi_s}$ and $c$ sufficiently small.  By convexity of $x \mapsto e^{|(\log x)/c|^s}$ for $c \in (0,1]$ and two applications of Jensen's inequality and iterated expectations, for any $c \in (0, 1 \wedge |\kappa|^{-1}]$ we obtain
\begin{align*}
 \mb E^{\tilde \mu} \left[ e^{|\mb Tf(X_t)/c|^s} \right]
 & = \mb E^{\tilde \mu} \left[ \exp \left( \left| \log \left( (1-\beta) \iota(X_t)^{-\frac{1}{\kappa}} + \beta \lambda^{\frac{1}{\kappa}} \tilde{\mb E}[ e^{\kappa f(X_{t+1})} | X_t]^{\frac{1}{\kappa}} \right)  /c \right|^s \right) \right] \\
 & \leq \mb E^{\tilde \mu} \left[ (1-\beta) e^{ \left| \log  \iota(X_t) / (\kappa c) \right|^s }  + \beta \exp \left( \left| \log \left(  \tilde{\mb E}[ \lambda e^{\kappa f(X_{t+1})} | X_t ] \right)  /(\kappa c) \right|^s \right) \right] \\
 & \leq \mb E^{\tilde \mu} \left[ (1-\beta) e^{ \left| \log  \iota(X_t) / (\kappa c) \right|^s } + \beta \tilde{\mb E}\left[ \left. \exp \left( \left| \log \left(   \lambda e^{\kappa f(X_{t+1})} \right)  /(\kappa c)  \right|^s  \right) \right| X_t \right]\right] \\
 & = (1-\beta) \mb E^{\tilde \mu} \left[  e^{ \left| \log  \iota(X_t) / (\kappa c) \right|^s } \right] + \beta \mb E^{\tilde \mu} \left[ e^{ \left| (\log \lambda) / (\kappa c) + f(X_t)  /c  \right|^s } \right] \,,
 \end{align*}
where the right-hand side is finite under condition (\ref{e:ez-phi-1}), and the final equality is because $\tilde \mu$ is the stationary distribution under $\tilde{\mb E}$. It follows by Remark \ref{rmk:r3} that $\mb T : \tilde E^{\phi_s} \to \tilde E^{\phi_s}$.

For continuity, fix $f \in \tilde E^{\phi_s}$ and take any $h \in \tilde E^{\phi_s}$ with $\|h\|_{\phi_s}$ (with the norm defined relative to the measure $\tilde \mu$) sufficiently small in a sense we make precise below. Then 
\[
 \mb T(f+h)(x) - \mb Tf(x)
 = \log \left\{ \frac{ (1-\beta) \iota(x)^{-\frac{1}{\kappa}} + \beta \lambda^{\frac{1}{\kappa}} w(x) \tilde{\mb E}_f[ e^{\kappa h(X_{t+1})} | X_t = x]^{\frac{1}{\kappa}} }{ (1-\beta) \iota(x)^{-\frac{1}{\kappa}} + \beta \lambda^{\frac{1}{\kappa}} w(x) }  \right\} 
\]
where $w(x) = \tilde{\mb E}\left[ \left. e^{\kappa f(X_{t+1}) }  \right| X_t = x \right]^{1/\kappa}$ and $\tilde{\mb E}_f$ denotes the distorted conditional expectation operator $\tilde{\mb E}_f g(x) := \tilde{\mb E}[ m_f(X_t,X_{t+1}) g(X_{t+1}) | X_t = x]$ where
\[
 m_f(X_t,X_{t+1}) = \frac{e^{ \kappa  f(X_{t+1}) }}{\tilde{\mb E}[ e^{ \kappa  f(X_{t+1})  } | X_t ]} \,.
\]
Take any $c \in (0,1 \wedge |\kappa|^{-1}]$. By convexity of $x \mapsto e^{|(\log x)/c|^s}$ for $c \in (0,1]$, two applications of Jensen's inequality, and the Cauchy--Schwarz inequality, we obtain
\begin{align*}
 & \mb E^{\tilde \mu} \left[ e^{|(\mb T(f+h)(X_t) - \mb Tf(X_t))/c|^s} \right] \\
 & = \mb E^{\tilde \mu} \left[ \exp \left( \left| \frac{1}{c} \log \left\{ \frac{(1-\beta)\iota(X_t)^{-\frac{1}{\kappa}} + \beta \lambda^{\frac{1}{\kappa}} w(X_t) \tilde{\mb E}_f \left[ \left. e^{ \kappa h(X_{t+1}) }  \right| X_t \right]^\frac{1}{\kappa} }{ (1-\beta) \iota(X_t)^{-\frac{1}{\kappa}} + \beta \lambda^{\frac{1}{\kappa}} w(X_t) }  \right\} \right|^s \right) \right] \\
 & \leq \mb E^{\tilde \mu} \left[ \frac{(1-\beta) \iota(X_t)^{-\frac{1}{\kappa}} + \beta \lambda^{\frac{1}{\kappa}} w(X_t) e^{ \left| \frac{1}{c\kappa} \log \tilde{\mb E}_f \left[ \left. e^{\kappa h(X_{t+1}) }  \right| X_t  \right] \right|^s } }{ (1-\beta) \iota(X_t)^{-\frac{1}{\kappa}} + \beta \lambda^{\frac{1}{\kappa}} w(X_t) }  \right] \\
 & \leq \mb E^{\tilde \mu} \left[ e^{ \left| \frac{1}{c\kappa} \log \tilde{\mb E}_f \left[ \left. e^{\kappa h(X_{t+1}) }  \right| X_t  \right] \right|^s } \right] \\
 & \leq \mb E^{\tilde \mu} \left[ \tilde{\mb E}_f \left[ \left.   e^{ \left|  h(X_{t+1}) /c  \right|^s } \right| X_t \right]  \right] \\
 & \leq \mb E^{\tilde \mu} \left[  e^{4|\kappa f(X_t)|} \right]^\frac{1}{2}  \mb E^{\tilde \mu} \left[  e^{ 2\left|   h(X_t)  /c\right|^s }   \right]^\frac{1}{2}\,.
\end{align*}
For $h \in E^{\phi_s}$ with $\|h\|_{\phi_s} \leq 2^{-1/s} (1 \wedge |\kappa|^{-1})$, setting $c = 2^{1/s}\|h\|_{\phi_s}$ we therefore have
\[
 \mb E^\mu\left[ e^{|(\mb T(f+h)(X_t) - \mb Tf(X_t))/(2\|h\|_{\phi_s})|^s} \right] \leq \left( 2 \mb E^{\tilde \mu} \left[  e^{4|\kappa f(X_t)|} \right] \right)^\frac{1}{2}  \,.
\]
Continuity now follows by Lemma \ref{lem:pollard}.
Monotonicity of $\mb T$ follows form monotonicity of conditional expectations and monotonicity of the $\log$ and $\exp$ functions.
\end{proof}

\begin{proof}[Proof of Corollary \ref{c:ez-exist-2}]
Immediate from Theorem \ref{t:ez-exist-1} and Lemma \ref{lem:embed:0}.
\end{proof}

\let\oldbibliography\thebibliography
\renewcommand{\thebibliography}[1]{\oldbibliography{#1}
\setlength{\itemsep}{2pt}}

{
\singlespacing
\putbib
}

\end{bibunit}

\end{document}